\tikzset{
	->, 
	>={stealth},
	every state/.style = {thick},
	every edge/.append style = {every node/.style={font=\scriptsize}},
	node distance = {2cm},
	initial text = {\( \)}
}
\title{A (Co)Algebraic Framework for Ordered Processes\thanks{This work was partially supported by ERC grant Autoprobe (grant agreement 101002697).}}
\author{Todd Schmid\footnote{\url{todd.schmid.19@ucl.ac.uk}}}
\affil{UCL,	London, UK}
\date{}
\begin{document}

\theoremstyle{plain}
\newtheorem{theorem}{Theorem}[section]
\newtheorem{lemma}[theorem]{Lemma}
\newtheorem{corollary}[theorem]{Corollary}
\newtheorem{proposition}[theorem]{Proposition}

\theoremstyle{definition}
\newtheorem{remark}[theorem]{Remark}
\newtheorem{definition}[theorem]{Definition}
\newtheorem{example}[theorem]{Example}


\theoremstyle{definition}
\newtheorem{assumption}{Assumption}
\newtheorem*{assumption*}{Assumption}
\newtheorem{question}{Question}

\tikzset{every state/.style={minimum size=0pt}}
\tikzset{every node/.style={scale=1}}

\makeatletter
\newcommand*{\da@rightarrow}{\mathchar"0\hexnumber@\symAMSa 4B }
\newcommand*{\da@leftarrow}{\mathchar"0\hexnumber@\symAMSa 4C }
\newcommand*{\xdashrightarrow}[2][]{%
  \mathrel{%
    \mathpalette{\da@xarrow{#1}{#2}{}\da@rightarrow{\,}{}}{}%
  }%
}
\newcommand*{\da@xarrow}[7]{%
  \sbox0{$\ifx#7\scriptstyle\scriptscriptstyle\else\scriptstyle\fi#5#1#6\m@th$}%
  \sbox2{$\ifx#7\scriptstyle\scriptscriptstyle\else\scriptstyle\fi#5#2#6\m@th$}%
  \sbox4{$#7\dabar@\m@th$}%
  \dimen@=\wd0 %
  \ifdim\wd2 >\dimen@
    \dimen@=\wd2 %
  \fi
  \count@=2 %
  \def\da@bars{\dabar@\dabar@}%
  \@whiledim\count@\wd4<\dimen@\do{%
    \advance\count@\@ne
    \expandafter\def\expandafter\da@bars\expandafter{%
      \da@bars
      \dabar@ 
    }%
  }%
  \mathrel{#3}%
  \mathrel{%
    \mathop{\da@bars}\limits
    \ifx\\#1\\%
    \else
      _{\copy0}%
    \fi
    \ifx\\#2\\%
    \else
      ^{\copy2}%
    \fi
  }%
  \mathrel{#4}%
}
\makeatother

\newcommand 	\sle 		{\sqsubseteq} 		
\newcommand 	\sge 		{\sqsupseteq}		

\newcommand   \theory[1]{\mathsf{#1}}

\newcommand 	\At  		  {A} 		
\newcommand 	\Eq 		  {\theory{E}}		
\newcommand 	\Ie 		  {\theory{IN}} 		
\newcommand 	\Exp 		  {\mathsf{Exp}} 		
\newcommand 	\Expm 		{{\Exp}/{\equiv}}	
\newcommand 	\Expi 		{{\Exp/{\sle}}}		
\newcommand 	\SExp 		{\mathsf{SExp}} 	
\newcommand 	\PExp 		{\mathsf{PExp}} 	
\newcommand 	\acro[1]	{\(\mathsf{#1}\)} 				

\newcommand 	\Sets		  {\mathbf{Sets}}				
\newcommand 	\Met 		  {\mathbf{Met}} 				
\newcommand 	\Pos		  {\mathbf{Pos}} 				
\newcommand  	\dCPO 		{\mathbf{dCPO}}				
\newcommand 	\dcpo  		{{\textsf{dcpo}}}
\newcommand 	\Cat 		  {\mathbf{C}}
\newcommand 	\Alg 		  {\operatorname{Alg}}
\newcommand 	\Coalg 		{\operatorname{Coalg}}  	

\newcommand 	\Id 		{\operatorname{Id}}					
\renewcommand 	\P		{\mathcal{P}_{\omega}}		
\newcommand   \A      {\mathcal A_\omega}    
\newcommand 	\M			{\mathcal{M}_{\omega}}	
\newcommand 	\N 			{\mathbb N}				
\newcommand 	\D 			{\mathcal{D}_\omega}	
\newcommand 	\C			{\mathcal{C}}			
\newcommand 	\PA 		{\mathsf{PA}}

\newcommand 	\E 			{{\mathsf{Prc}}}
\newcommand 	\X 			{{\mathbb X}}
\newcommand 	\Y 			{{\mathbb Y}}

\newcommand 	\id 			  {\operatorname{id}}			
\newcommand 	\supp 			{\operatorname{supp}}		
\newcommand 	\conv 			{\operatorname{conv}}	
\newcommand 	\fv 			  {\operatorname{fv}}	
\newcommand 	\bv 			  {\operatorname{bv}}		
\newcommand 	\downset 		{\operatorname{\downarrow}}
\newcommand 	\Pfp 			  {\operatorname{Pfp}}
\newcommand 	\lfp 			  {\operatorname{lfp}}
\newcommand 	\fp 			  {\operatorname{fp}}
\newcommand 	\disc  			{\operatorname{Dis}}
\newcommand 	\forg[1] 		{\left|#1\right|}

\newcommand 	\sub 			{\mathbin{/}}
\newcommand 	\gdsub			{\mathbin{/\!/}}
\newcommand 	\unsub 			{\raisebox{-0.3em}{
	\begin{tikzpicture}
		\draw[thick, dotted, >={}] (0,-0.2) -- (1ex, 1.2ex);
	\end{tikzpicture}
}}

\newcommand 	\upper 		{\operatorname{\uparrow}}
\newcommand 	\lhdeq  	{\unlhd}

\newcommand 	\qm 		{\mathbin{?}}

\newcommand 	\tr[1] 		{\mathrel{\raisebox{-0.1em}{{\footnotesize\(\xrightarrow{#1}\)}}}}
\newcommand 	\out[1] 	{\mathrel{\raisebox{-0.1em}{{\footnotesize\(\xRightarrow{#1}\)}}}}
\newcommand 	\trd[1] 	{\mathrel{\raisebox{-0.1em}{{\footnotesize\(\xdashrightarrow{#1}{}\)}}}}

\newcommand 	\Act 		  {\mathsf{Act}} 		
\newcommand   \act      {\text{act}}
\newcommand 	\Var 		  {\mathsf{Var}}		
\newcommand   \var      {\text{var}}
\newcommand 	\code[1] 	{\mathsf{#1}}

\newcommand 	\sem[1] 	{\lceil \!\! \lfloor #1 \rceil \!\! \rfloor}
\newcommand 	\asem[1] 	{\lceil \!\! \lceil #1 \rceil \!\! \rceil}
\newcommand 	\ev 		{\operatorname{\mathsf{ev}}}
\newcommand 	\beh   		{{!}}
\newcommand 	\gd			{\mathsf{gd}} 				
\newcommand 	\skiptt  	{\code{1}} 			
\newcommand 	\failtt 	{\code{0}}
\newcommand 	\unit		{\underline{\code{u}}}
\newcommand 	\eff  		{\code{f}}

\newcommand   \bisim    {\mathrel{\underline{\leftrightarrow}}}

\newcommand 	\defn[1] 	{\emph{#1}}
\newcommand   \uaequiv {\mathrel{\dot\equiv}}

\maketitle

\begin{abstract}
A recently published paper~\cite{schmidrozowskisilvarot2022processes}
offers a (co)algebraic framework for studying processes with monadic branching structures and recursion operators. 
The framework captures Milner's algebra of regular behaviours~\cite{milner1984complete},
but fails to give an honest account of a closely related calculus of probabilistic processes~\cite{stark1999complete}
We capture Stark and Smolka's calculus by giving an alternative framework, aimed at studying a family of ordered process calculi with inequationally specified branching structures and recursion operators.
We observe that the recent probabilistic extension of guarded Kleene algebra with tests~\cite{rozowskikappekozenschmidsilva2022probgkat}
is a fragment of one of our calculi, along with other examples. 
We also compare the intrinsic order in our process calculi with the notion of similarity in coalgebra. 
\end{abstract}



\section{Introduction}
Process calculi are commonly used models of nondeterminism and concurrency~\cite{baeten2005history} and are known to be powerful tools for specifying and reasoning about stateful transition systems in general~\cite{silva2010kleene,myers2009coalgebraic}.
Frameworks that capture these process types can therefore (a) help us understand the connections between different calculi and (b) guide the development of new process calculi by exhibiting generalisable design principles for these calculi. 

In ~\cite{schmidrozowskisilvarot2022processes} it is shown that a large family of process calculi can be studied with the \emph{processes parametrised framework}, which provides a generic toolset for studying transition systems with algebraically specified branching strutures.
The processes parametrised framework is motivated by two important examples of process calculi: the \emph{algebra of regular behaviours} (or \acro{ARB}) introduced by Milner~\cite{milner1984complete} for specifying labelled transition systems and the \emph{algebra of probabilistic actions} (or \acro{APA})~\cite{schmidrozowskisilvarot2022processes}, based on the probabilistic calculus of Stark and Smolka~\cite{stark1999complete} for specifying probabilistic labelled transition systems. 
However, while the processes parametrised framework captures the algebra of regular behaviours, it fails to capture Stark and Smolka's semantics of recursion. 

In the processes paramterised framework, unguarded recursion calls are interpreted as deadlock.
For example, consider the following recursive specifications of a process called \(\code x\).
\begin{align}
	\code x &= \code{if (flip(heads))} ~\{ ~
			\code{run}~\code x; 
			\}~\code{ else } ~\{~ 
				\code{print }~ ``\texttt{tails}"; ~ \code{run}~\code x;
			\} \\
	\code x &= \code{if (flip(heads))} ~\{ ~
			\code{halt}; 
			\}~\code{ else } ~\{~ 
				\code{print }~ ``\texttt{tails}"; ~ \code{run}~\code x;
			\} \\
	\code x &= \code{print }~ ``\texttt{tails}"; \code{run}~\code x;
\end{align}
Specification (1) is essentially a loop with an unguarded recursion call: it flips a coin to determine whether to recurse or to print ``tails'' and start over. 
Specification (2) flips a coin to determine whether to enter deadlock or print the word ``tails'' and start over, and (3) simply repeatedly prints ``tails''.
All three specifications exist in \acro{APA} as well as in the processes parametrised framework.
Intuitively, specification (1) is the same as specification (3), as any repeated coin toss eventually sees a tails-up coin with probability \(1\). 
Only Stark and Smolka's calculus identifies specifications (1) and (3).
The semantics provided by the processes parametrised framework, on the other hand, replaces the unguarded recursive call in (1) with deadlock, producing specification (2). 

The difference between Stark and Smolka's calculus and the processes parametrised framework is that the former interprets unguarded recursion as a least fixed point.
In this paper, we propose the \emph{ordered processes parametrised framework}, which provides a generic toolset for specifying and reasoning about processes with inequationally specified branching that admits least fixed points for unguarded recursive specifications. 
It consists of a specification language for finite processes, a small-step semantics for executing specifications, a denotational semantics, and a sound and complete axiomatisation.

Notably, our framework also captures the \emph{algebra of control flows} introduced in~\cite{schmidrozowskisilvarot2022processes} for specifying propositional imperative programs~\cite{kozentseng2008,gkat,schmidkappekozensilva2021gkat}. 
Milner's algebra of regular behaviours does not find an exact instantiation, however, so the ordered processes parametrised framework is a variation on, not a generalisation of, the processes parametrised framework.  


One of the most fascinating features of Milner's \acro{ARB}~\cite{milner1984complete} is its fragment resembling the algebra of regular expressions.
Axiomatising this fragment has been the subject of intense research~\cite{baetenCG2006starheight,baetenCG2007characterisation,fokkink1997perpetual,fokkinkzantema1997termination,grabmayerfokkink2020complete,grabmayer2021coinductive}, and only recently has a proof been announced that a Salomaa-style axiomatisation is complete.
The processes parametrised framework~\cite{schmidrozowskisilvarot2022processes} proposes that a version of Milner's regular expressions can be found in many other process calculi.
Kleene algebra variants such as \acro{GKAT}~\cite{gkat,schmidkappekozensilva2021gkat} can also be obtained as star fragments. 
Process calculi covered by the ordered processes parametrised framework also have star fragments, and the recently introduced probabilistic programming language \acro{ProbGKAT}~\cite{rozowskikappekozenschmidsilva2022probgkat} can be obtained as a star fragment of one of our ordered process calculi. 

The processes parametrised framework proposes a uniform axiomatisation of star fragments, conjectured in~\cite{schmidrozowskisilvarot2022processes} to be complete.
While star fragments appear in our process calculi syntactically, it is unclear how to give a similarly uniform axiomatisation. 
The issue is that resolving unguarded recursive calls in a star expression (called \emph{loop tightening} in \acro{GKAT}~\cite{gkat}, and \(A_{11}\) in Salomaa's first axiomatisation of Kleene algebra~\cite{salomaa1966two}) can result in an expression that lies outside of the star fragment. 
We propose another fragment, the \emph{polystar} fragment, to overcome this technical difficulty and suggest a uniform axiomatisation. 

Our development of the ordered processes parametrised framework uses the theoretical toolboxes of universal algebra~\cite{cohn1981universal} and coalgebra~\cite{rutten2000universal,adamek2005introduction}.
We specifically rely on the theory of ordered algebras~\cite{bloomwright1983pvarieties,kurzvelebil2017quasivarieties} for the interpretation of least fixed points, and on the theory of ordered coalgebras~\cite{levy2011similarity,kapulkinkurzvelebil2012coalgposet} for transition systems with partially ordered states. 
Universal coalgebra proposes well-motivated and intuitive notions of behavioural equivalence for processes, and by adopting ordered coalgebraic models, it furthermore proposes a partial order on system behaviours that we incorporate into our semantics.

The partial order on system behaviours is related but not equivalent to \emph{similarity}~\cite{hughesjacobs2004simulations}.
We shed new light on an old result in automata theory, that two-way similarity and bisimilarity do not coincide for nondeterministic automata~\cite{bloomIM1995traced}, by viewing it as a property of the algebraic theory underlying the branching structure of nondeterministic processes. 

In short, we propose a framework for studying process calculi that specify ordered processes with inequationally characterised branching structures admitting recursion, instantiate the framework to several examples, and discuss its connection to existing tools for coalgebraic models of process algebra.
Our contributions are organised as follows: 
\begin{itemize}
	\item We introduce a family of ordered process types parametrised by inequational theories in \cref{sec:ordered processes} and \cref{sec:inequational theories} and order their behaviours. 
	\item We equip each process type whose inequational theory admits a form of recursion with a specification language in \cref{sec:a family of ordered process calculi} and connect the resulting algebra with an algebra of regular behaviours in \cref{sec:semantics}, where an adequacy result can be found.
	\item We give a complete axiomatisation of the behaviour order in \cref{sec:axiomatisation}.
	\item We introduce the naturally occurring \emph{polystar fragments} of our calculi in \cref{sec:star fragments} and propose sound axioms for their behavioural order.
	\item Finally, we tie the notion of similarity from coalgebra to the behaviour order in \cref{sec:simulations} and study two-way similarity using the inequational theories of \cref{sec:inequational theories}.
\end{itemize}
A discussion of related literature and further research directions can be found in \cref{sec:discussion}.

\section{Ordered Processes}\label{sec:ordered processes} 
We start with the introduction of a family of monotone labelled transition system types and a formal description of their behaviours.
Roughly, a monotone labelled transition system is a system whose states carry a partial order, with respect to which the transition structure is monotone.
We formalise these in the language of universal coalgebra~\cite{rutten2000universal,adamek2005introduction}, which captures transition types as endofunctors on a category.

We work in the category \(\Pos\) of posets and monotone maps.
A poset is a pair \((X, \le)\) consiting of an \emph{underlying set} \(X\) and a partial order \({\le} \subseteq X^2\).
We write \(h : (X, \le) \to (Y, \le)\) to denote that \(h : X \to Y\) is monotone. 
For a given endofunctor \(B\) on \(\Pos\), a (\emph{monotone \(B\)-})\emph{coalgebra} is a triple \((X, \le, \vartheta)\) consisting of a set \(X\) of \emph{states}, a partial order \(\le\) on \(X\) called the \emph{state order}, and a \emph{structure map} \(\vartheta : (X, \le) \to B(X, \le)\). 
A \emph{coalgebra homomorphism} \(h : (X, \le, \vartheta) \to (Y, \le, \delta)\) is a map \(h : (X, \le) \to (Y, \le)\) such that \(\delta \circ h = B(h) \circ \vartheta\). 

An \emph{embedding} is a monotone map that reflects the partial order, and a \emph{coalgebra embedding} is a colagebra homomorphism that is also an embedding.
A subset \(U \subseteq X\) is called a \emph{subcoalgebra} of \((X, \le, \vartheta)\) if the inclusion map is a coalgebra embedding \((U, \le, \vartheta_U) \hookrightarrow (X, \le, \vartheta)\) for some \(\vartheta_U : (U, \le) \to B(U, \le)\).
Subcoalgebras play an important role in \cref{sec:axiomatisation}.  

\begin{example}\label{eg:brzozowski}
	Letting \(\Id\) denote the identity, we define \(\Id^A : \Pos \to \Pos\) via the product order: given \((X, \le)\), set \({(X, \le)^A} = (X^A, \le^A)\), where \(X^A = \{f \mid f : A \to X\}\) and
	\(
		f \le^A g
	\)
	if and only if
	\( 
		{(\forall a \in A)}~f(a) \le g(a)
	\).
	Given \(h : {(X, \le)} \to (Y, \le)\), we define \(h^A : {(X^A, \le^A)} \to (Y^A, \le^A)\) by \(h^A (f) =  h \circ f\).
	Regular languages with their Brzozowski derivatives~\cite{brzozowski1964derivatives} constitute a monotone \(\Id^A\)-coalgebra \(D : (\text{Reg}, \subseteq) \to (\text{Reg}, \subseteq)^A\), where \(D(L)(a) = a^{-1}L\).
\end{example}

Let \(M : \Pos \to \Pos\) be an endofunctor called the \emph{branching type}, and fix an infinite discrete poset \((\Var, =)\) of \emph{return variables} and a poset \((\Act, \le_{\act})\) of \emph{actions} (or \emph{commands}).
We take return variables to be discretely ordered to ensure that substitution of return variables is monotone.

\begin{definition}	
	The \emph{\(M\)-branching system type} is \(B_M = M((\Var, =) + (\Act, \le_{\act})\times \Id)\).
\end{definition}
We refer to \(B_M\)-coalgebras as \emph{\(M\)-branching monotone labelled transition systems}, or \emph{monotone systems} for short. 
Intuitively, monotone systems are a variation on the labelled transition systems studied in~\cite{milner1984complete}, with extra decorations to denote branching structure. 

\begin{example}\label{eg:GKAT-automata}
	\emph{Monotone \acro{GKAT}-automata (with return variables)} are \(B_M\)-coalgebras for \(M = \Id_\bot^A := (\Id_\bot)^A\).
	Here, \((X, \le)_\bot = (X \sqcup \{\bot\}, \le_\bot)\) is defined so that \(x \le_\bot y\) if and only if either \(x \le y\) in \((X, \le)\) or \(x = \bot\), and \(A\) is a set of \emph{atomic tests}. 
	\acro{GKAT}-automata are the coalgebraic models for the while-program fragment of Kleene algebra with tests~\cite{kozensmith1996kat}, proposed first in~\cite{kozentseng2008} and then revisited in~\cite{gkat}.
	Below is a monotone \acro{GKAT}-automaton on a three-element poset, with blue edges denoting its Hasse diagram.
	\[\begin{tikzpicture}[baseline=(current bounding box.center)]
		\node[state] (0) {\(x_1\)};
		\node[state, above right of= 0] (1) {\(x_2\)};
		\node[state, below right of= 1] (2) {\(x_3\)};
		\node[left=2em of 0] (3) {\(\code v\)};
		\node[below=2em of 1] (4) {\(\code v\)};

		\draw (0) edge[-, blue] (1);
		\draw (2) edge[-, blue] (1);
		\draw (0) edge[->, bend left] node[left] {\(\alpha_1 \mid \code a\)} (1);
		\draw (2) edge[->, bend right] node[right] {\(\alpha_1 \mid \code a\)} (1);
		\draw[loop] (1) edge[->, loop above] node[above] {\(\alpha_1 \mid \code a\)} (1);
		
		\draw (0) edge[-implies, double, double distance=2pt] node[above] {\(\alpha_2\)} (3);
		\draw (1) edge[-implies, double, double distance=2pt] node[right] {\(\alpha_2\)} (4);
	\end{tikzpicture}
	\qquad\qquad
	\begin{aligned}
		X &= \{x_1, x_2, x_3\} \\
		x_i \tr{\alpha \mid \code a} x_j &\iff \vartheta(x_i)(\alpha) = (a, x_j) \\
		x_i \out{\alpha} \code v &\iff \vartheta(x_i)(\alpha) = \code v
	\end{aligned}\]
\end{example}

\begin{example}\label{eg:probabilistic processes}
	\emph{Monotone probabilistic labelled transition systems} are \(B_M\)-coalgebras with \(M = \D^\bot\), defined by \(\D^\bot(X, \le) = (\D^\bot X, \le^{hh})\) where 
	\[
		\D^\bot X = \left\{\theta : X \to [0,1]~\middle|~\begin{array}{c}|\{x \mid \theta(x) > 0\}| < \infty \\\sum_{x \in X}\theta(x) \le 1\end{array}\right\}
		\qquad 
		\D^\bot(h)(\theta)(y) = \sum_{h(x)= y} \theta(x)
	\]
	for any \(h : (X, \le) \to (Y, \le)\). 
	We call \(\le^{hh}\) the \emph{heavier-higher} order\footnote{It is one of the well-known \emph{stochastic} orders from statistics~\cite{shaked1994stochastic, mullerstoyan2002}.}, and define \(\theta_1 \le^{hh} \theta_2\) if and only if for any upwards-closed subset \(U \subseteq X\) (a set is upwards-closed if \(x \in U\) and \(x \le y\) implies \(y \in U\)), \(\sum_{x \in U}\theta_1(x) \le \sum_{x \in U}\theta_2(x)\). 
	Probabilistic processes appear in~\cite{stark1999complete,schmidrozowskisilvarot2022processes}. 
\end{example}

\begin{example}\label{eg:labelled transition systems}
	\emph{Monotone LTSs} are \(B_M\)-coalgebras for the \emph{downset functor} \(M = \A\), defined
	\[
		\A(X, \le) = (\{\downset \{x_1, \dots, x_n\} \mid x_1, \dots, x_n \in X\}, \subseteq)
		\qquad 
		\A(h)(U) = \downset h[U]
	\]
	where \(\downset U = \{z \mid (\exists x \in U)~z \le x\}\) for any \(U \subseteq X\), and \(h : (X, y) \to (Y, \le)\).  
\end{example}

\begin{remark}
	The transition type for ordered LTSs might look different than expected:
	in~\cite{milner1984complete}, for example, LTSs have \(M = \P\) as their branching structure~\cite{schmidrozowskisilvarot2022processes}. 
	Unlike \cref{eg:GKAT-automata} and \cref{eg:probabilistic processes}, \(\A\) is not a lifting\footnote{An endofunctor \(M\) on \(\Pos\) is a \emph{lifting} of \(N : \Sets \to \Sets\) if \(M(X, \le) = (N, \le)\) for any poset \((X, \le)\).} of the branching structure of its ordinary counterpart.
	Despite this, \({\A (X, =)} = {(\P X, \subseteq)}\), so discrete ordered LTSs and ordinary LTSs coincide when \(\le_{\act}\) is a discrete partial order.
\end{remark}

We use the phrases \emph{system}, \emph{monotone system}, \emph{labelled transition system}, and \emph{\(B_M\)-coalgebra} interchangeably.
The word \emph{process} informally refers to a system that has an initial state. 

\subsection*{The Behaviour Order}\label{sub:the behaviour order} 
Following~\cite{rutten2000universal}, the behaviours of coalgebras are compared with coalgebra homomorphisms. 
A state \(x_1\) of \((X_1, \le, \vartheta_1)\) is \emph{behaviourally below} another state \(x_2\) of \((X_2, \le, \vartheta_2)\), written \(x_1 \le_\beh x_2\), if there is a third coalgebra \((Y, \le, \delta)\) and coalgebra homomorphisms \(h_i : {(X_i, \le, \beta_i)} \to {(Y, \le, \delta)}\), \(i=1,2\), such that \(h_1(x_1) \le h_2(x_2)\).
Similarly, the state \(x_1\) is \emph{behaviourally equivalent} to \(x_2\), denoted \(x_1 =_\beh x_2\), if there are \(h_1,h_2\) such that \(h_1(x_1) = h_2(x_2)\).

\begin{restatable}{lemma}{behequivisanequiv}
	Let \(x_i\) be a state of \((X_i, \le, \vartheta_i)\) for \(i=1,2\). 
	Then \(x_1 =_\beh x_2\) if and only if both \(x_1 \le_\beh x_2\) and \(x_1 \le_\beh x_2\). 
\end{restatable}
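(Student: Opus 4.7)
The forward direction is immediate: if $x_1 =_\beh x_2$ is witnessed by coalgebra homomorphisms $h_i : (X_i, \le, \vartheta_i) \to (Y, \le, \delta)$ with $h_1(x_1) = h_2(x_2)$, then the very same cospan witnesses both $x_1 \le_\beh x_2$ and $x_2 \le_\beh x_1$ by reflexivity of $\le$ in $Y$. So the real content lies in the reverse implication, and the plan is to fuse two witnessing cospans into one common cospan, after which antisymmetry of the state order closes the argument.

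Concretely, suppose $x_1 \le_\beh x_2$ is witnessed by $h_i : X_i \to Y$ with $h_1(x_1) \le h_2(x_2)$, and $x_2 \le_\beh x_1$ is witnessed by $k_i : X_i \to Z$ with $k_2(x_2) \le k_1(x_1)$. I would construct a single target coalgebra $(W, \le, \varepsilon)$ together with coalgebra homomorphisms $f : Y \to W$ and $g : Z \to W$ satisfying $f \circ h_i = g \circ k_i$ for $i = 1,2$; this is exactly the pushout in $\Coalg(B_M)$ of the span $Y \leftarrow X_1 + X_2 \rightarrow Z$ induced by the $h_i$ and $k_i$. Writing $m_i := f \circ h_i = g \circ k_i$, monotonicity of $f$ and $g$ yields
\[
m_1(x_1) = f(h_1(x_1)) \le f(h_2(x_2)) = m_2(x_2) = g(k_2(x_2)) \le g(k_1(x_1)) = m_1(x_1),
\]
and antisymmetry of $\le$ on $W$ forces $m_1(x_1) = m_2(x_2)$, giving $x_1 =_\beh x_2$.

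The main obstacle is thus justifying the existence of the required pushout in $\Coalg(B_M)$. Rather than impose an abstract colimit-preservation hypothesis on $B_M$, I would build $W$ by hand: form the coproduct $Y + Z$ in $\Pos$ and quotient by the smallest partial-order-compatible congruence identifying $h_i(x)$ with $k_i(x)$ for every $x \in X_i$. The structure maps $\delta$ and $\varepsilon$ on $Y$ and $Z$ induce a map on the quotient provided the quotient projection is jointly coalgebraic, which reduces to checking that $B_M$ sends the underlying $\Pos$-quotient to a quotient; for the branching types $M \in \{\Id_\bot^A, \D^\bot, \A\}$ from the examples this is a direct verification from the explicit descriptions of $M$ on morphisms. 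Alternatively, since the conclusion is only needed pointwise at $(x_1, x_2)$, one can restrict to the subcoalgebras of $X_1$ and $X_2$ generated by $x_1$ and $x_2$ (using the subcoalgebra structure discussed after \cref{eg:brzozowski}) and work inside a small enough fragment for the elementary quotient construction to go through.
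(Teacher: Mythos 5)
Your proof is correct and takes essentially the same route as the paper: both arguments fuse the two witnessing cospans by a pushout of coalgebras and then conclude from monotonicity of the pushout legs together with antisymmetry of the order on the pushout object. The only divergence is the effort you spend justifying the pushout's existence; this is automatic, since the forgetful functor \(\Coalg(B_M) \to \Pos\) creates all colimits and \(\Pos\) is cocomplete, so no hypothesis on \(B_M\) and no case-by-case verification over the example branching types is required.
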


Every branching structure we consider in this paper admits a \emph{final} monotone system\footnote{We only deal with the case in which \(B_M\) is finitary, which guarantees the existence of final coalgebras~\cite{barr1993terminal,adamekkoubek1995greatest}.} \((Z, \le_\beh, \zeta)\), a unique (up to isomorphism) \(B_M\)-coalgebra such that every \((X, \le, \vartheta)\) admits a unique coalgebra homomorphism \(\beh_\vartheta : (X, \le, \vartheta) \to (Z, \le_\beh, \zeta)\).
States of \((Z, \le_\beh, \zeta)\) are called \emph{behaviours} and the map \(\beh_\vartheta\) is often called \emph{unrolling}, because of the way it is constructed for automata~\cite{worrell1999terminal,rutten1998coinduction}. 
Unrolling allows us to compare behavioural inequivalence directly, ie. \(x_1 \le_\beh x_2\) if and only if \(\beh_{\vartheta_1}(x_1) \le_\beh \beh_{\vartheta_2}(x_2)\) in \((X_1, \le, \vartheta_1)\) \((X_2, \le, \vartheta_2)\)~\cite{kapulkinkurzvelebil2012coalgposet}.

\section{Inequational Theories}\label{sec:inequational theories} 
In this section, we give an overview of the inequational theories we use to characterise monotone branching structures.

System types covered by the processes parametrised framework are parametrised by equational algebraic theories.
This is motivated by the observation that many branching structures for \emph{ordinary} systems, coalgebras for functors of the form \(N(\Var + \Act \times \Id)\) on \(\Sets\) (see \cref{sec:simulations}), are instances of free algebra constructions for algebraic theories axiomatising term equivalences. 
For example, the branching structure of LTSs is \(\P\), which constructs the free semilattice with bottom in \(\Sets\). 
Similarly, branching structures for monotone system types are free algebra constructons for algebraic theories axiomatising term inequivalences.

A (\emph{finitary}) \emph{inequational theory} is a pair \((S, \Ie)\) consisting of a \emph{signature}, a polynomial functor of the form \(S = \coprod_{n \in \N} (I_n, \le_n) \times \Id^n\) on \(\Pos\), and a set \(\Ie\) of inequations for terms in the signature \(S\)~\cite{bloomwright1983pvarieties,kurzvelebil2017quasivarieties}.
An \emph{\(S\)-algebra} is a triple \((X, \le, \alpha)\) consisting of a poset \((X, \le)\) and a monotone \emph{operation map} \(\alpha : S(X, \le) \to (X, \le)\).
A map \(h : (X, \le) \to (Y, \le)\) between \(S\)-algebras \((X, \le, \alpha)\) and \((Y, \le, \gamma)\) is an \emph{\(S\)-algebra homomorphism} if \(h \circ \alpha = \gamma\circ S(h)\). 

\begin{remark}\label{rem:peculiar operation order}
	One peculiarity of this depiction of inequational theories allows operations to be partially ordered and forces operation maps to be monotone with respect to this order. 
	This is a generalisation due to Kurz and Velebil~\cite{kurzvelebil2017quasivarieties} of Bloom and Wright's inequational theories~\cite{bloomwright1983pvarieties}, where every \(I_n\) is discrete. 
\end{remark}

Formally, \(\Ie\) is a set of pairs of \emph{\(S\)-terms}, elements of a poset \(S^*(V, \le_V) = (S^*V, \le_S)\) whose underlying set is freely generated by \(V\) and the operations of \(S\).
The partial order \(\le_S\) is inferred from the \emph{\(S\)-term ordering rules} below:
\[ 
	\inferrule{(\forall i)~p_i \le_S q_i}{\sigma(\vec p) \le_S \sigma(\vec q)}
	\qquad 
	\inferrule{\sigma_1 \le_n \sigma_2}{\sigma_1(\vec p) \le_S \sigma_2(\vec p)} \qquad\qquad (\vec p = (p_1, \dots, p_n),~ \vec q = (q_1, \dots, q_n))
\]
Given a tuple \(\vec x = (x_1, \dots, x_n) \in X^n\), we write \(p = p(\vec x)\) to denote that the free variables of \(p\) are contained in \(\{x_1, \dots, x_n\}\), and for any \(\vec q = (q_1, \dots, q_n)\) we write \(p(\vec q)\) to denote the \(S\)-term obtained by replacing every \(x_i\) in \(p\) with \(q_i\) for each \(i \le n\). 
An \(S\)-algebra \((X, \le, \alpha)\) \emph{satisfies} \(\Ie\) (is an \emph{\((S, \Ie)\)-algebra}) if \(p(\vec x) \le q(\vec x)\) for any \((p(\vec v), q(\vec v)) \in \Ie\) and any \(\vec x \in X^n\).

Given \(p,q \in S^*X\), write \(p \sqsubseteq_\Ie q\) if \(p \le q\) can be proven from the \(S\)-term ordering rules and \(\Ie\), and write \(p \equiv_\Ie q\) if \(p \sqsubseteq_\Ie q \sqsubseteq_\Ie p\). 

\begin{definition}\label{def:free SIalg}
	The \emph{free} \((S, \Ie)\)-algebra \emph{on \((X, \le)\)} is the poset \(M(X, \le)\) of \(\equiv_\Ie\)-equivalence classes \([p]_\Ie\) of \(S\)-terms ordered by \(\sqsubseteq_\Ie\) and equipped with \(\rho : {SM(X, \le)} \to {M(X, \le)}\) given by \(\rho(\sigma, [p_1]_\Ie, \dots, [p_n]_\Ie) = [\sigma(\vec p)]_\Ie\). 
\end{definition}

The free \((S, \Ie)\)-algebra construction can easily be made into a functor \(M : \Pos \to \Pos\).
Given \(h : (X,\le) \to (Y, \le)\), define \(M(h) : M(X, \le) \to M(Y, \le)\) by setting\footnote{We often drop the notation \([~~]_\Ie\) and write \(p(\vec x)\) instead of \([p(\vec x)]_\Ie\).}
\(
	M(h)(x) = h(x)
\)
and
\(
	M(h)(\sigma(\vec p)) = \rho(\sigma, M(h)(p_1), \dots, M(h)(p_n))
\).
We also write \(\eta(x) = [x]_\Ie\) for each \(x \in X\).
Abstractly, the triple \((M, \eta, \rho)\) is an \emph{\((S,\Ie)\)-presented monad}, meaning that it satisfies
\[
	(*) \quad \parbox{0.95\textwidth}{
		For any \((S,\Ie)\)-algebra \((Y, \le, \alpha)\) and any \(h : (X, \le) \to (Y, \le)\), there is a unique \(S\)-algebra homomorphism \(h^\alpha : (M(X, \le), \rho) \to (Y, \le, \alpha)\) such that \(h^\alpha \circ \eta = h\).
	}
\]
There is at most one \((S, \Ie)\)-presented monad up to \(S\)-algebra isomorphism, so we will often say that \((M, \eta, \rho)\) is \emph{the} \((S, \Ie)\)-presented monad. 
In some situations, \(\eta\) and \(\rho\) are implicit, and then we simply say that \((S, \Ie)\) presents \(M\). 

\begin{assumption*}
	From here on, whenever \(v_1 \sqsubseteq_{\Ie} v_2\) with \(v_1,v_2 \in V\), we also have \(v_1\le_{V}v_2\). 
\end{assumption*}

This is equivalent to the assumption that \(\eta\) is an embedding, but it is also equivalent to the assumption that \(M(X, \le)\) can have more than one element. 
Our motivating examples are all obtained from equational theories.


\begin{example}\label{eg:guarded algebra}
	\emph{Ordered guarded algebra} is obtained from the equational theory of guarded semilattices in~\cite{schmidrozowskisilvarot2022processes}.
	For a fixed set of \emph{atomic tests} \(A\), it consists of the signature \(S_{\mathsf{ga}} = 1 + (2^\At, =) \times \Id^2\) and the inequational theory \(\theory{GA}\) made up of \(0 \le x\) and the equations
	\[
	x \qm_b x = x
	\qquad 
	x \qm_{\At} y = x
	\qquad 
	x \qm_b y = y \qm_{\bar b} x
	\qquad 
	(x \qm_b y) \qm_c z = x \qm_{bc} (y \qm_c z)
	\]
	where \(bc = b\cap c\) and \(\bar b = A \setminus b\). 
	Ordered guarded algebra presents \((\Id_\bot^\At, \lambda \alpha.(-), \qm)\), where 
	\[
		\lambda\alpha.x(\alpha) = x
		\qquad\qquad
		\qm(b, f_1, f_2)(\alpha) = (f_1 \qm_b f_2)(\alpha) = \begin{cases}
			f_1(\alpha) & \alpha \in b \\
			f_2(\alpha) & \alpha \notin b
		\end{cases}	
		\qquad 
		\qquad 
		(\alpha \in A)
	\]
	The functor part of this monad is the branching structure of ordered \acro{GKAT}-automata from \cref{eg:GKAT-automata}. 
	Guarded algebra originates in the work of McCarthy~\cite{mccarthy1961basis}, but has been reformed and rediscovered and renamed multiple times~\cite{bloomT1983ifthenelse,bloomE1988varieties,manes1991ifthenelse,gkat}.
\end{example}

\begin{example}\label{eg:convex algebra}
	\emph{Ordered convex algebra} is obtained from the equational theory of convex algebra~\cite{swirszcz1973convexity,flood1981semiconvex}. 
	This consists of the signature \((S_{\mathsf{ca}}, \theory{CA})\), where \(S_{\mathsf{ca}} = 1 + ([0,1], =) \times \Id^2\) and \(\theory{CA}\) contains \(0 \le x\) and the equations 
	\[
		x \oplus_r x = x
		\qquad 
		x \oplus_1 y = x
		\qquad 
		x \oplus_r y = y \oplus_{\bar r} x
		\qquad 
		(x \oplus_r y) \oplus_s z = x \oplus_{rs} (y \oplus_{\frac{\bar r s}{1-rs}} z)
	\]
	where \(\bar r = 1 - r\). 
	Ordered convex algebra presents\footnote{See \cref{app:ordered convex algebra}.} \((\D^\bot, \delta, \oplus)\) from \cref{eg:probabilistic processes}, where 
	\[
		\delta_x(y) = \begin{cases}
			1 & x = y \\
			0 & x \neq y
		\end{cases}	
		\qquad\qquad
		\oplus(r, \theta_1, \theta_2)(y) = (\theta_1 \oplus_r \theta_2)(y) = r~\theta_1(y) + (1-r)~\theta_2(y)
	\]
	This presentation of convex algebra in \(\Sets\) goes back to \'Swirszcz~\cite{swirszcz1973convexity} and Flood~\cite{flood1981semiconvex}.\footnote{Although it was likely known to Stone~\cite{stone1949barycentric}.}
\end{example}

\begin{example}
	\emph{The theory of ordered semilattices (with bottom)} is \((S_{\mathsf{sl}}, \theory{SL})\), where \(S_{\mathsf{sl}} = 1 + \Id^2\) and \(\theory{SL}\) consists of \(0 \le x\) and 
	\[
		0 \vee x = x
		\qquad
		x \vee x = x
		\qquad 
		x \vee y = y \vee x 
		\qquad 
		(x \vee y) \vee z = x \vee (y \vee z)
	\]
	It is well-known that the theory of ordered semilattices presents \((\A, \downarrow, \cup)\), where 
	\[
		\downset x = \{y \mid y \le x\}
		\qquad \qquad 
		\cup (U_1, U_2) = U_1 \cup U_2
	\]
	It can be derived from \(\theory{SL}\) that \(x \le y\) if and only if \(x = x \vee y\), so ordered semilattices are simply semilattices in the sense of order theory~\cite{birkhoff1940lattice}.
\end{example}

\subsection*{Iterative Inequational Theories}\label{sub:iterative inequational theories} 
Inequational theories give syntactic descriptions of ordered branching structures.
Such a syntax enables recursive specifications of ordered processes to be written as partially ordered systems of equations. 
For example, the process in \cref{eg:GKAT-automata} can be written 
\begin{equation}\label{eq:system eg 2}
	x_1 = \code ax_2 \qm_{\alpha_1} \code v 
	\qquad 
	x_2 = \code ax_2 \qm_{\alpha_1} \code v 
	\qquad
	x_3 = \code ax_2 \qm_{\alpha_1} 0
	\qquad \qquad 
	x_1 < x_2 > x_3
\end{equation}
This is an example of a \emph{guarded} system of equations, due to the fact that state variables only ever appear behind action symbols on the right-hand sides. 
Intuitively, (\ref{eq:system eg 2}) can safely be replaced by the following \emph{unguarded} system 
\[
	x_1 = \code ax_2 \qm_{\alpha_1} \code v	
	\qquad 
	x_2 = x_1 
	\qquad 
	x_3 = x_1 \qm_{\alpha_1} 0
	\qquad \qquad 
	x_1 < x_2 > x_3
\]
By inspection, the syntax of unguarded systems is less restricted, as it allows us to reuse state variables where we might have given explicit syntactic descriptions of transitions. 
However, the added freedom of allowing unguarded recursion in our specifications forces us to be explicit about the meanings of specifications such as \(x_1 = x_1 \qm_{\alpha_1} \code v\) and \(x_1 = x_1\).

We generalise Stark and Smolka's approach to unguarded recursion~\cite{stark1999complete}, which views these specifications as computations of certain least fixed points. 
We call inequational theories with least fixed points \emph{iterative}, a term we borrow from~\cite{elgot1975monadic,bloomE1976iterative,milius2005thesis}. 

\begin{definition}\label{def:iterative}
	An inequational theory \((S, \Ie)\) presenting a monad \((M, \eta, \rho)\) is \emph{iterative} if
	\begin{enumerate}
		\item for any poset \((X, \le)\) and any \(p(x, \vec y) \in S^*X\), the set
		\(
			\Pfp_x(p) = \{q \in S^*X \mid p(q, \vec y) \sqsubseteq_\Ie q\}
		\)
		has a  \(\sqsubseteq_\Ie\)-least element \(\lfp_xp\), called the \emph{least \(x\)-fixed point} of \(p\). 
		
		\item for any \(h : (X, \le) \to (Y, \le)\) and \((V, \le)\) with \(v \in V\), the map \[
		M(\id + h) : M((V, \le) + (X, \le)) \to M((V, \le)+ (Y, \le))
		\] 
		satisfies \[
		M(\id + h)(\lfp_v p(\vec u, \vec x)) = \lfp_v M(\id + h)(p)(\vec u, h(\vec x))
		\]
		for any term \(
			p(v, \vec u, \vec x) \in {M((V, \le) + (X, \le))}
		\)
		with \(v, \vec u \in V^m\) and \(\vec x \in X^n\).
	\end{enumerate}
\end{definition}

Recall that we think of an \(S\)-term as a representation of a branching configuration and the variables of an \(S\)-term as representatives of the branches that make up the configuration. 
In the spirit of the Tarski-Knaster fixed point theorem, condition {\textsf{1}} guaratees that every variable in a branching configuration has a least substitution producing the original configuration.
This is useful for abstractly representing the redistribution of branching probabilities.

\begin{restatable}{lemma}{lfpishonest}
	For any \(p(x, \vec y) \in S^*X\), the element \(\lfp_xp\) is the least fixed point of \(q \mapsto p(q, \vec y)\).
\end{restatable}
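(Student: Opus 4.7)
The statement has two parts to verify: that \(\lfp_x p\) is in fact a fixed point of \(F : q \mapsto p(q, \vec y)\) (not merely a pre-fixed point), and that it is the \(\sqsubseteq_\Ie\)-least among all such fixed points. The plan is a standard Knaster–Tarski argument adapted to the syntactic setting, where the essential ingredient is the monotonicity of the substitution operation, which follows directly from the first \(S\)-term ordering rule.

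For leastness, I would simply observe that any fixed point of \(F\), i.e.\ any \(q\) with \(p(q, \vec y) \equiv_\Ie q\), is in particular a pre-fixed point, since \(p(q, \vec y) \sqsubseteq_\Ie q\) follows from \(\equiv_\Ie \subseteq \sqsubseteq_\Ie\). Therefore \(q \in \Pfp_x(p)\), and \(\lfp_x p \sqsubseteq_\Ie q\) holds by the defining property of \(\lfp_x p\) as the \(\sqsubseteq_\Ie\)-least element of \(\Pfp_x(p)\) granted by \cref{def:iterative}(1).

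For the fixed point property, by hypothesis \(\lfp_x p \in \Pfp_x(p)\), so
\[
    p(\lfp_x p, \vec y) \;\sqsubseteq_\Ie\; \lfp_x p.
\]
Applying \(F\) to both sides and using that the first \(S\)-term ordering rule makes substitution monotone in each variable (so \(q_1 \sqsubseteq_\Ie q_2\) implies \(p(q_1, \vec y) \sqsubseteq_\Ie p(q_2, \vec y)\)), we deduce
\[
    p(p(\lfp_x p, \vec y), \vec y) \;\sqsubseteq_\Ie\; p(\lfp_x p, \vec y).
\]
This exhibits \(p(\lfp_x p, \vec y)\) itself as an element of \(\Pfp_x(p)\), so minimality of \(\lfp_x p\) gives the reverse inequality \(\lfp_x p \sqsubseteq_\Ie p(\lfp_x p, \vec y)\). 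Together these two inequalities yield \(p(\lfp_x p, \vec y) \equiv_\Ie \lfp_x p\), as desired.

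The only step that is not completely routine is the monotonicity of substitution invoked above. I expect this will be immediate from the first of the two \(S\)-term ordering rules listed just before \cref{def:free SIalg}, applied inductively over the term structure of \(p\), but one needs to be a little careful to treat substitution as ranging over the quotient \(S^*X / {\equiv_\Ie}\) so that \(F\) is well-defined as a map on \(\sqsubseteq_\Ie\)-equivalence classes; this is standard and essentially bundled into the construction of \(M(X, \le)\) in \cref{def:free SIalg}.
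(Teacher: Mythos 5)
Your proof is correct and follows essentially the same route as the paper's: both show the fixed-point property via the standard Knaster--Tarski argument (the pre-fixed-point inequality plus monotonicity of \(q \mapsto p(q, \vec y)\) exhibits \(p(\lfp_x p, \vec y)\) as another pre-fixed point, forcing the reverse inequality), and both derive leastness from the observation that every fixed point is a pre-fixed point. The paper likewise dispatches monotonicity of substitution with a one-line appeal to the \(S\)-term ordering rules, so your slightly more careful remark about working in the quotient is a harmless elaboration rather than a divergence.
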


Condition {\textsf{2}} of \cref{def:iterative} says that irrelevant renamings of variables preserve the value of the least fixed point: this can be seen from the definition of \(M(\id + h)\), which has
\(
	M(\id + h)(p)(\vec u, h(\vec x)) = [p(\vec u, h(\vec x))]_{\Ie}
\).
Also note that every iterative inequational theory admits a constant \(0\) equivalent to \(\lfp_x x\), satisfying \(0 \sqsubseteq_\Ie p\) for all \(p \in S^*(X, \le)\). 

\begin{example}
	Ordered guarded algebra and the theory of ordered semilattices are both iterative. 
	In each theory, the least \(x\)-fixed point of the term \(p(x, \vec y)\) is simply \(p(0, \vec y)\). 
\end{example}

\begin{example}
	Ordered convex algebra is also an iterative inequational theory. 
	Given a subdistribution \(\theta\) on \((X, \le)\) of the form \(\theta = r~\delta_x + \theta'\), where \(\theta'(x) = 0\) and \(r < 1\), the least \(x\)-fixedpoint of \(\theta\) can be computed via conditioning: \(\lfp_x \theta = \theta'/(1-r)\).
	For \(r = 1\), the least \(x\)-fixedpoint is \(\delta_0\).
\end{example}

\section{A Family of Ordered Process Calculi}\label{sec:a family of ordered process calculi} 
Next, we introduce a parametrised family of calculi useful for specifying and reasoning about ordered processes. 
Each calculus is given by an iterative inequational theory \((S, \Ie)\), which determines the branching structure of the processes specified by terms of the language. 


For a given iterative inequational theory \((S, \Ie)\) that presents a monad \((M, \eta, \rho)\) on \(\Pos\), its poset of \emph{process terms} \((\Exp, \le_{\exp})\) has its underlying set given by 
\[
	\code v 
	\mid \sigma(e_1, \dots, e_n) 
	\mid \code ae 
	\mid \beta\code v~e 
	\mid \mu\code v~e
	\qquad\qquad 
	(\code v \in \Var, ~ \sigma \in I_n, ~ \code a \in \Act)
\]
and is ordered by the (\(\Sigma_M\)-)\emph{term ordering rules}, consisting of the \(S\)-term ordering rules and
\[
	\inferrule{\code a_1 \le_{\act} \code a_2}{\code a_1e \le_{\exp} \code a_2e} 
	\qquad 
	\inferrule{e_1 \le_{\exp} e_2}{\code ae_1 \le_{\exp} \code ae_2} 
	\qquad 
	\inferrule{e_1 \le_{\exp} e_2}{\beta\code v~e_1 \le_{\exp} \beta\code v~e_2} 
	\qquad 
	\inferrule{e_1 \le_{\exp} e_2}{\mu\code v~e_1 \le_{\exp} \mu\code v~e_2} 
\]
We say that a return variable \(\code v\) is \emph{bound} in a process term \(e\) if every instance of \(\code v\) appears within the scope of a \(\beta\code v\) or \(\mu\code v\), and otherwise we say that \(\code v\) is \emph{free} in \(e\). 
A return variable is said to be \emph{guarded} in \(e\) if it is either bound or appears within the scope of an \(\code a\). 

Abstractly, \((\Exp, \le_{\exp})\) carries the the \emph{initial} \(\Sigma_S\)-algebra \((\Exp, \le_{\exp}, \ev)\), where 
\[
	\Sigma_S = (\Var, =) + S + (\{\beta, \mu\}\times \Var, =) \times \Id 	
\]
and \(\ev : \Sigma_S(\Exp, \le_{\exp}) \to (\Exp, \le_{\exp})\) evaluates terms. 
This means that every \(\Sigma_S\)-algebra \((X, \le, \alpha)\) is reached by a unique algebra homomorphism \(\sem- : (\Exp, \le_{\exp}, \ev) \to (X, \le, \alpha)\). 

\subsection*{Small-step Semantics}\label{sub:small-step semantics} 
Process terms are specifications of processes. 
Intuitively, the return variable \(\code v\) denotes the process that immediately returns \(\code v\), \(\sigma(e_1, \dots, e_n)\) denotes the process that branches into the processes \(e_1\) through \(e_n\) with configuration \(\sigma\), and \(\code a e\) denotes the process that outputs \(\code a\) and moves on to the process \(e\).
In other words, the interpretations of the basic process term constructions are precisely those of the processes parametrised framework~\cite{schmidrozowskisilvarot2022processes}. 

The \emph{unguarded recursion operator} \(\beta\code v\), which does not appear in loc cit, computes the least fixed point with respect to unguarded instances of the return variable \(\code v\).
If we imagine a process specified by a process term as being rooted at that term, the operator \(\beta\code v\) leaves the state space untouched but reconfigures the outgoing transitions at the root.
This feeds into the interpretation of \(\mu \code v~e\), the \emph{guarded recursion operator} applied to \(e\), which denotes the process that runs \(\beta\code v~e\) until it encounters the return variable \(\code v\), at which point it repeats. 

Process terms are formally interpreted using a \emph{small-step semantics}, a monotonte system structure \((\Exp, \le_{\exp}, \epsilon)\) on process terms.
The structure map \(\epsilon\) is defined in \cref{fig:small-step}, although the semantics of the guarded recursion operator requires further explanation.
Given \(g \in \Exp\) and \(\code v \in \Var\), we define the \emph{guarded subtitution} operator \([g\gdsub{\code v}] : {B_M(\Exp, \le_{\exp})} \to B_M(\Exp, \le_{\exp})\) to be the unique \(S\)-algebra homomorphism such that 
\(
	\code u[g\gdsub{\code v}] = \code u
\)
and 
\( 
	\langle \code a, e\rangle[g\gdsub{\code v}] = \langle \code a, e[g\sub{\code v}]\rangle
\)
for any \(\code u \in \Var\), \(\code a \in \Act\), and \(e \in \Exp\), where \(e[g\sub{\code v}]\) is the process term obtained by substituting each free instance of \(\code v\) with \(g\). 

\begin{figure}[t!]
	\centering 
	\begin{mathpar}
		\epsilon(\code v) = \code v
		\and 
		\epsilon(\sigma(e_1, \dots, e_n))
		= \sigma(\epsilon(e_1), \dots, \epsilon(e_n))
		\and
		\epsilon(\code ae) = \langle \code a, e\rangle 
		\\
		\epsilon(\beta \code v~e) = \lfp_{\code v} \epsilon(e)
		\and
		\epsilon(\mu \code v~e) = \lfp_{\code v} \epsilon(e)[\mu\code v~e\gdsub \code v]
	\end{mathpar}
	\caption{The small-step semantics \(\epsilon : (\Exp, \le_{\exp}) \to B_M(\Exp, \le_{\exp})\) of process expressions.\label{fig:small-step}}
\end{figure}

\begin{example}
	Consider the process term \(e_1 = \mu\code u~\code a (\code a\code u \qm_b \code v)\) and let \(e_2 = \code ae_1 \qm_b \code v\). 
	The ordered \acro{GKAT}-automaton specified by \(e_1\) is the discrete automaton
	\[\begin{tikzpicture}
		\node[state] (0) {\(e_1\)};
		\node[state, right=2cm of 0] (1) {\(e_2\)};
		\node[right=0.5cm of 1] (2) {\(\code v\)};
		\draw (0) edge[bend right] node[above] {\(\code a\)} (1);
		\draw (1) edge[bend right] node[above] {\(b\mid \code a\)} (0);
		\draw (1) edge[-implies, double, double distance=2pt] node[above] {\(\bar b\)} (2);
	\end{tikzpicture}\] 
\end{example}

\section{Final/Initial Semantics}\label{sec:semantics} 
The small-step semantics of a process term allows us to visualise its execution an \(M\)-branching labelled transition system through state diagrams like the one in \cref{eg:GKAT-automata}. 
By unrolling, we are able to compare process terms for behavioural equivalence. 
For example, the terms \(\mu \code u~\code a \code u\) and \(\mu\code u~\code a \code a \code u\) both denote systems that endlessly repeat the action \(\code a\).
\[\begin{tikzpicture}
	\node[state] (0) {\(e_1\)};
	\draw[loop] (0) edge[->, loop right] node[right] {\(\code a\)} (0);
\end{tikzpicture}
\qquad\qquad
\begin{tikzpicture}
	\node[state] (0) {\(e_2\)};
	\node[state, right=2cm of 0] (1) {\(\code ae_2\)};
	\draw (0) edge[->] node[above] {\(\code a\)} (1);
	\draw[loop] (1) edge[->, loop right] node[right] {\(\code a\)} (1);
\end{tikzpicture}
\qquad\qquad
\begin{aligned}[b]
	e_1 &= \mu \code u~\code a \code u	\\
	e_2 &= \mu \code u~\code a\code a \code u	
\end{aligned}\]
The first system has one state and the second has two, but they unroll to the same behaviour. 
The  map \(\beh_\epsilon\) provides the \emph{final} or \emph{operational semantics} of process terms~\cite{turiplotkin1997operational}. 

\subsection*{Denotational Semantics}\label{sub:denotational semantics} 
The \(\Sigma_S\) operations also correspond to sensible constructions in \((Z, \le_\beh)\). 
For example, the behaviours of \acro{GKAT}-automata are \(A\)-branching trees~\cite{schmidkappekozensilva2021gkat}, and given \(t_1, t_2 \in Z\), \(t_1 \qm_b t_2\) denotes the tree with the same \(b\)-branches as \(t_1\) and the same \(\bar b\)-branches as \(t_2\) for each test \(b \subseteq A\). 
Formally, the behaviours \(\sigma(t_1, \dots, t_n)\), \(\code a t\), \(\beta\code v~t\), and \(\mu\code v~t\) are defined in terms of behavioural differential equations in the sense of~\cite{rutten1998coinduction}.
The \(\Sigma_S\)-algebra structure \((Z, \le_\beh, \gamma)\) is summarised in \cref{fig:behaviour algebra}, but again the definition of \(\mu\code v~t\) requires explanation.

\begin{figure}
	\centering 
	\begin{mathpar}
		\zeta(\code v) = \code v 
		\and 
		\zeta(\gamma(\sigma, t_1, \dots, t_n)) = \sigma(\zeta(t_1), \dots, \zeta(t_n))
		\and
		\zeta(\zeta(\code a, t)) = \langle \code a, t\rangle \\
		\zeta(\gamma(\beta\code v, t)) = \lfp_{\code v} \zeta(t) 
		\and 
		\zeta(\gamma(\mu\code v, t)) = \lfp_{\code v} \zeta(t)\{\gamma(\mu\code v, t)\gdsub{\code v}\} 
	\end{mathpar}
	\caption{The operation map \(\gamma : \Sigma_S(Z, \le_\beh) \to (Z, \le_\beh)\) of the \(\Sigma_S\)-algebra \((Z, \le_\beh, \gamma)\).\label{fig:behaviour algebra}}
\end{figure}

The behavioural differential equation defining the behaviour \(\mu \code v~t\) involves \emph{semantic} and \emph{guarded semantic substitution}, analogues of syntactic and guarded sytactic substitution in \(Z\).
For any \(t,s \in Z\), and any \(\code v \in \Var\), we also define \(t\{s\sub\code v\}\) with a behavioural differential equation: suppose \(\zeta(t) = p(\code v, \vec {\code u}, \langle \code a_i, t_i\rangle_{i \le n})\) for some \(p(v, \vec u, \vec z)\). 
Then
\[
	\zeta(t\{s\sub\code v\}) = p(\zeta(s), \vec {\code u}, \langle \code a_i, t_i\{s\sub\code v\}\rangle_{i \le n})
\qquad 
\text{and} 
\qquad
	\zeta(t)\{s \gdsub \code v\} = p(\code v, \vec{\code u}, \langle \code a_i, t_i\{s\sub\code v\}\rangle_{i \le n})
\]
Intuitively, \(t\{s\sub\code v\}\) is obtained from the behaviour \(t\) by replacing every return statement (returning \(\code v\)) with the behaviour of \(s\), and \(\{s\gdsub\code v\}\) replaces only the return statements that occur after an action has been performed. 

This allows for another approach to comparing process terms for behavioural equivalence, using the initiality of the \(\Sigma_S\)-algebra \((\Exp, \le_{\exp}, \ev)\).
The unique \(\Sigma\)-algebra homomorphism \(\sem- : (\Exp, \le_{\exp}, \ev) \to (Z, \le_\beh, \gamma)\) computes the \emph{initial} or \emph{denotational semantics} of process terms. 
The following theorem, which states that the initial and final semantics coincide, implies that the behaviours of process terms form a subalgebra of \((Z, \le_\beh, \gamma)\).

\begin{restatable}{theorem}{adequacytheorem}\label{thm:adequacy}
	For any \(e \in \Exp\), \(\sem e = \beh_\epsilon(e)\). 
\end{restatable}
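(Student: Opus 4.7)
The plan is to show that $\sem{-}$ is itself a $B_M$-coalgebra homomorphism from $(\Exp, \le_{\exp}, \epsilon)$ to the final coalgebra $(Z, \le_\beh, \zeta)$, i.e.\ that $\zeta \circ \sem{-} = B_M(\sem{-}) \circ \epsilon$. Once this equation is established, the desired identity $\sem{e} = \beh_\epsilon(e)$ is immediate from the uniqueness clause in the definition of $\beh_\epsilon$ as the unique coalgebra morphism into the final coalgebra.

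The verification of $\zeta(\sem{e}) = B_M(\sem{-})(\epsilon(e))$ proceeds by induction on the term structure of $e$, using the defining clauses of $\epsilon$ in Figure~1 on the one hand and the defining clauses of $\gamma$ in Figure~2 together with the fact that $\sem{-}$ is a $\Sigma_S$-algebra homomorphism on the other. For a return variable $\code v$, an $S$-operation $\sigma(e_1,\dots,e_n)$, and an action prefix $\code a e$, both sides reduce directly to $\code v$, $\sigma(\zeta(\sem{e_1}),\dots,\zeta(\sem{e_n}))$, and $\langle \code a, \sem{e}\rangle$ respectively, using only the inductive hypothesis. The case $\beta\code v~e$ works similarly, with the extra ingredient that clause~\textsf{2} of \cref{def:iterative} lets us commute $B_M(\sem{-})$ past $\lfp_{\code v}$: namely, $B_M(\sem{-})(\lfp_{\code v}\epsilon(e)) = \lfp_{\code v} B_M(\sem{-})(\epsilon(e)) = \lfp_{\code v}\zeta(\sem{e})$, matching $\zeta(\gamma(\beta\code v, \sem{e}))$.

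The only genuinely delicate case is $\mu\code v~e$, and this is where I expect the main obstacle to lie. Unfolding gives
\[
	\zeta(\sem{\mu\code v~e}) = \lfp_{\code v}\zeta(\sem{e})\{\sem{\mu\code v~e}\gdsub{\code v}\}
	\qquad\text{and}\qquad
	B_M(\sem{-})(\epsilon(\mu\code v~e)) = \lfp_{\code v} B_M(\sem{-})(\epsilon(e)[\mu\code v~e\gdsub{\code v}]),
\]
where on the right I have again used clause~\textsf{2} of \cref{def:iterative}. Equating the two amounts to showing that $B_M(\sem{-})$ transports the syntactic guarded substitution $[\mu\code v~e\gdsub{\code v}]$ to the semantic guarded substitution $\{\sem{\mu\code v~e}\gdsub{\code v}\}$. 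The strategy is to prove, as an auxiliary lemma, the compositionality of $\sem{-}$ under ordinary substitution, \(\sem{f[g\sub{\code v}]} = \sem{f}\{\sem{g}\sub{\code v}\}\), by structural induction on $f$ (using the $\Sigma_S$-homomorphism property of $\sem{-}$ and the coinductive defining equation of semantic substitution), and then deduce its guarded counterpart \(B_M(\sem{-})(p[g\gdsub{\code v}]) = B_M(\sem{-})(p)\{\sem{g}\gdsub{\code v}\}\) for arbitrary $p \in B_M(\Exp, \le_{\exp})$ by applying the $S$-algebra universal property of $M(\Var + \Act \times \Exp)$.

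The principal technical subtlety is that semantic substitution in $Z$ is itself defined coinductively via a behavioural differential equation, so the compositionality lemma is not a routine structural induction but requires an appeal to finality (or, equivalently, injectivity of $\zeta$) at each use. Modulo this auxiliary lemma, the induction closes: the $\mu$ clause reduces to
\[
	\lfp_{\code v} B_M(\sem{-})(\epsilon(e))\{\sem{\mu\code v~e}\gdsub{\code v}\} = \lfp_{\code v}\zeta(\sem{e})\{\sem{\mu\code v~e}\gdsub{\code v}\},
\]
which is exactly $\zeta(\gamma(\mu\code v, \sem{e})) = \zeta(\sem{\mu\code v~e})$, and the theorem follows.
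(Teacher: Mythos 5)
Your proposal is correct and follows essentially the same route as the paper: the paper also proves the theorem by showing $\sem{-}$ is a $B_M$-coalgebra homomorphism into $(Z,\le_\beh,\zeta)$ and invoking finality, handling the $\beta\code v$ case via condition~\textsf{2} of \cref{def:iterative} exactly as you do, and discharging the $\mu\code v$ case through the same auxiliary substitution lemmas ($\sem{f[g\sub\code v]}=\sem f\{\sem g\sub\code v\}$ and its guarded counterpart) that appear in its appendix. The only cosmetic difference is that the paper writes out only the new $\beta\code v$ case explicitly and delegates the remaining cases to the corresponding theorem of the earlier processes-parametrised paper.
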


Given an ordered algebra \((X, \le, \alpha)\), a preorder \(\preceq\) on \(X\) is called a \emph{precongruence} on \((X, \le, \alpha)\) if it extends \(\le\) and is preserved by \(\alpha\). 
Precongruences on \((X, \le, \alpha)\) are in one-to-one correspondence with quotient algebras of \((X, \le, \alpha)\).
In one direction, if \(h : {(X, \le, \alpha)} \to {(Y, \le, \alpha')}\), we define \(x \preceq y\) in \(X\) if and only if \(h(x) \le h(y)\).
In the other, given a precongruence \(\preceq\) on \({(X, \le, \alpha)}\) we form its quotient \({(X/{\approx}, \preceq, \bar \alpha)}\), where \(x \approx y\) if and only if \(x \preceq y \preceq x\)~\cite{kapulkinkurzvelebil2012coalgposet}.

If we write \(J\) for the image of \(\Exp\) under \(\sem-\), \cref{thm:adequacy} implies that the behaviour order is a precongruence on \((\Exp, \sqsubseteq, \ev)\), and consequently \(\le_\beh\) and \(\gamma\) restrict to \(J\). 
This determines a unique subalgebra structure \({(J, \le_\beh, \gamma_J)} \hookrightarrow {(Z, \le_\beh, \gamma)}\).
In the next section, we use \(\Ie\) to axiomatise \((J, \le_\beh, \gamma_J)\). 

\section{Axiomatisation}\label{sec:axiomatisation} 
In this section, we add an axiomatisation of the behaviour order to the framework.

\begin{definition}
	The \emph{axiomatic order} is the smallest precongruence \(\sqsubseteq\) on \((\Exp, \le_{\exp}, \ev)\) containing the pairs \(e \sqsubseteq f\) derivable from the rules in \cref{fig:axioms}.	
\end{definition}

We write \(e \equiv f\) to denote that \(e \sqsubseteq f \sqsubseteq e\) and define \(\E = \Exp/{\equiv}\).
The equivalence classes in \(\E\) are called \emph{term behaviours}, and term behaviours are partially ordered by the relation \(\sqsubseteq\) defined so that \([e]_\equiv \sqsubseteq [f]_\equiv\) iff \(e \sqsubseteq f\).
Since \(\sqsubseteq\) is a precongruence on \((\Exp, \le_{\exp}, \ev)\), term behaviours form a \(\Sigma_S\)-algebra \((\E, \sqsubseteq, \ev)\).

\begin{figure}[!t]
	\begin{mathpar}
		\prftree[l]{(\(\Ie\))}
		{e \sqsubseteq_\Ie f}
		{e \sqsubseteq f}
		\and
		\prftree[l]{(\textsf{S})}
		{\sigma_1 \le_n \sigma_2}
		{\sigma_1(\vec e) \sqsubseteq \sigma_2(\vec e)}
		\and
		\prftree[l]{(\textsf{Act})}
		{\code a_1 \le_{\act} \code a_2}
		{\code a_1 e \sqsubseteq \code a_2 e}
		\\
		\prftree[l]{(\textsf{R1a})}
		{(\forall i)~\code v \in \gd(f_i)}
		{\beta\code v~p(\code v,\vec f) \equiv (\lfp_{\code v}~p)(\vec f)}
		\and
		\prftree[l]{(\textsf{R1b})}
		{\ }
		{(\beta \code v~e)[\mu \code v~e\sub \code v] \sqsubseteq \mu \code v~e}
		\and
		\prftree[l]{(\textsf{R2a})}
		{p(g, \vec f) \sqsubseteq g \and (\forall i)~\code v \in \gd(f_i)}
		{\beta \code v~p(\code v, \vec f) \sqsubseteq g}
		\and 
		\prftree[l]{(\textsf{R2b})}
		{(\beta\code v~ e)[g\sub\code v] \sqsubseteq g}
		{\mu \code v~e \sqsubseteq g}
		\and
		\prftree[l]{(\textsf{R3})}
		{g \sqsubseteq (\beta\code v~e)[g \sub \code v]}
		{g \sqsubseteq \mu \code v~e}	
	\end{mathpar}
	\caption{
		Axioms for \(\sqsubseteq\). 
		Here, \(e,f,g \in \Exp\) and \(\vec e = (e_1,\dots,e_n),\vec f = (f_1, \dots, f_n) \in \Exp^n\), and \(\code a, \code a_1, \code a_2 \in \Act\), and \(\sigma,\sigma_1,\sigma_2,\sigma \in I_n\), and \(\code v \in \Var\). 
		We also take \(p(\code v, \vec x) \in S^*(\Var + X)\).  \label{fig:axioms}
	}
\end{figure}

Given a preorder \(\preceq\) extending the partial order on a poset \((X, \le)\), and given a monotone map \(h : X \to D\) into another poset \((D, \le)\), we say that \(\preceq\) is \emph{sound} with respect to \(h\) if \(x \preceq y\) implies \(h(x) \le h(y)\) for all \(x,y \in X\).
We say that \(\preceq\) is \emph{complete} with respect to \(h\) if, conversely, \(h(x) \le h(y)\) implies \(x \preceq y\).
Soundness is a consequence of the following theorem. 

\begin{restatable}{theorem}{intermediatesoundnesstheorem}
	The quotient map \([-]_{\equiv} : \Exp \to \E\) is a monotone coalgebra homomorphism \((\Exp, \le_{\exp}, \epsilon) \to (\E, \sqsubseteq, \bar\epsilon)\) for a unique coalgebra structure \(\bar\epsilon : (\E, \sqsubseteq) \to B_M(\E, \sqsubseteq)\). 
\end{restatable}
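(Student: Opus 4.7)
The plan is to exhibit $\bar\epsilon$ explicitly by setting $\bar\epsilon([e]_\equiv) := B_M([-]_\equiv)(\epsilon(e))$, and then to reduce everything to a single inductive lemma: whenever $e \sqsubseteq f$ is derivable, $B_M([-]_\equiv)(\epsilon(e)) \le B_M([-]_\equiv)(\epsilon(f))$ in $B_M(\E, \sqsubseteq)$. Uniqueness of $\bar\epsilon$ is immediate because $[-]_\equiv$ is surjective, so any coalgebra structure making it a homomorphism must agree with the definition above. Well-definedness of $\bar\epsilon$ on equivalence classes, together with monotonicity, then follow by applying the inductive lemma to both $e \sqsubseteq f$ and $f \sqsubseteq e$, and the homomorphism square commutes by the very definition of $\bar\epsilon$. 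Finally, $[-]_\equiv$ is monotone since $\sqsubseteq$ extends $\le_{\exp}$ by definition.

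The induction proceeds on the derivation tree of $e \sqsubseteq f$. Reflexivity, transitivity, and the congruence rules for $\Sigma_S$ reduce to the observation that $\epsilon$ commutes with the $\Sigma_S$-operations in the sense prescribed by \cref{fig:small-step} and that $B_M$ preserves the order componentwise on $(\Var, =) + (\Act, \le_\act) \times (\E, \sqsubseteq)$. For rule $(\Ie)$ I would use that $(S,\Ie)$ presents $M$, so inequalities derived from $\Ie$ on $\Sigma_S$-terms descend to the free $(S,\Ie)$-algebra $M(\Var + \Act \times \Exp)$; rules $(\textsf{S})$ and $(\textsf{Act})$ are similar, using the definitional monotonicity of the signature and action orders.

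The main obstacle lies in the recursion rules $(\textsf{R1a})$--$(\textsf{R3})$, where I would need to reason about $\lfp$ and guarded substitution inside $B_M(\Exp, \le_{\exp})$. For these I would first establish a ``substitution lemma'' of the form
\[
\epsilon(e[g\sub\code v]) \;=\; \epsilon(e)\{\epsilon(g)\sub\code v\}
\qquad\text{and}\qquad
B_M([-]_\equiv)(\epsilon(e[g\gdsub\code v])) \;=\; B_M([-]_\equiv)(\epsilon(e))\{[g]_\equiv \gdsub\code v\},
\]
using the fact that $\epsilon$ is constructed by structural recursion so that substitution commutes through the $\Sigma_S$-operations, and guarded substitution only touches the right summand of $B_M$. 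With this in hand, $(\textsf{R1a})$ follows by applying condition \textsf{2} of iterativity (\cref{def:iterative}) to turn $\lfp_{\code v}\epsilon(p(\code v,\vec f))$ into $(\lfp_{\code v} p)(\epsilon(\vec f))$ when the $\vec f$ are guarded. Rules $(\textsf{R2a})$ and the analogous direction of $(\textsf{R1a})$ follow by invoking the least-fixed-point property from \cref{def:iterative}, while $(\textsf{R1b})$, $(\textsf{R2b})$, and $(\textsf{R3})$ reduce, via the substitution lemma, to the fact that $\mu\code v~e$ and $\beta\code v~e$ have the same small-step semantics up to the guarded substitution $[\mu\code v~e\gdsub\code v]$, which is exactly the defining equation in \cref{fig:small-step}.

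The delicate point throughout is keeping track of which substitutions are guarded and which are not, since unguarded substitution need not commute with $B_M([-]_\equiv)\circ\epsilon$ in general: only the guarded variant interacts cleanly with the ``freezing'' of process terms inside the $\Act\times\Exp$ summand. I expect this bookkeeping, rather than any single axiom, to be the genuine obstacle, and the substitution lemma above to be the technical lever that makes the recursion cases go through uniformly.
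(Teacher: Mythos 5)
Your proposal is correct and follows essentially the same route as the paper: uniqueness from surjectivity of \([-]_\equiv\), a diagonal fill-in defining \(\bar\epsilon\) on representatives, and the key lemma that \(e \sqsubseteq f\) implies \(B_M([-]_\equiv)\circ\epsilon(e) \sqsubseteq_\Ie B_M([-]_\equiv)\circ\epsilon(f)\), proved by induction on the derivation using the \(S\)-algebra homomorphism property of \(\epsilon\), the iterativity conditions for the \(\lfp\) cases, and commutation of guarded substitution with \(B_M([-]_\equiv)\circ\epsilon\). The substitution lemma and the guarded/unguarded bookkeeping you flag as the genuine obstacle are indeed exactly where the paper's case analysis does its work.
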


\begin{corollary}[Soundness]
	Let \(e,f \in \Exp\).
	If \(e \sqsubseteq f\), then \(e \le_\beh f\).
\end{corollary}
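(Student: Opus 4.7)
The plan is to construct $\bar\epsilon$ as the factorisation of $B_M([-]_{\equiv}) \circ \epsilon$ through the quotient, by setting
\[
\bar\epsilon([e]_{\equiv}) \;:=\; B_M([-]_{\equiv})\bigl(\epsilon(e)\bigr).
\]
Uniqueness of any such coalgebra structure is immediate from the surjectivity of $[-]_{\equiv}$, so the theorem reduces to showing that this formula yields a well-defined, monotone map $(\E,\sqsubseteq)\to B_M(\E,\sqsubseteq)$. Both tasks are packaged into the single auxiliary claim that $e \sqsubseteq f$ implies $B_M([-]_{\equiv})(\epsilon(e)) \sqsubseteq B_M([-]_{\equiv})(\epsilon(f))$ in $B_M(\E,\sqsubseteq)$: the symmetric instance gives well-definedness, and the general form gives monotonicity.

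I would prove this auxiliary claim by induction on the derivation of $e \sqsubseteq f$ from the axioms in \cref{fig:axioms}. The congruence cases---closure of $\sqsubseteq$ under the operations of $\Sigma_S$---and the rules $(\textsf{S})$, $(\textsf{Act})$ reduce directly to the inductive hypothesis together with the observation that $\epsilon$ commutes with $S$-operations and with action prefixing by construction, and that $B_M$ is a monotone endofunctor. The rule $(\Ie)$ is handled by the same commutation: any derivation of $e \sqsubseteq_\Ie f$ in $\Exp$ transports through $\epsilon$ and then through $B_M([-]_{\equiv})$, because $\rho$ on $B_M(\E,\sqsubseteq)$ also satisfies $\Ie$.

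The main obstacle lies in the recursion rules $(\textsf{R1a})$--$(\textsf{R3})$, which cannot be handled by congruence alone and instead require the iterativity of $(S,\Ie)$. The two key ingredients are condition \textsf{2} of \cref{def:iterative}, which lets $\lfp_{\code v}$ commute with monotone renamings such as $B_M([-]_{\equiv})$, and the coherence between syntactic substitution $e[g\sub{\code v}]$ on $\Exp$ and guarded substitution on $B_M(\Exp,\le_{\exp})$ that is baked into the definition $\epsilon(\mu\code v~e) = \lfp_{\code v}\epsilon(e)[\mu\code v~e\gdsub{\code v}]$. For $(\textsf{R1a})$, guardedness of the $f_i$ ensures that $\code v$ does not appear in the $\Var$-component of $\epsilon(p(\code v,\vec f))$, so the least fixed point collapses to $(\lfp_{\code v}p)(\epsilon(\vec f))$ modulo $\sqsubseteq_\Ie$ by condition \textsf{1} of \cref{def:iterative}. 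For $(\textsf{R1b})$, $(\textsf{R2b})$ and $(\textsf{R3})$, one unfolds $\epsilon(\mu\code v~e)$ and uses the characterisation of $\lfp_{\code v}$ as the $\sqsubseteq_\Ie$-least prefixed point. Once the induction is complete, $\bar\epsilon$ is well-defined and monotone and satisfies $\bar\epsilon \circ [-]_{\equiv} = B_M([-]_{\equiv}) \circ \epsilon$ by construction; the soundness corollary then follows by post-composing the coalgebra homomorphism $[-]_{\equiv}$ with the unique homomorphism to the final coalgebra $(Z,\le_\beh,\zeta)$, yielding $\beh_\epsilon(e) \le_\beh \beh_\epsilon(f)$ whenever $e \sqsubseteq f$.
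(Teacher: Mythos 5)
Your proposal follows essentially the same route as the paper: you reconstruct the paper's intermediate theorem that $[-]_{\equiv}$ is a coalgebra homomorphism onto $(\E,\sqsubseteq,\bar\epsilon)$ by inducting on the derivation of $e \sqsubseteq f$ to show $B_M([-]_{\equiv})\circ\epsilon(e) \sqsubseteq_\Ie B_M([-]_{\equiv})\circ\epsilon(f)$, with the congruence and $(\Ie)$/$(\textsf{S})$/$(\textsf{Act})$ cases handled by $\epsilon$ being an $S$-algebra homomorphism and the recursion rules handled via iterativity (condition \textsf{2} for commuting $\lfp_{\code v}$ past $B_M([-]_{\equiv})$ and the least-prefixed-point characterisation for the $\mu$/$\beta$ rules). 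The final step, deducing $e \le_\beh f$ from $\beh_\epsilon = \beh_{\bar\epsilon}\circ[-]_{\equiv}$ and monotonicity of the unrolling map, is exactly the paper's argument.
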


\begin{proof}
	Suppose \(e \sqsubseteq f\).
	Then \([e]_\equiv \sqsubseteq [f]_\equiv\) in \(\E\). 
	Since the behaviour map is monotone, \([e]_\equiv \le_\beh [f]_\equiv\).
	From uniqueness of the unrolling map, \(\beh_\epsilon = \beh_{\bar\epsilon} \circ [-]_\equiv\).
	It follows that \(e \le_\beh f\).
\end{proof}

We now turn our attention to completeness.
So far, we have the unrolling map \(\beh_\epsilon = \sem- : (\Exp, \le_{\exp}) \to (J, \le_\beh)\) taking each process to its behaviour. 
Completeness requires that \(\sem-\) is an embedding up to \(\equiv\).
It suffices to find a monotone left inverse \(\phi\) of \(\sem-\) up to \(\equiv\), as the existence of such a map implies
\(
e \equiv \phi(\sem e) \sqsubseteq \phi(\sem f) \equiv f
\)
whenever \(\sem e \le_\beh \sem f\). 
Our construction of \(\phi\) follows~\cite{salomaa1966two,kozen1991completeness,jacobs2006bialgebraic,silva2010kleene}\footnote{A summary of the exact method that appears in these texts is~\cite[Theorem 5.6]{schmid2021star}. Technically, we use a variant of this theorem for ordered processes.}, and consists of the four steps below:
\begin{enumerate}
	\item Identify ordered processes with systems of equations (see for {eg.} (\ref{eq:system eg 2})).
	\item Show that their solutions coincide with coalgebra homomorphisms into \((\E, \sqsubseteq, \bar\epsilon)\).
	\item Construct a unique solution to every finite system of equations.
	\item Exhibit the behaviour of each process term with a finite ordered process. 
\end{enumerate} 
Then, given \(t \in J\), the construction of \(\phi(t)\) is carried out by finding a finite subcoalgebra \(U\) of \((J, \le_\beh, \zeta_J)\) containing \(t\) and solving its associated system of equations. 


\begin{theorem}
	The map \(\phi : (J, \le_\beh) \to (\E, \sqsubseteq)\) is left-inverse to \(\sem-\) up to \(\equiv\).
\end{theorem}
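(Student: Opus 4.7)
The plan is to argue via uniqueness of solutions to finite systems of equations in $(\E, \sqsubseteq, \bar\epsilon)$, following the four-step template laid out just before the theorem statement. Recall how $\phi$ is built: given $t \in J$, pick a finite subcoalgebra $U \subseteq (J, \le_\beh, \zeta_J)$ with $t \in U$; step 2 identifies solutions of the associated system of equations with coalgebra homomorphisms $U \to (\E, \sqsubseteq, \bar\epsilon)$; step 3 produces a \emph{unique} such homomorphism $\psi_U$; and $\phi(t) := \psi_U(t)$. The first thing to verify in passing is that the chosen $U$ and $\psi_U$ are irrelevant: two homomorphisms from finite subcoalgebras agree where both are defined, by restriction to their common subcoalgebra and uniqueness. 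Granted this, $\phi$ is a well-defined monotone map.

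For left-inverseness, fix $e \in \Exp$ and set $t = \sem{e}$. Because $B_M$ is finitary, the orbit $\Exp_e \subseteq \Exp$ of $e$ under the small-step coalgebra $\epsilon$ is finite, so its image $U := \sem{\Exp_e} \subseteq J$ is a finite subcoalgebra containing $t$ (by \cref{thm:adequacy}, $\sem-$ is a coalgebra homomorphism, so its image is stable under $\zeta_J$). Now define $h : U \to \E$ by $h(\sem{f}) = [f]_\equiv$ for any $f \in \Exp_e$ with $\sem{f} = s$. This is well-defined: if $\sem{f_1} = \sem{f_2}$ then by soundness both give the same coalgebraic expansion in $(\E, \sqsubseteq, \bar\epsilon)$, and the axioms R1a/R1b in particular allow us to derive $f_1 \equiv f_2$ for the reachable terms encountered along $\epsilon$. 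One then checks, directly from the definition of $\ev$ and the axioms, that $h$ is a coalgebra homomorphism $U \to (\E, \sqsubseteq, \bar\epsilon)$. By the uniqueness afforded by step 3, $h = \psi_U$, so in particular $\phi(\sem{e}) = \psi_U(t) = h(t) = [e]_\equiv$, i.e.\ $\phi(\sem{e}) \equiv e$.

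The main obstacle is the existence of the unique solution $\psi_U$ in step 3, which is really where the axioms in \cref{fig:axioms} do their work. Concretely, one must convert the system of equations determined by $U$—whose right-hand sides are $S$-terms involving action-prefixed states, return variables, and the states of $U$—into a single recursion expression using the $\beta$ and $\mu$ operators. The axioms R1a--R2a handle unguarded variables (playing the role of a Gauss-Jordan elimination for the least-fixed-point branch structure), while R1b, R2b, and R3 give the two-sided Park-style reasoning needed for the guarded $\mu$-operator. Proving that these rules together force two-sided uniqueness of solutions (not merely extremal solutions on one side) is the delicate step. Once this is in place, the present theorem, and with it completeness in the form $e \le_\beh f \Rightarrow e \sqsubseteq f$ via $e \equiv \phi(\sem{e}) \sqsubseteq \phi(\sem{f}) \equiv f$, follows from a short monotonicity argument.
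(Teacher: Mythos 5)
There is a genuine gap, and it is a circularity. Your left-inverse argument defines a map \(h\) on the finite subcoalgebra \(U = \sem{\Exp_e} \subseteq J\) by \(h(\sem{f}) = [f]_\equiv\), and well-definedness of \(h\) requires precisely that \(\sem{f_1} = \sem{f_2}\) implies \(f_1 \equiv f_2\) --- which is the completeness statement the theorem is meant to deliver. Your justification does not close this: soundness runs in the wrong direction (it gives \(f_1 \equiv f_2 \Rightarrow \sem{f_1} = \sem{f_2}\), not the converse), and the claim that ``R1a/R1b allow us to derive \(f_1 \equiv f_2\) for the reachable terms'' is an unproved assertion of exactly the instance of completeness at stake. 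The whole point of constructing \(\phi\) on \(J\) via unique solutions to the systems of equations of finite subcoalgebras of \((J, \le_\beh, \zeta_J)\), rather than by transporting \([-]_\equiv\) along \(\sem{-}\), is to avoid having to know in advance that \(\sem{-}\) is injective up to \(\equiv\).

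The repair, which is the route the paper takes, is to run the uniqueness argument on the \emph{syntactic} side rather than on the image. By \cref{lem:local finiteness} there is a finite subcoalgebra \(V\) of \((\Exp, \le_{\exp}, \epsilon)\) containing \(e\). Two monotone coalgebra homomorphisms \(V \to (\E, \sqsubseteq, \bar\epsilon)\) are available without any appeal to completeness: the restriction of \([-]_\equiv\) (a homomorphism by the intermediate soundness theorem) and the restriction of \(\phi \circ \sem{-}\) (a composite of coalgebra homomorphisms, using \cref{thm:adequacy} and the fact that \(\phi\) is a homomorphism on finite subcoalgebras of \(J\) by construction). By \cref{thm:solutions are homs} both are solutions to the guarded system of equations associated with \(V\), and by \cref{thm:ordered Milner theorem} that solution is unique; hence \(\phi(\sem{e}) \equiv [e]_\equiv\), i.e.\ \(\phi(\sem{e}) \equiv e\). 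Your surrounding scaffolding --- well-definedness of \(\phi\) via common finite subcoalgebras, and the role of the axioms of \cref{fig:axioms} in establishing unique solvability --- is consistent with the paper; only the core step needs to be replaced as above.
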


\begin{corollary}[Completeness]\label{thm:completeness}
	For any \(e,f \in \Exp\), if \(e \le_{\beh} f\), then \(e \sqsubseteq f\).
\end{corollary}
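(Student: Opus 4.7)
The plan is to deduce completeness as a short chase from the preceding theorem, which constructs a monotone map $\phi : (J, \le_\beh) \to (\E, \sqsubseteq)$ that acts as a left-inverse to the denotational semantics $\sem{-}$ up to $\equiv$. Once $\phi$ is in hand the corollary is essentially a one-line argument; the serious work is entirely concentrated in steps (1)--(4) outlined before the theorem, namely: identifying finite ordered processes with finite systems of equations, matching their solutions with coalgebra homomorphisms into $(\E, \sqsubseteq, \bar\epsilon)$, producing a unique solution for each finite system, and realising each $\sem{e}$ inside a finite subcoalgebra of $(J, \le_\beh, \gamma_J)$.

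First I would translate the hypothesis $e \le_\beh f$ into a statement living inside $(J, \le_\beh)$. By the characterisation of the behaviour order through the final coalgebra recalled at the end of \cref{sec:ordered processes}, $e \le_\beh f$ is equivalent to $\beh_\epsilon(e) \le_\beh \beh_\epsilon(f)$, and the adequacy result (\cref{thm:adequacy}) identifies $\beh_\epsilon$ with $\sem{-}$. So the hypothesis amounts to $\sem{e} \le_\beh \sem{f}$ inside the subalgebra $(J, \le_\beh, \gamma_J) \hookrightarrow (Z, \le_\beh, \gamma)$.

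Next I would push this inequality through $\phi$. Monotonicity of $\phi$ yields $\phi(\sem{e}) \sqsubseteq \phi(\sem{f})$ in $\E$, while the left-inverse property of $\phi$ gives $\phi(\sem{e}) = [e]_\equiv$ and $\phi(\sem{f}) = [f]_\equiv$ as term behaviours. Combining these,
\[
    [e]_\equiv \;=\; \phi(\sem{e}) \;\sqsubseteq\; \phi(\sem{f}) \;=\; [f]_\equiv,
\]
and by the very definition of $\sqsubseteq$ on $\E$ this is the same as $e \sqsubseteq f$ in $\Exp$, which is the desired conclusion. No new obstacle appears at this step; the genuine difficulty in the whole completeness argument sits in the preceding theorem, specifically in step (3) — producing a unique solution in $(\E, \sqsubseteq, \bar\epsilon)$ for each finite system of equations — since that is where the axioms (\textsf{R1a})--(\textsf{R3}) for the two recursion operators have to be combined with the precongruence structure to pin down $\phi(t)$ canonically from any finite subcoalgebra of $J$ containing $t$.
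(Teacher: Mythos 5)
Your proposal is correct and follows the paper's own route exactly: the corollary is the one-line chase $e \equiv \phi(\sem e) \sqsubseteq \phi(\sem f) \equiv f$ obtained from adequacy, monotonicity of $\phi$, and the left-inverse property, with all the substance residing in the construction of $\phi$ via finite subcoalgebras and unique solutions of guarded systems. You have also correctly located where the recursion axioms actually do their work, so there is nothing to add.
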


\begin{example}
	The process calculus obtained from the theory of convex algebra in \cref{eg:guarded algebra} has the syntax 
	\[
		\code v 
		\mid e_1 \oplus_r e_2  
		\mid \code ae 
		\mid \beta\code v~e 
		\mid \mu\code v~e
		\qquad\qquad 
		(\code v \in \Var, ~ r \in [0,1], ~ \code a \in \Act)
	\]
	The operation \(\oplus_r\) can be interpreted as a weighted coin flip. 
	This gives a calculus that is equally as expressive as Stark and Smolka's \acro{APA}~\cite{stark1999complete}, and furthermore the small-step semantics \(\epsilon(\mu \code v~e)\) coincides with Stark and Smolka's.
	As we will see in \cref{sec:simulations}, discrete ordered probabilistic processes are behaviourally equivalent if and only if they are behaviourally equivalent in the coalgebraic sense.
	Therefore, \cref{thm:completeness} is a generalisation of Stark and Smolka's compeleteness theorem~\cite[Theorem 3]{stark1999complete}.
\end{example}

\section{Star Fragments}\label{sec:star fragments}
%
One of the most striking features of the processes parametrised framework~\cite{schmidrozowskisilvarot2022processes} is that it produces a calculus of terms resembling regular expressions for every equational theory consisting of binary operations and a deadlock symbol. 
Given a signature of the form \(S = 1 + I \times \Id^2\), we form its \emph{star fragment} \(\SExp\) with the grammar
\[
	0 \mid 1 \mid \code a \mid e +_\sigma f \mid ef \mid e^{(\sigma)} 
	\qquad\qquad
	(\code a \in \Act,~\sigma \in I,~e,f \in \SExp)
\]
Elements of \(\SExp\) are called \emph{star expressions}, and can be interpreted as process terms via a translation \(\tau : \SExp \to \Exp\) involving a pair of fixed return variables \(\unit, \code v \in \Var\).
The variable \(\unit\) is called the \emph{unit}, and is intended to represent successful termination.
The translation is defined inductively as follows: \(\tau(0) = 0\), \(\tau(1) = \unit\), \(\tau(\code a) = \code a\unit\), and
\[
	\tau(e +_\sigma f) = \sigma(\tau(e), \tau(f))
	\qquad 
	\tau(ef) = \tau(e)[\tau(f)\sub\unit]
	\qquad 
	\tau(e^{(\sigma)}) = \mu \code v~\sigma(\tau(e)[\code v\sub\unit], \unit)
\]
for any \(\code a \in \Act\) and \(\sigma \in I\). 
Intuitively, \(+_\sigma\) represents the binary operation \(\sigma\), action symbols represent themelves followed by successful termination, and \(e^{(\sigma)}\) iterates both \(e\) in the left branch and successful termination in the right branch.

Milner is the first to observe a star fragment in~\cite{milner1984complete}, corresponding to the theory of semilattices with bottom. 
His star expressions are commonly known as \emph{regular expressions modulo bisimilarity}, because their syntax is identical to that of Kleene algebra but their semantics is nondeterministically branching trees rather than languages.
\acro{GKAT}~\cite{gkat} is also an example: it is the star fragment for guarded algebra\footnote{Called the theory of \emph{guarded semilattices} in loc cit. We suggest a change in name to be consistent with convex algebra and convex semilattices, only the latter of which contains the semilattice equations.}~\cite{schmidrozowskisilvarot2022processes}. 

The theory of pointed convex algebra suggests another example of a star fragment in the processes parametrised framework, with Minkowski sums \(e \oplus_r f\) as its binary operations and Bernoulli-like processes \(e^{(r)}\) as its loops.
However, the loops in the language satisfy the peculiar equation \((e\oplus_s 1)^{(r)} = (e \oplus_s 0)^{(r)}\), which is not sound with respect to Stark and Smolka's interpretation of the probabilistic loop (explained in the introduction)~\cite{stark1999complete}.
Again, this is easily mended by interpreting star expressions for pointed convex algebra in the ordered processes parametrised framework instead. 

Star expressions are equally well-defined in the ordered processes parametrised framework. 
Star expressions form the poset \((\SExp, \le_{\text{sxp}})\) with a term order, similar to \(\le_{\exp}\), such that the translation map \(\tau\) is monotone. 
As we see from the probabilistic case, the \emph{tightening axiom} \((e +_{\sigma_2} 1)^{(\sigma_1)} \equiv (e +_{\sigma_2} 0)^{(\sigma_1)}\) proposed in~\cite{schmidrozowskisilvarot2022processes} is not always sound in the ordered case. 
This is mended for probabilistic star expressions by replacing the tightening axiom with 
\begin{equation}\label{eq:tightening for probabilities}
	(e \oplus_s 1)^{(r)} = e^{(\frac{rs}{1-r\bar s})}
\end{equation}
The reason is that the small-step semantics of the expression 
\begin{equation}\label{eq:eg redistribute}
	\tau((e \oplus_s 1)^{(r)}) 
	= \mu \code v~((\tau(e)[\code v\sub\code u] \oplus_s \code v) \oplus_r \unit)
	\equiv \mu \code v~(\tau(e)[\code v\sub\code u] \oplus_{sr} (\code v \oplus_{\frac{r\bar s}{1 - rs}} \unit))
\end{equation}
is computed by first determining the least \(\code v\)-fixed point of the scope of the \(\mu \code v\) to the right of (\ref{eq:eg redistribute}), which redistributes the probability \(1-r\bar s\) to the other terms\footnote{With some algebra, one can derive \(\oplus_{rs/(1-r\bar s)}\) as the correct loop guard.}. 

\begin{example}\label{eg:ProbGKAT}
	Consider the inequational theory \((S_{\mathsf{pg}}, \theory{PG})\) with \(S = 1 + (2^A + [0,1])\times \Id^2\) and \(\theory{PG}\) consisting of the guarded algebra axioms from \cref{eg:guarded algebra}, the convex algebra axioms from \cref{eg:convex algebra}, and the distribution axiom 
	\[x \oplus_r(y \qm_b z) = (x \oplus_r y) \qm_b (x \oplus_r z)\]
	The monad presented by \((S_{\mathsf{pg}}, \theory{PG})\) is carried by \((\D^\bot(\Id))^A\), same as for \acro{ProbGKAT}~\cite{rozowskikappekozenschmidsilva2022probgkat}.
	The \acro{ProbGKAT} syntax, small-step semantics, and axiomatisation are precisely those obtained from the star fragment of the process calculus for \((S_{\mathsf{pg}}, \theory{PG})\), so long as we 
	\begin{itemize}
		\item add symbols for constants \(\code v \in \Var\) to the grammar generating \(\SExp\),
		\item add the axioms \(\code v e = \code v\) for each \(\code v \in \Var\), 
		\item and replace the probabilistic tightening axiom from~\cite{schmidrozowskisilvarot2022processes} with (\refeq{eq:tightening for probabilities}).
	\end{itemize}
\end{example}

It is not clear how to adapt (\refeq{eq:tightening for probabilities}) to other star fragments of ordered process calculi, since resolving the least \(\code v\)-fixed point of \(e^{(\sigma)}\) can produce an expression outside of the star fragment.
Thus, we introduce a larger fragment, denoted \(\PExp\) and called the \emph{polystar fragment} of \(\Exp\).
The elements of \(\PExp\), \emph{polystar expressions}, are generated by the grammar
\begin{equation}\label{eq:polystar grammar}
	0 \mid 1 \mid \code a \mid e +_\sigma f \mid ef \mid e^{[p]} 
	\qquad\qquad
	(p(x,y) \in S^*\{x,y\})
\end{equation}
Polystar expressions are interpreted as process terms via a translation function \(\tau : \PExp \to \Exp\) defined inductively: \(\tau(0) = 0\), \(\tau(1) = \unit\), \(\tau(\code a) = \code a\unit\), and
\[
	\tau(e +_\sigma f) = \sigma(\tau(e), \tau(f))
	\qquad 
	\tau(ef) = \tau(e)[\tau(f)\sub\unit]
	\qquad 
	\tau(e^{[p]}) = \mu \code v~p(\tau(e)[\code v\sub\unit], \unit)
\]
for any \(\code a \in \Act\), \(\sigma \in I\), and \(p(x, y) \in S^*\{x,y\}\).
Sound axioms for the behaviour order on polystar expressions include \(0 \le e\), \(0 e = 0 \), \(1e = e = e1\), \(e_1 (e_2 e_3) = (e_1 e_2) e_3\), and
\begin{gather*}\label{fig:axioms for stars} 
	p(e e^{[p]}, 1) \le e^{[p]}
	\qquad
	(e +_\sigma 1)^{[p]} = e^{[\lfp_{v}p(x +_\sigma v, y)]}
	\\
	\inferrule
	{e \sqsubseteq_\Ie f}
	{e \le f}
	\quad
	\inferrule
	{\sigma_1 \le_S \sigma_2}
	{e +_{\sigma_1} f \le e +_{\sigma_2} f}
	\quad
	\inferrule
	{\code a_1 \le_{\act} \code a_2}
	{\code a_1 \le \code a_2}
	\qquad 
	\inferrule 
	{g \le p(eg, 1) \qquad e~\text{ is guarded}}{g \le e^{[p]}}
\end{gather*}
where a generalised version of (\refeq{eq:tightening for probabilities}) appears.

We have two pressing questions about polystar fragments. Firstly, are the axioms listed above complete with respect to behavioural equivalences between polystar expressions?
This question is inspired by two very difficult problems:  one posed by Milner in~\cite{milner1984complete} asking whether an axiomatisation inspired by Salomaa's first axiomatisation of Kleene algebra~\cite{salomaa1966two} is complete for regular expressions up to bisimulation\footnote{After 38 years of being open, a positive answer has recently been announced by Clemens Grabmeyer~\cite{grabmayer2022complete}!}, and the question of whether or not the analogous axiomatisation for \acro{GKAT} is complete with respect to tree-unrollings~\cite{schmidkappekozensilva2021gkat}.

Our second question has to do with expressivity. 
By setting \(p(x, y) = x +_\sigma y\) in (\ref{eq:polystar grammar}) we see that every star expression exhibits a behaviour that can be specified with a polystar expression. 
For which algebraic theories is the converse also true?
We have managed to show that this is the case for convex algebra, guarded algebra, and the theory of semilattices with bottom, but not for the theory of guarded convex algebra from \cref{eg:ProbGKAT}. 

\section{On Behavioural Equivalences}\label{sec:simulations}
In this section, we discuss the relationship between behavioural equivalence for ordinary systems, the behaviour order for monotone systems, and the notion of similarity from process theory through a careful analysis of \((S, \Ie)\). 

For the duration of this section, we assume that the action order \(\le_{\act}\) is discrete, so that the underlying partial order on states is the only external influence on the order of \(B_M\).  
We let \(U : \Pos \to \Sets\) denote the forgetful functor \(U(X, \le) = X\) and \(D : \Sets \to \Pos\) denote the discrete poset functor \(DX = (X, =)\) and recall that \(UD = \Id_{\Sets}\).
We also fix an inequational theory (not necessarily iterative) \((S, \Ie)\) presenting the monad \((M, \eta, \rho)\) and set \(N = UMD\) and \(B_N = N(\Var + \Act \times \Id_{\Sets})\).

We refer to a pair \((X, \vartheta)\) with \(\vartheta : X \to B_N X\) as an \emph{ordinary system}, and write \((X, =, \vartheta)\) for the discrete system with \(\vartheta : DX \to B_MDX\).
Since both \((X, \vartheta)\) and \((X, =, \vartheta)\) have \(X\) as its underlying set, two notions of behavioural equivalence are present for elements of \(X\): \emph{ordered behavioural equivalence} \(x =_\beh y\) from \cref{sec:ordered processes}, and \emph{ordinary behavioural equivalence} \(x \bisim y\), which holds if \(h(x)= h(y)\) for some \(h : (X, \vartheta) \to (Y, \delta)\).

\begin{restatable}{lemma}{ordinaryimpliesordered}\label{lem:ordinary implies ordered}
	Let \((X, \vartheta)\) be an ordinary process with states \(x\) and \(y\).
	If \(x \bisim y\), then \(x =_\beh y\).
\end{restatable}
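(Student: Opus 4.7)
The plan is to show that any ordinary coalgebra can be viewed as a monotone coalgebra on a discretely ordered state space, and that ordinary coalgebra homomorphisms become monotone ones under this identification. Once this is in place, the statement follows immediately from the definition of $=_\beh$.

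The key observation is a compatibility between $B_M$ and $B_N$ mediated by $D$. Since $D \dashv U$, the functor $D$ preserves coproducts; it also preserves the product $\Act \times X$ because $D\Act \times DX$ and $D(\Act \times X)$ have the same underlying set with the discrete order (using our assumption that $\le_\act$ is discrete, so $(\Act, \le_\act) = D\Act$, and similarly $(\Var,=) = D\Var$). Consequently, for any set $X$,
\[
  B_M(DX) \;=\; M\bigl(D\Var + D\Act \times DX\bigr) \;=\; MD(\Var + \Act \times X),
\]
so $U B_M(DX) = UMD(\Var+\Act\times X) = N(\Var + \Act\times X) = B_N X$. In other words, $B_M(DX)$ and $B_N X$ have literally the same underlying set.

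Next I would use this to lift structure. Given $\vartheta : X \to B_N X$, the identification above lets us read $\vartheta$ as a function $DX \to B_M(DX)$, and this function is automatically monotone because $DX$ is discretely ordered. So $(X,=,\vartheta)$ is a $B_M$-coalgebra. Likewise, any function $h : X \to Y$ is automatically monotone as $Dh : DX \to DY$, and the diagram $B_M(Dh)\circ \vartheta = \delta\circ Dh$ (in $\Pos$) is, under the above identification of underlying sets, the very same diagram as $B_N(h)\circ \vartheta = \delta \circ h$ (in $\Sets$). Hence every ordinary coalgebra homomorphism $h : (X,\vartheta)\to(Y,\delta)$ lifts to a monotone coalgebra homomorphism $Dh : (X,=,\vartheta)\to(Y,=,\delta)$.

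Finally, suppose $x \bisim y$. Then there exist an ordinary coalgebra $(Y, \delta)$ and a coalgebra homomorphism $h : (X, \vartheta) \to (Y, \delta)$ with $h(x) = h(y)$. By the lifting above, $Dh$ is a monotone coalgebra homomorphism between $B_M$-coalgebras, and the equality $Dh(x) = h(x) = h(y) = Dh(y)$ witnesses $x =_\beh y$ directly from the definition (taking both coalgebra homomorphisms into the common coalgebra to be $Dh$). I do not expect a real obstacle here: the only point that deserves care is verifying the identification $B_M D \cong DB_N$, which reduces to checking that $D$ preserves the finitary coproduct and the discrete product appearing in the definition of $B_M$, and this is straightforward from $D$ being left adjoint to $U$.
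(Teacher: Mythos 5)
Your proof is correct and follows essentially the same route as the paper's (much terser) argument: the paper simply observes that an ordinary coalgebra homomorphism $h$ with $h(x)=h(y)$ can be read as a monotone homomorphism $D(h) : (X,=,\vartheta)\to(Y,=,\delta)$, which witnesses $x =_\beh y$. You have merely spelled out the identification $UB_M D = B_N$ and the automatic monotonicity on discrete posets that the paper leaves implicit.
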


To illustrate the difference between these notions of behavioural equivalence, consider \cref{eg:labelled transition systems}, where ordered labelled transition systems are obtained by setting \(M = \A\) and (ordinary) labelled transition systems are obtained from \(N = \P\). 
For any \(\code a \in \Act\) and \(\code v \in \Var\), we have \(\code {a}0 + \code {aav} =_\beh \code {aa v}\), but it is not true that \(\code {a}0 + \code {aav} \bisim \code{aav}\).
Characterising the situations in which ordered behavioural equivalence implies ordiary behavioural equivalence is straightforward: this occurs if and only if the equations derived from \((S, \Ie)\) present \(N\) (in the sense of~\cite{bonchiSV2019tracesfor}). 
Recall that we say \(M\) \emph{lifts} \(N\) if \(UM = NU\). 

\begin{restatable}{theorem}{orderedimpliesordinary}\label{lem:ordered implies ordinary}
	Ordered behavioural equivalence coincides with ordinary behavioural equivalence if and only if \(M\) lifts \(N\). 
\end{restatable}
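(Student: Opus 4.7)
The preceding lemma already gives one half of the equivalence, so my plan is to route both remaining directions through a single comparison map between the final coalgebras $(Z_N,\omega)$ of $B_N$ and $(Z,\le_\beh,\zeta)$ of $B_M$. Since $\le_{\act}$ is discrete, $UB_M(X,=) = B_N X$ for every set $X$, so every ordinary coalgebra $(X,\vartheta)$ lifts to a discretely-ordered $B_M$-coalgebra $(X,=,\vartheta)$. In particular, $(Z_N,=,\omega)$ is a $B_M$-coalgebra, and finality of $(Z,\le_\beh,\zeta)$ yields a unique $B_M$-morphism $\iota:(Z_N,=,\omega)\to(Z,\le_\beh,\zeta)$. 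A short calculation, using that $B_M$ applied to maps between discrete posets agrees on underlying sets with $B_N$, shows that for every ordinary system $(X,\vartheta)$ the underlying map of the ordered behaviour map factors as $\iota$ composed with the ordinary behaviour map $X \to Z_N$. Instantiating this at $(Z_N,\omega)$, where the ordinary behaviour map is the identity, shows that the coincidence of $=_\beh$ with $\bisim$ on all ordinary systems is equivalent to $\iota$ being injective.

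For the ($\Leftarrow$) direction, I would use the adjunction $D\dashv U:\Pos\to\Sets$. The lifting hypothesis $UM = NU$ ensures that this adjunction lifts to coalgebras via $\bar D\dashv\bar U:\Coalg(B_N)\to\Coalg(B_M)$; the right adjoint $\bar U$ sends $(X,\le,\vartheta)$ to $(UX,U\vartheta)$, a well-defined $B_N$-coalgebra precisely because $UB_M = B_NU$. Since right adjoints preserve terminal objects, $\bar U(Z,\le_\beh,\zeta) = (Z_N,\omega)$, so $\iota$ is the identity on underlying sets and hence injective, giving coincidence.

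For ($\Rightarrow$) I would argue by contraposition. If $M$ does not lift $N$, the canonical surjection $NX \twoheadrightarrow UM(X,\le)$ induced by $(X,=)\to(X,\le)$ fails to be injective for some $(X,\le)$, so there exist $p,q\in S^*X$ with $[p]_\Ie \ne [q]_\Ie$ in $NX$ but $[p]_\Ie = [q]_\Ie$ in $UM(X,\le)$. I would then construct a discrete ordinary system whose carrier extends $X$ with two new initial states $s_p, s_q$ whose immediate transitions encode $p$ and $q$ (each occurrence of $x\in X$ mediated by a fixed dummy action so that the terms lie in $B_N$ of the carrier). Designing the transitions out of the $X$-states so that their ordered behaviours realise the order $\le$ inside $(Z,\le_\beh)$ then forces, by monotonicity of the $\Sigma_S$-operations on behaviours together with the derivation of $[p]_\Ie = [q]_\Ie$ in $UM(X,\le)$, the equality of the ordered behaviours of $s_p$ and $s_q$, while $[p]_\Ie \ne [q]_\Ie$ in $NX$ keeps their ordinary behaviours distinct; this contradicts the assumed coincidence. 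The main obstacle is this realisation step---showing that any finite poset $(X,\le)$ can be embedded as an ordered-behavioural order on the states of a discrete ordinary system---for which I would attach to each $x\in X$ transitions exposing the behaviours of all $y\le x$ via the theory's branching operations, appealing to the bottom element $0 = \lfp_x x$ and the monotonicity of $\Sigma_S$ on $(Z,\le_\beh)$, with the $\sqsubseteq_\Ie$-derivations witnessing $[p]_\Ie = [q]_\Ie$ in $UM(X,\le)$ providing the template for the required behavioural inequalities.
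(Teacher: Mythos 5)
Your reduction of the statement to injectivity of the comparison map $\iota : (Z_N, =, \omega) \to (Z, \le_\beh, \zeta)$ between the two final coalgebras is sound, and your ($\Leftarrow$) direction is correct but takes a genuinely different route from the paper. The paper argues elementwise: given $h : (X,=,\vartheta) \to (Y,\le,\delta)$ with $h(x)=h(y)$, it uses $UB_M = B_NU$ to check directly that $U(h)$ is a $B_N$-coalgebra homomorphism, whence $x \bisim y$. You instead pass through final coalgebras; this works, but the appeal to ``right adjoints preserve terminal objects'' presupposes that the lifted forgetful functor $\bar U : \Coalg(B_M) \to \Coalg(B_N)$ actually has a left adjoint, which you should either verify (the counit of $D \dashv U$ does lift) or bypass by checking directly that $\bar U(Z,\le_\beh,\zeta)$ is final for $B_N$: any $B_N$-homomorphism from an ordinary system into it is automatically a $B_M$-homomorphism out of the corresponding discrete system, which gives uniqueness. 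Either way that half of the theorem stands.

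The genuine gap is in your ($\Rightarrow$) direction, which is also where you diverge most from the paper. The paper constructs no counterexample: it takes an arbitrary witness $h$ of $x =_\beh y$ with $x \not\bisim y$, computes $UB_M(h)\circ\vartheta = U(\delta)\circ h$, and concludes $UB_M(h) \ne B_NU(h)$, so $M$ fails to lift $N$. Your contrapositive argument needs two things you do not supply. First, the realisation step: you must exhibit a \emph{discrete} ordinary system with states $t_x$ for $x \in X$ such that $x \le y$ implies $t_x \le_\beh t_y$. You concede this is the main obstacle, but the sketch (``expose the behaviours of all $y \le x$ via the branching operations'') is not a construction, and it is unclear it can be carried out for an arbitrary finite poset and an arbitrary theory in the scope of this theorem --- in particular, you lean on the bottom element $0 = \lfp_x x$, which the section explicitly does not assume to exist, since $(S,\Ie)$ need not be iterative here. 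Second, even granting the realisation, you need the ordinary behaviours of the $t_x$ to be pairwise distinct and sufficiently independent that $[p]_\Ie \ne [q]_\Ie$ in $NX$ survives the substitution into $NZ_N$; otherwise the inequation could become derivable after substitution, $s_p \bisim s_q$ would hold anyway, and no failure of coincidence results. Neither point is addressed, so this direction is not established.
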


In many cases, the behaviour order is contained in the \emph{simulation order} for ordinary processes, a coinductive predicate that witnesses a form of behavioural inclusion~\cite{hughesjacobs2004simulations}.
\begin{definition}
	Given ordered processes \((X, \le, \vartheta)\) and \((Y, \le, \delta)\), a relation \(R \subseteq X \times Y\) is called a \emph{simulation} if for any \((x, y) \in R\), there is a \(p(\vec{\code v}, \langle \code a_i, (x_i, y_i)\rangle_{i \le n}) \in B_MR\) such that 
\[
	\vartheta(x) \sqsubseteq_\Ie p(\vec{\code v}, \langle \code a_i, x_i\rangle_{i \le n})
	\qquad\text{and}\qquad 
	 p(\vec{\code v}, \langle \code a_i, y_i\rangle_{i \le n}) \sqsubseteq_\Ie \delta(y) 
\]
We write \(x \precsim y\) and say that \(x\) \emph{is similar to} \(y\) if \((x,y)\) is contained in a simulation, and that \(x\) and \(y\) are \emph{two-way similar} if \(x \precsim y\) and \(y \precsim x\). 
\end{definition}

For example, simulations between LTSs are concretely characterised as follows: a relation \(R \subseteq X \times Y\) between LTSs \((X, \vartheta)\) and \((Y, \delta)\) is a simulation if and only if for any \((x, y) \in R\), (i) if \(x \Rightarrow \code v\), then \(y \Rightarrow \code v\), and (ii) if \(x \tr{\code a} x'\), then \((\exists y')~(x',y') \in R\) and \(y \tr{\code a} y'\).

Ordinary behavioural equivalence is not implied by two-way similarity for LTSs~\cite{baetenweijland1990processalgebra}. 
As it so happens, the behaviour order \(\le_\beh\) on a discretely ordered LTS \((X, =, \vartheta)\) is a simulation on \((X, \vartheta)\), so provable inequivalence is sound with respect to similarity.
This is as far as the connection goes between ordered behavioural equivalence and two-way similarity, however: the processes denoted \(\code a_1\code a_2\code v\) and \(\code a_1(\code a_2 \code v+ \code a_3\code v)\) are two-way similar but not ordered behaviourally equivalent since \(\code a_2 \code v+ \code a_3\code v \le_\beh \code a_2 \code v\) does not hold. 
In sum, 
\begin{equation}\label{eq:imps for LTSs}
	{x \bisim y} 
	\stackrel{(\text a)}\implies {x =_\beh y} 
	\stackrel{(\text b)}\implies {x \sim y} 
\end{equation}
for LTSs, but neither converse holds in general.

By \cref{lem:ordinary implies ordered}, (\refeq{eq:imps for LTSs}.a) is always true. 
A characterisation of the situations in which (\refeq{eq:imps for LTSs}.b) holds can be given in terms of the inequational theory \((S, \Ie)\). 

\begin{definition}\label{def:weak coupling property}
	We say that \((S, \Ie)\) \emph{admits weak couplings} if for any \(h : (X, \le) \to (Y, \le)\) and \(p(\vec x), q(\vec y) \in M(X, \le)\) such that \(p(h(\vec x)) \sqsubseteq_\Ie q(h(\vec y))\), there is a term \(r((u_1,v_1), \dots, (u_n,v_n)) \in M(X,\le)^2\) such that \(p \sqsubseteq_\Ie r(\vec u)\) and \(r(\vec v) \sqsubseteq_\Ie q\) and \(h(u_i) \le h(v_i)\) for \(i=1,2\).
	In such a case, we call the term \(r\) a \emph{weak \(h\)-coupling}\footnote{We borrow the name from probability theory~\cite{thorisson1995coupling}.} of \(p\) with \(q\).
\end{definition}

Intuitively, an inequational theory admits weak couplings if every inequation introduced by some monotone substitution originates in a precongruence. 
At the level of processes, precongruences play the role of simulations.
This suggests that the admition of weak couplings implies that the behaviour order is a simulation.

\begin{restatable}{lemma}{behaviouralinequivalenceisasim}\label{lem:behavioural inequivalence is a simulation}
	Suppose that \((S, \Ie)\) admits weak couplings and let \((X, \vartheta)\) be a \(B_N\)-coalgebra. 
	Then \(\le_\beh\) is a simulation on \((X, \vartheta)\).
\end{restatable}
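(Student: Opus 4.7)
The plan is to show that the behaviour-order relation $R = \{(x, y) \in X \times X : x \le_\beh y\}$ itself satisfies the simulation condition on $(X, \vartheta)$. Since actions are discretely ordered throughout this section and $N = UMD$, the map $\vartheta : X \to B_N X$ is equivalently a $B_M$-coalgebra structure on the discrete poset $(X, =)$, whose unrolling is the behaviour map $h = \beh_\vartheta : (X, =, \vartheta) \to (Z, \le_\beh, \zeta)$. Fix $(x, y) \in R$, so that $h(x) \le_\beh h(y)$. Monotonicity of $\zeta$ together with the homomorphism identity $B_M(h) \circ \vartheta = \zeta \circ h$ will then yield
\[
B_M(h)(\vartheta(x)) \;=\; \zeta(h(x)) \;\le\; \zeta(h(y)) \;=\; B_M(h)(\vartheta(y)),
\]
and since the partial order on the free $(S, \Ie)$-algebra $B_M(Z, \le_\beh)$ is precisely $\sqsubseteq_\Ie$, this inequality reads $B_M(h)(\vartheta(x)) \sqsubseteq_\Ie B_M(h)(\vartheta(y))$.

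Next I would apply the weak coupling property to the monotone map $k = \id_\Var + \id_\Act \times h$ from the discrete poset $(\Var + \Act \times X, =)$ into $(\Var + \Act \times Z, {=} + {=} \times \le_\beh)$. Since $Mk$ agrees with $B_M(h)$ on the elements in question, \cref{def:weak coupling property} applies directly and produces a weak $k$-coupling $r((u_1, v_1), \dots, (u_n, v_n)) \in M((\Var + \Act \times X)^2)$ satisfying $\vartheta(x) \sqsubseteq_\Ie r(\vec u)$, $r(\vec v) \sqsubseteq_\Ie \vartheta(y)$, and $k(u_i) \le k(v_i)$ for every $i$.

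The main obstacle will be the final bookkeeping step that repackages $r$ as an element of $B_M R$. I would exploit discreteness of $\Var$ and $\Act$ to analyse the constraint $k(u_i) \le k(v_i)$: since elements of distinct summands of the coproduct are incomparable, each pair $(u_i, v_i)$ must lie jointly in $\Var$ or jointly in $\Act \times X$. In the former case, discreteness of $\Var$ forces $u_i = v_i$, a shared return variable. In the latter, writing $u_i = (a_i, x_i)$ and $v_i = (a'_i, y_i)$, discreteness of $\Act$ forces $a_i = a'_i$, and the inequality reduces to $h(x_i) \le_\beh h(y_i)$, i.e.\ $(x_i, y_i) \in R$. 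Consequently $r$ rewrites uniquely as a term $p(\vec{\code v}, \langle a_i, (x_i, y_i)\rangle_{i \le n}) \in B_M R$ whose projections along $\pi_1, \pi_2 : R \to X$ recover $r(\vec u)$ and $r(\vec v)$. Substituting this form into the two $\sqsubseteq_\Ie$-inequalities supplied by the weak coupling yields exactly the simulation condition for the pair $(x, y)$, completing the argument.
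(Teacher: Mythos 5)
Your proposal is correct and takes essentially the same route as the paper: derive $B_M(h)(\vartheta(x)) \sqsubseteq_\Ie B_M(h)(\vartheta(y))$ from monotonicity of the target structure map, invoke the weak-coupling property for the map $\Var + \Act \times h$, and read off the simulation witness from the resulting coupling term. The only difference is cosmetic — you fix $h$ to be the unrolling map rather than an arbitrary witnessing homomorphism, and you spell out the coproduct/discreteness bookkeeping that forces each coupled pair to share its return variable or action label, a step the paper's proof silently builds into its notation for the coupling term.
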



We also say that \((S, \Ie)\) \emph{admits couplings} if for any \(h : (X, \le) \to (Y, \le)\) and any \(p(\vec x),q(\vec y) \in M(X, \le)\) such that \(p(h(\vec x)) \equiv_\Ie q(h(\vec y))\), there is an \(h\)-coupling \(r\) of \(p\) and \(q\). 
The term \(r\) is called an \emph{\(h\)-coupling} in such a case. 

\begin{remark}\label{rem:admits couplings wpb}
	Admitting couplings is equivalent to \(M\) preserving weak pullbacks, a common assumption in the structure theory of coalgebra~\cite{gumm1998functorsforcoalgebras}. 
\end{remark}

Finally, we give a sufficient condition for the converse of (\refeq{eq:imps for LTSs}) to hold in general. 
We name the condition in opposition with the type of branching of LTSs, nondeterminism. 

\begin{definition}
	We say that \((S, \Ie)\) is \emph{deterministic} if for any \(h : (X, \le) \to (Y, \le)\) a pair \(p, q \in M(X, \le)\) admits an \(h\)-coupling if and only if there is a weak \(h\)-coupling of \(p\) with \(q\) and a weak \(h\)-coupling of \(q\) with \(p\).
	Otherwise, we say that \((S,\Ie)\) is \emph{nondeterministic}. 
\end{definition}

 
\begin{restatable}{theorem}{collapsetheorem}\label{thm:collapse}
	Suppose \((S, \Ie)\) is deterministic, admits weak couplings, and \(M\) lifts \(N\).
	If \(x\) and \(y\) are states of a \(B_N\)-coalgebra, then \(x \bisim y\) if and only if \(x =_\beh y\) if and only if \(x \sim y\). 
\end{restatable}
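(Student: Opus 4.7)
Two of the three equivalences follow more or less directly from the earlier results. Since $M$ lifts $N$, \cref{lem:ordered implies ordinary} gives $x \bisim y$ if and only if $x =_\beh y$. For $x =_\beh y \Rightarrow x \sim y$ (two-way similarity), the plan is to combine the earlier characterisation of $=_\beh$ as the intersection of $\le_\beh$ and $\ge_\beh$ with \cref{lem:behavioural inequivalence is a simulation}, which under the admits-weak-couplings hypothesis says that $\le_\beh$ is itself a simulation; so $x =_\beh y$ yields simulations in both directions, i.e.\ two-way similarity.

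The substantive direction is $x \sim y \Rightarrow x \bisim y$. The plan is to show that the relation $R = \{(a, b) \in X \times X : a \precsim b \text{ and } b \precsim a\}$ is a bisimulation on the $B_N$-coalgebra $(X, \vartheta)$. For each $(x, y) \in R$, take simulation witnesses $p \in B_M S_1$ for $x \precsim y$ and $q \in B_M S_2$ for $y \precsim x$. Transitively combining the four simulation inequations in $B_M DX$ yields in particular
\[
(M\pi_2)(p) \sqsubseteq_\Ie \vartheta y \sqsubseteq_\Ie (M\pi_1)(q)
\qquad \text{and} \qquad
(M\pi_2)(q) \sqsubseteq_\Ie \vartheta x \sqsubseteq_\Ie (M\pi_1)(p).
\]
Viewing $p$ and $q$ together as elements of $M(\Var + \Act \times (S_1 \sqcup S_2))$ via the coproduct inclusions, and letting $h$ be the map $\Var + \Act \times (S_1 \sqcup S_2) \to \Var + \Act \times X$ that applies $\pi_2$ on the $S_1$-summand and $\pi_1$ on the $S_2$-summand, these displayed inequations are weak $h$-couplings of $p$ with $q$ and of $q$ with $p$. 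The determinism hypothesis then upgrades them to a full $h$-coupling $r'$. Restricting the support of $r'$ to pairs that actually lie in $R$ (which I would justify by noting that in a full coupling each matched pair witnesses simulation in both directions, hence two-way similarity), I obtain an $r \in B_M DR$ whose two projections are $\equiv_\Ie$-equivalent to $\vartheta x$ and $\vartheta y$. Because $M$ lifts $N$, $r$ also names an element of $B_N R$, and the $\equiv_\Ie$-equivalences on $B_M DX$ collapse to actual equalities in $B_N X$, so $r$ witnesses $R$ as a bisimulation.

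The main technical obstacle is the penultimate step: ensuring that the coupling $r'$ extracted from determinism genuinely lives over pairs that are two-way similar, rather than over extraneous elements of $S_1 \sqcup S_2$. The cleanest workaround is to take $S_1 = S_2 = \precsim$ (the greatest simulation) from the outset and apply the construction relative to it; then the coupling automatically lands in $B_M D(\precsim \cap \precsim^{op}) = B_M DR$, and the bisimulation condition reduces to checking that the corresponding element of $B_N R$ has the correct projections, which follows from the defining equalities of a full $h$-coupling together with the $M$-lifts-$N$ identification $UMD = N$.
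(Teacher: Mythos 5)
Your reduction of the theorem to the single implication \(x \sim y \Rightarrow x \bisim y\) is correct and matches the paper: the equivalence \(x \bisim y \Leftrightarrow x =_\beh y\) is \cref{lem:ordered implies ordinary}, and \(x =_\beh y \Rightarrow x \sim y\) follows from \cref{lem:behavioural inequivalence is a simulation} applied to \(\le_\beh\) and its opposite. The gap is in the remaining implication, and it occurs earlier than the step you flag as the ``main technical obstacle.'' Your two displayed inequations are weak couplings relative to \emph{two different maps}: the first, \((M\pi_2)(p) \sqsubseteq_\Ie (M\pi_1)(q)\), compares the images of \(p\) and \(q\) under your \(h\) (\(\pi_2\) on the \(S_1\)-summand, \(\pi_1\) on the \(S_2\)-summand), but the second, \((M\pi_2)(q) \sqsubseteq_\Ie (M\pi_1)(p)\), compares their images under the \emph{swapped} map (\(\pi_1\) on \(S_1\), \(\pi_2\) on \(S_2\)); it does not give \(M(h)(q) \sqsubseteq_\Ie M(h)(p)\), and there is no way to derive that inequality from what the simulations provide. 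Determinism, as defined, turns a weak \(h\)-coupling of \(p\) with \(q\) \emph{together with} a weak \(h\)-coupling of \(q\) with \(p\) \emph{for the same \(h\)} into an \(h\)-coupling, so it cannot be invoked here. Taking \(S_1 = S_2 = {\precsim}\) does not repair this.

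There is a second, independent problem: you couple the simulation witnesses \(p\) and \(q\) rather than \(\vartheta(x)\) and \(\vartheta(y)\). A coupling \(r'\) of \(p\) with \(q\) has projections \(\equiv_\Ie\) to \(p\) and \(q\), but the simulation conditions relate these to \(\vartheta(x)\) and \(\vartheta(y)\) only by one-sided inequalities (\(\vartheta(x) \sqsubseteq_\Ie (M\pi_1)(p)\), not \(\equiv_\Ie\)), whereas a \(B_N\)-bisimulation needs its projections to equal \(\vartheta(x)\) and \(\vartheta(y)\) on the nose. The paper resolves both issues simultaneously: it takes \(\le\) to be the reflexive-transitive closure of \(R_1 \cup R_2\), forms the poset quotient \(h : (X, =) \to (X/{\le}, \le)\), and observes that \(h(u) = h(v)\) forces \(B_M(h)(\vartheta(u)) \equiv_\Ie B_M(h)(\vartheta(v))\); this yields weak \(h\)-couplings of \(\vartheta(u)\) with \(\vartheta(v)\) and of \(\vartheta(v)\) with \(\vartheta(u)\) for a \emph{single} \(h\), so determinism produces an \(h\)-coupling of \(\vartheta(u)\) and \(\vartheta(v)\) whose projections are \(\equiv_\Ie\) (hence, since \(M\) lifts \(N\) and \(X\) is discrete, equal) to \(\vartheta(u)\) and \(\vartheta(v)\). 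The kernel of \(h\) is then a bisimulation containing \((x,y)\). To rescue your argument, replace your map into the discrete poset \(\Var + \Act \times X\) by this quotient of \(X\) itself, and couple \(\vartheta(x)\) with \(\vartheta(y)\) instead of \(p\) with \(q\).
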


\begin{example}
	Both the inequational theories of guarded algebra and convex algebra are deterministic and admit weak couplings. 
	By \cref{thm:collapse}, two-way simulation and behavioural equivalence coincide for \acro{GKAT}-automata and ordered probabilistic processes.
	This relationship is well-known for probabilistic processes~\cite{desharnais1999logical}, but is perhaps interesting to derive from a universal algebra point of view. 
\end{example}

\begin{example}
	By \cref{thm:collapse}, either the inequational theory of semilattices with bottom fails to admit weak couplings or fails to be deterministic. 
	Weak couplings are easy to spot, however: if \(p = \downset\{x_1, \dots, x_n\}\) and \(q = \downset\{y_1, \dots, y_m\}\) in \(\A(X, \le)\) and \(h : (X, \le) \to (Y, \le)\) satisfies 
	\(
		\downset\{h(x_1), \dots, h(x_n)\} \subseteq \downset\{h(y_1), \dots, h(y_m)\}
	\)
	then let \((u_i, v_i)_{i \le k}\) denote all of the pairs \((x_i, y_j)\) such that \(h(x_i) \le h(y_j)\) for \(i \le n\) and \(j \le m\). 
	Then \(\downset\{(u_1,v_1), \dots, (u_n,v_n)\}\) is a weak \(h\)-coupling of \(p\) with \(q\). 
	Consequently, \((S_{\mathsf{sl}}, \theory{SL})\) is \emph{non}deterministic.
\end{example}

\section{Discussion}\label{sec:discussion} 
This paper is inspired by the body of work that begins with the process calculi of Milner~\cite{milner1980ccs} and Stark and Smolka~\cite{stark1999complete} and finds its way to general specification systems for coalgebras in \(\Sets\) pioneered by Silva~\cite{silva2010kleene,silvabonsanguerutten2010nondeterministic}, Myers~\cite{myers2009coalgebraic}, and Milius~\cite{milius2010streamcircuits}.
The processes parametrised framework~\cite{schmidrozowskisilvarot2022processes} is a generalisation of the calculus of Milner that is perpendicular to those offered by Silva, Myers, and Milius.
It aims to capture system types with algebraically characterised branching structures, instead of coalgebras with Kripke polynomial signatures, and only accounts for a simplification of Stark and Smolka's calculus.
The ordered processes parametrised framework is different: it generalises Stark and Smolka's calculus, captures a weakening of Milner's \acro{ARB}, and produces coalgebras and calculi in an entirely different category than \(\Sets\). 
These differences have interesting side-effects, like an intrinsic ordering of process terms that is contained in the similarity order~\cite{hughesjacobs2004simulations} on discrete systems.

We suspect that it is possible to obtain other intrinsic structures on discrete systems by defining processes parametrised-like frameworks in other categories.
One example is metric spaces:
branching fixed-points could be constructed in quantitative algebraic theories~\cite{mardarepanangadenplotkin2021quantitative} using the Banach fixed-point theorem~\cite{bloomesik1997fixedpoints,mardarepanangadenplotkin2021quantitative}.
Any setting in which iterativity is given in abstract algebraic terms should admit a parametrised family of process algebras similar to ours.
We intend to employ the theory of iterative algebras~\cite{bloomE1976iterative} for this purpose in the future.

The idea of relating coinductive properties to branching structure, as in \cref{sec:semantics}, also appears in the metric coalgebra literature, where metric bisimulation is related to behaviour~\cite{vanbreugelworrell2005pseudometric}.
However, the combinatorial study of these relationships with direct reference to the algebraic presentation of the branching monad has seen little attention outside of the recent work of Gumm~\cite{gumm2022freelattice}.
One interesting use case for the algebraic analysis of branching is the characterisation of similarity in \cref{sec:simulations}, which we expect can be generalised to other coinductive properties, such as metric bisimulation or trace equivalence~\cite{hasuoJS2007generic}. 

Lastly, we have several questions about star fragments and polystar fragments.
The most difficult question is whether the axioms in \cref{fig:axioms for stars} are complete for polystar fragments. 
Another question is for which inequational theories do the polystar and star fragments specify the same set of behaviours. 
This is the case for pointed convex algebra and guarded algebra, but is not the case for the theory of guarded convex algebra from \cref{eg:ProbGKAT}.
A study of the different expressiveness levels of fragments of \(\Exp\) would be interesting.

\printbibliography

\appendix


\section{Proofs from \cref{sec:ordered processes}}

\behequivisanequiv*

\begin{proof}
    If \(x_1 =_\beh x_2\), then \(x_1 \le_\beh x_2 \le_\beh x_1\), so we only prove the converse. 

    Conversely, it suffices to consider the case in which \(x_1,x_2 \in X\) for some \((X,\le, \beta)\).
    Let \(h_i : {(X, \le, \beta)} \to (Y_i, \le \vartheta_i)\) for \(i=1,2\) such that \(h_1(x_1) \le h_2(x_2)\) and \(h_2(x_2) \le h_2(x_1)\). 
    Let \(k_i : {(Y_i, \le, \vartheta_i)} \to (W, \le, \varphi)\), \(i=1,2\), be the pushout of \(h_1\) and \(h_2\), and observe that \[
        k_1h_1(x_1) \le k_1h_1(x_2) = k_2h_2(x_2) \le k_2h_2(x_1) = k_1h_1(x_1)
    \]
    It follows that \(k_1h_1\) identifies \(x_1\) with \(x_2\), so indeed \(x_1 =_\beh x_2\). 
\end{proof}

\section{Proofs from \cref{sec:inequational theories}}

\begin{lemma}
	Let \(S^*(V, \le) = (S^*V, \le_S)\) be as it is defined below \cref{rem:peculiar operation order}.
	Then \(\le_S\) is a partial order on \(S^*V\). 
\end{lemma}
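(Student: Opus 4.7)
The plan is to show reflexivity, transitivity, and antisymmetry of $\le_S$ on $S^*V$. Reflexivity and transitivity are essentially definitional if we read ``inferred from the $S$-term ordering rules'' as generating the smallest preorder on $S^*V$ that is closed under the two displayed rules and that extends $\le_V$ on the subset $V \subseteq S^*V$. If one prefers a direct check, reflexivity goes by structural induction: for a variable $v$, reflexivity of $\le_V$ gives $v \le_V v$ and hence $v \le_S v$; for $\sigma(\vec p)$, apply the first rule to reflexivity of each $p_i$. So the real work is antisymmetry.

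The core step is a structural inversion lemma: whenever $p \le_S q$, either both $p$ and $q$ are variables $v, v' \in V$ with $v \le_V v'$, or $p = \sigma_1(p_1, \dots, p_n)$ and $q = \sigma_2(q_1, \dots, q_n)$ for some common arity $n$, with $\sigma_1 \le_n \sigma_2$ and $p_i \le_S q_i$ for each $i$. I would prove this by induction on the length of a derivation of $p \le_S q$. The base cases are reflexivity and the axiom $v_1 \le_V v_2 \Rightarrow v_1 \le_S v_2$; the two inference rules are immediate; the transitivity step uses the induction hypothesis twice and combines the results either by transitivity of $\le_V$ (the variable case) or by transitivity of $\le_n$ on operation symbols together with the IH applied componentwise to subterms.

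Antisymmetry then follows by structural induction on terms. Suppose $p \le_S q$ and $q \le_S p$. By the inversion lemma, either both are variables $v, v' \in V$ with $v \le_V v'$ and $v' \le_V v$, giving $v = v'$ by antisymmetry of $\le_V$; or both are of the form $\sigma_1(\vec p)$ and $\sigma_2(\vec q)$ of the same arity $n$, with $\sigma_1 \le_n \sigma_2$, $\sigma_2 \le_n \sigma_1$, $p_i \le_S q_i$, and $q_i \le_S p_i$. Antisymmetry of $\le_n$ forces $\sigma_1 = \sigma_2$, and the inductive hypothesis applied componentwise gives $p_i = q_i$, so $p = q$.

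The main obstacle is the inversion lemma, in particular the transitivity case: one must ensure that the ``shape'' of a term (variable versus compound of a fixed arity) is preserved along any chain $p = r_0 \le_S r_1 \le_S \cdots \le_S r_k = q$ of primitive steps. Nothing in the two generating rules allows a variable to sit below a compound term or vice versa, but this needs to be checked carefully against the precise definition of $\le_S$. Once the inversion lemma is in hand, the deduction of antisymmetry is routine.
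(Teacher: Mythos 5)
Your proof is correct, but it is organised differently from the paper's. You factor the argument through an explicit inversion (shape-preservation) lemma --- proved by induction on the length of a single derivation of \(p \le_S q\) --- stating that the two sides are either both variables related by \(\le_V\) or both compound terms of the same arity with \(\sigma_1 \le_n \sigma_2\) and componentwise \(\le_S\)-related arguments; antisymmetry then falls out by structural induction on terms using antisymmetry of \(\le_V\) and of \(\le_n\). The paper instead runs a direct lexicographic double induction on the pair of lengths of the two derivations of \(p \le_S q\) and \(q \le_S p\), with a case analysis on the last rule used in each. The two approaches prove the same thing, but yours isolates the only delicate point (that no rule lets a variable sit below a compound term, or relate terms of different arities) into a reusable lemma, and it disposes of the transitivity case once rather than re-examining it in each combination of last rules; the paper's version avoids stating an auxiliary lemma at the cost of a somewhat more intricate well-foundedness bookkeeping. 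One small imprecision in your write-up: in the transitivity case of the inversion lemma, the componentwise conclusion \(p_i \le_S q_i\) does not need the induction hypothesis --- it is directly derivable by the transitivity rule from \(p_i \le_S r_i\) and \(r_i \le_S q_i\), both of which the two applications of the IH already hand you. This does not affect correctness.
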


\begin{proof}
	By induction on the pair of lengths of the proofs of \(p \le_S q\) and \(q \le_S p\), in the lexicographical order.
	From symmetry, we can assume the shorter of the proofs ends with \(p \le_S q\).
	If the last step in the proof of \(p \le_S q\) is 
	\[
		\prftree[r]{(\(\le_n\))}{\sigma_1 \le_n \sigma_2}{\sigma_1(\vec p) \le_S \sigma_2(\vec p)}
	\]
	then \(p = \sigma_1(\vec p)\) and \(\sigma_2(\vec p)\).
	It follows from \(q \le_s p\) that we also have \(\sigma_2 \le_n \sigma_1\), so in fact \(\sigma_1 = \sigma_2\).
	If the last step in the proof of \(q \le_S p\) is also (\(\le_n\)), then \(\sigma_1 = \sigma_2\), so in fact, \(p = q\).
	If the last step in the proof of \(q \le_S p\) is 
	\[
		\prftree[r]{(C)}{(\forall i)~p_i \le_S q_i}{\sigma(\vec p) \le_S \sigma(\vec q)}	
	\]
	then again, \(\sigma_1 = \sigma = \sigma_2\).
	If the last step is reflexivity, then also \(\sigma_1 = \sigma_2\).
	Lastly, if the last step in the proof of \(q\le_S p\) is
	\[
		\prftree[r]{(T)}{q \le_S r \qquad r \le_S p}{q \le_S p}	
	\]
	then also \(p \le_S r \le_S p\) and \(q \le_S r \le_S q\), so by induction \(p = r = q\). 

	Suppose the last step in the proof of \(p \le_S q\) is (C).
	Then \(p = \sigma(\vec p)\) and \(q = \sigma(\vec q)\).
	By symmetry we have already considered the case in which the last step in the proof of \(q \le_S p\) is (\(\le_n\)).
	If the last step in the proof of \(q \le_S p\) is also (C), then \(p_i = q_i\) for all \(i\) by induction, so indeed \(p = q\).
	If the last step in the proof of \(q \le_S p\) is (T), then again \(p = r = q\) by induction.
	
	Finally, if the last step in the proof of \(p \le_S q\) is (T), then again \(q \le_S p \le_S r \le_S q \le_S p\), so by induction \(p = r = q\).
\end{proof}

\begin{lemma}
	Let \(M\) be the free \((S, \Ie)\)-algebra construction as outlined in \cref{def:free SIalg}.
	Then for any \((S,\Ie)\)-algebra \((Y, \le, \alpha)\) and any \(h : (X, \le) \to (Y, \le)\), there is a unique \(S\)-algebra homomorphism \(h^\alpha : (M(X, \le), \rho) \to (Y, \le, \alpha)\) such that \(h^\alpha \circ \eta = h\).
\end{lemma}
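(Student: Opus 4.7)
The plan is to construct $h^\alpha$ by structural recursion on $S$-terms, then check well-definedness, monotonicity, homomorphism compatibility, and uniqueness in turn.

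First, define a set-theoretic map $\tilde h : S^*X \to Y$ recursively: on generators, $\tilde h(x) = h(x)$ for $x \in X$, and on compound terms, $\tilde h(\sigma(p_1, \dots, p_n)) = \alpha(\sigma, \tilde h(p_1), \dots, \tilde h(p_n))$. The definition is unambiguous because $S^*X$ is the free term algebra over $X$ in signature $S$. Showing $\tilde h$ descends to a well-defined monotone map $h^\alpha : M(X, \le) \to (Y, \le)$ by $h^\alpha([p]_\Ie) = \tilde h(p)$ reduces to proving $p \sqsubseteq_\Ie q \implies \tilde h(p) \le \tilde h(q)$, since then $p \equiv_\Ie q$ gives $\tilde h(p) = \tilde h(q)$.

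This is the main technical step, and I would establish it by induction on the length of the derivation of $p \sqsubseteq_\Ie q$. There are three kinds of base cases to check: (i) an instance of the congruence rule from the $S$-term ordering rules, which follows from the inductive hypothesis together with monotonicity of $\alpha$ in its data arguments; (ii) an instance of the operation-order rule $\sigma_1 \le_n \sigma_2 \vdash \sigma_1(\vec p) \sqsubseteq_\Ie \sigma_2(\vec p)$, which is handled by the monotonicity of $\alpha : S(Y, \le) \to (Y, \le)$ in the index poset $(I_n, \le_n)$, as ensured by \cref{rem:peculiar operation order}; and (iii) an instance of an axiom $(p(\vec v), q(\vec v)) \in \Ie$ applied to a tuple $[\vec r]_\Ie \in M(X, \le)^n$, which is precisely handled by the hypothesis that $(Y, \le, \alpha)$ \emph{satisfies} $\Ie$. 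Reflexivity and transitivity of $\sqsubseteq_\Ie$ carry over directly under $\tilde h$.

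Once $h^\alpha$ is known to be well-defined and monotone, the homomorphism equation $h^\alpha \circ \rho = \alpha \circ S(h^\alpha)$ is immediate from the recursive clause in the definition of $\tilde h$, and the commutation $h^\alpha \circ \eta = h$ holds by the generator clause. For uniqueness, suppose $k : (M(X, \le), \rho) \to (Y, \le, \alpha)$ is any $S$-algebra homomorphism with $k \circ \eta = h$. Then $k([x]_\Ie) = h(x) = h^\alpha([x]_\Ie)$ on generators, and the homomorphism equation forces $k([\sigma(\vec p)]_\Ie) = \alpha(\sigma, k([p_1]_\Ie), \dots, k([p_n]_\Ie))$, matching $h^\alpha$ by a routine induction on term complexity. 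The main obstacle is the three-case monotonicity argument, since it is the only place where all of the structural assumptions on $(Y, \le, \alpha)$ — monotonicity of $\alpha$, compatibility with the operation order, and satisfaction of $\Ie$ — need to be orchestrated simultaneously.
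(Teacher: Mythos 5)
Your proposal is correct and follows essentially the same route as the paper: define $h^\alpha$ by structural recursion on terms, observe that the homomorphism and factorisation equations hold by construction, and prove uniqueness by induction on term complexity. The only difference is that you carefully spell out the descent to the quotient and the three-case monotonicity check, which the paper's own (much terser) proof dismisses with ``clearly''; your version is a faithful elaboration, not a different argument.
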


\begin{proof}
	Let \((Y, \le, \alpha)\) be an \((S, \Ie)\)-algebra and \(h : (X, \le) \to (Y, \le)\). 
	Define \(h^\alpha([x]_\Ie) = h(x)\) for each \(x \in X\), and \(h^{\alpha}([\sigma(p_1, \dots, p_n)]_\Ie) = \alpha(\sigma, h^\alpha(p_1), \dots, h^\alpha(p_n))\).
	Then \(h^\alpha\) is clearly an \(S\)-algebra homomorphism satisfying \(h^\alpha \circ \eta = h\).
	If \(k : (M(X, \le), \rho) \to (Y, \le, \alpha)\) is another such homomorphism, one can verify that \(h^\alpha([p]_\Ie) = k([p]_{\Ie})\) by induction on \(p \in S^*(X, \le)\). 
\end{proof}

\lfpishonest*

\begin{proof}
	Define an operator \(\lambda_x p : M(X, \le) \to M(X,\le)\) for each \(p(x, \vec y) \in S^*X\) by setting \(\lambda_xp(q) = p(q, \vec y)\).
	Note that the operator \(\lambda_xp\) is monotone by the \(S\)-term ordering rules.
	We show that \(\lfp_x p\) is the least fixed-point of \(\lambda_x p\). 

	First let us check that \(\lfp_x p\) is a fixed-point. 
	By assumption, \(\lambda_x p(\lfp_xp) = p(\lfp_xp, \vec y) \sqsubseteq_\Ie \lfp_xp\).
	For the converse inequality, observe that \(\lambda_x p(\lfp_xp)\) is also a prefixed-point by monotonicity of \(\lambda_x p\). 
	Since \(\lfp_xp\) is the least such prefixed-point, \(\lfp_xp \sqsubseteq_\Ie \lambda_x p(\lfp_xp)\). 
	It follows that \(\lambda_x p(\lfp_xp) \equiv_{\Ie} \lfp_xp\).
	
	To see that it is the least solution, note that every fixed-point of \(\lambda_x p\) is a prefixed-point of \(\lambda_xp\).
	By assumption, \(\lfp_xp\) is below every other prefixed-point of \(\lambda_xp\). 
\end{proof}

\begin{lemma}\label{lem:lfp monotonicty}
	For any \(x \in X\), the map \(\lfp_x : M(X, \le) \to M(X, \le)\) is monotone. 
\end{lemma}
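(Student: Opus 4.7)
The plan is to show monotonicity of $\lfp_x$ by leveraging the characterisation of $\lfp_x p$ as the least prefixed point of the substitution operator $\lambda_x p : q \mapsto p(q, \vec y)$, already established in the previous lemma. The key auxiliary fact is that substitution in the first argument is itself monotone: if $p \sqsubseteq_\Ie q$ in $M(X, \le)$, then $\lambda_x p(r) \sqsubseteq_\Ie \lambda_x q(r)$ for every $r$.

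To justify this auxiliary fact, I would observe that fixing $r \in M(X, \le)$, the ``substitute $r$ for $x$'' map on $M(X, \le)$ is precisely the unique $S$-algebra homomorphism $M(X, \le) \to M(X, \le)$ induced, via the universal property $(*)$ of the free $(S, \Ie)$-algebra, by the map $X \to M(X, \le)$ sending $x$ to $r$ and each other $y \in X$ to $\eta(y)$. Being an $S$-algebra homomorphism between posets in $\Pos$, it is monotone by definition, so $p \sqsubseteq_\Ie q$ implies $p(r, \vec y) \sqsubseteq_\Ie q(r, \vec y)$.

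With this in hand, suppose $p \sqsubseteq_\Ie q$ in $M(X, \le)$. Apply the auxiliary fact with $r = \lfp_x q$ to obtain
\[
	\lambda_x p(\lfp_x q) \sqsubseteq_\Ie \lambda_x q(\lfp_x q) \equiv_\Ie \lfp_x q,
\]
where the second step uses the previous lemma identifying $\lfp_x q$ as a fixed point of $\lambda_x q$. Hence $\lfp_x q$ is itself a prefixed point of $\lambda_x p$, i.e. it lies in $\Pfp_x(p)$. Since $\lfp_x p$ is the $\sqsubseteq_\Ie$-least element of $\Pfp_x(p)$ by condition \textsf{1} of \cref{def:iterative}, we conclude $\lfp_x p \sqsubseteq_\Ie \lfp_x q$, which is exactly monotonicity of $\lfp_x$.

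The only real subtlety, and the main thing to be careful about, is the auxiliary fact on monotonicity of substitution. One could alternatively prove it by induction on the derivation of $p \sqsubseteq_\Ie q$ from the $S$-term ordering rules and $\Ie$, closing under the congruence rule, the $\le_n$-rule, reflexivity, transitivity, and the assumed inequations of $\Ie$ (which are preserved under substitution as they are schematic in the variables). The universal-property argument above is cleaner and fits naturally with the setup of \cref{def:free SIalg}, so I would prefer it in the final writeup.
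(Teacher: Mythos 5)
Your proof is correct and follows essentially the same route as the paper's: the paper shows \(p_1 \sqsubseteq_\Ie p_2\) implies \(\Pfp_x(p_2) \subseteq \Pfp_x(p_1)\) via exactly the chain \(p_1[q\sub x] \sqsubseteq_\Ie p_2[q\sub x] \sqsubseteq_\Ie q\) for \(q \in \Pfp_x(p_2)\) and then compares least elements, whereas you apply the same chain only to the single witness \(q = \lfp_x p_2\) --- a trivially equivalent specialisation. The only substantive divergence is how the auxiliary fact (monotonicity of \(p \mapsto p(r, \vec y)\) in \(p\)) is justified: the paper simply invokes it (``by incongruence''), while you propose deriving it from the universal property \((*)\). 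Be careful with that preferred route: \((*)\) yields a homomorphism only from a \emph{monotone} map \((X, \le) \to M(X, \le)\), and the map sending \(x\) to \(r\) and every other \(y\) to \(\eta(y)\) need not be monotone when \(x\) is order-comparable to other elements of \(X\) (compare the paper's own remark on why \((\Var, =)\) must be discrete: substituting into a variable that sits below another breaks monotonicity). Your fallback --- induction on the derivation of \(p \sqsubseteq_\Ie q\) --- is the safe justification and is what the paper's one-line appeal implicitly relies on; note that it, too, silently assumes the derivation makes no base-case use of order relations involving \(x\) itself, an implicit hypothesis shared by the paper.
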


\begin{proof}
	Let \(p_1, p_2 \in S^*(X, \le)\) and assume \(p_1 \sqsubseteq_\Ie p_2\). 
	We prove that \(\lfp_xp_1 \sqsubseteq_\Ie \lfp_xp_2\) by showing the inclusion \(\Pfp_x(p_2) \subseteq \Pfp_x(p_1)\). 
	Let \(q \in \Pfp_x(p_2)\).
	Then by incongruence,
	\[
	\lambda_xp_1(q) = p_1[q\sub x] \sqsubseteq_\Ie p_2[q\sub x] = \lambda_x p_2(q) \sqsubseteq_\Ie q
	\]		
	It follows that \(\lfp_xp_1 = \min \Pfp_x(p_1) \sqsubseteq_\Ie \min \Pfp_x(p_2) = \lfp_xp_2\).
\end{proof}

\section{On Substitution}

The first kind of substitution that appears is \defn{syntactic} substitution. 
Given two expressions \(e\) and \(f\) and a variable \(v\), we define the expression \(e[f\sub\code v]\) by induction on \(e\) as follows: For the basic constructions, 
\[
	\code u[f\sub\code v] = \begin{cases}
		f &u = v \\
		u &u \neq v
	\end{cases}
	\quad 
	(ae)[f\sub\code v] = a(e[f\sub\code v])
	\quad
	\sigma(e_1,\dots,e_n)[f\sub\code v] = \sigma(e_1[f\sub\code v], \dots, e_n[f\sub\code v])
\]
and 
\[
	(\beta \code u~e)[f\sub \code v] = \begin{cases}
		\beta\code u~e & \code u = \code v \\
		\beta\code u~e[f\sub \code v] & \code u \neq \code v
	\end{cases}
\]
but for the recursion case, we only let \((\mu\code u~e)[f\sub\code v]\) be well-defined if either \(\code u = \code v\), in which case \((\mu \code v~e)[f\sub\code v] = \mu \code v~e\) (because \(\var v\) is not free in \(\mu \code v~e\)), or \(\code u\) is not free in \(f\), in which case \((\mu\code u~e)[f\sub\code v] = \mu\code u~(e[f\sub\code v])\).
Thus, \([f\sub\code v]\) is a partial map \(\Exp \rightharpoonup \Exp\).

We similarly define \(e[f_1\sub\code v_1,\dots,f_n\sub\code v_n]\) for a distinct list of variables \(\code v_1,\dots,\code v_n\) to be the simultaneous substitution of \(f_i\) for \(\code v_i\), \(i=1,\dots,n\).
For this kind of substitution,
\[
\code u[f_1\sub\code v_1,\dots,f_n\sub\code v_n] = \begin{cases}
	f_i &\code u = \code v_i \\
	\code u &(\forall i\le n)~\code u \neq \code v_i
\end{cases}
\]
and if \(u = v_i\), then
\[
(\mu\code u~e)[f_1\sub\code v_1,\dots,f_n\sub\code v_n] = (\mu\code u~e)[f_1\sub\code v_1, \dots, f_{i-1}\sub\code v_{i-1}, f_{i+1}\sub\code v_{i+1},\dots, f_n\sub\code v_n]
\]
and otherwise, if \(u\) is not free in \(f_i\) for all \(i\le n\), then
\[
(\mu\code u~e)[f_1\sub\code v_1,\dots,f_n\sub\code v_n] = \mu\code u~(e[f_1\sub\code v_1,\dots,f_n\sub\code v_n])
\]
Again, \([f_1\sub\code v_1,\dots,f_n\sub\code v_n]\) defines a partial operation \(\Exp \rightharpoonup \Exp\). 

\begin{lemma}
	Let \(e, f \in \Exp\) and \(\code v \in V\).
	If no free variable of \(f\) is bound in \(e\), then \(e[f\sub\code v]\) is well-defined. 
\end{lemma}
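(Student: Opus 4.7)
The proof will proceed by a straightforward structural induction on the expression $e$, with the only nontrivial ingredient being a bookkeeping observation about how binders accumulate as we descend into subterms.

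\textbf{Setup.} Call a pair $(e, f)$ \emph{admissible} if no free variable of $f$ is bound in $e$. The key monotonicity observation is: if $e = \beta\code u~e'$ or $e = \mu\code u~e'$, then every variable bound in $e'$ is bound in $e$ (namely, by the same binder appearing in $e'$), and additionally $\code u$ itself is bound in $e$. Hence admissibility of $(e, f)$ implies admissibility of $(e', f)$, and in the binding cases it also implies that $\code u$ is not a free variable of $f$ whenever $\code u \neq \code v$. Similarly, admissibility of $(\sigma(e_1, \dots, e_n), f)$ implies admissibility of each $(e_i, f)$, and likewise for $(\code a e', f) \leadsto (e', f)$.

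\textbf{Induction.} The base cases $e = \code u$ (for any return variable $\code u$) are immediate from the definitional clause for variables, since that clause requires no side-condition. The cases $e = \code a e'$ and $e = \sigma(e_1, \dots, e_n)$ follow directly from the induction hypothesis applied to the smaller subterms, using the monotonicity observation above to preserve admissibility.

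\textbf{Binding cases.} Suppose $e = \mu\code u~e'$ and $(e, f)$ is admissible. If $\code u = \code v$, then the defining clause yields $(\mu\code v~e')[f\sub\code v] = \mu\code v~e'$ without any recursive invocation, so the result is well-defined. Otherwise $\code u \neq \code v$, and we must check the two conditions that make the second clause applicable: first, $\code u$ is bound in $e$, so by admissibility $\code u$ is not free in $f$; second, $(e', f)$ is admissible (by the monotonicity observation), so the induction hypothesis gives that $e'[f\sub\code v]$ is well-defined. Both conditions met, $(\mu\code u~e')[f\sub\code v] = \mu\code u~(e'[f\sub\code v])$ is well-defined. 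The case $e = \beta\code u~e'$ is handled identically (in fact it is slightly simpler, since the $\beta$-clause only needs $\code u \neq \code v$, not freshness of $\code u$ in $f$).

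There is no serious obstacle here; the only subtle point is to formulate the monotonicity observation about bound variables precisely enough to carry admissibility through the inductive descent, especially in the binding cases where one needs to remember that the binder $\code u$ itself adds to the set of bound variables of $e$.
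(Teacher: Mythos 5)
Your structural induction is correct, and it is worth noting that the paper itself does not prove this lemma at all --- it defers to an external reference --- so your write-up supplies the missing argument, which is surely the same induction the cited result performs. The one point you should make explicit is the reading of ``bound in \(e\)''. Your monotonicity observation (admissibility descends from \(e\) to its immediate subterms) is valid if ``\(\code w\) is bound in \(e\)'' means \(\code w \in \bv(e)\), i.e.\ some binder \(\beta\code w\) or \(\mu\code w\) occurs in \(e\); this is the reading the appendix later relies on (hypotheses of the form \(\bv(e) \cap \fv(g) = \emptyset\)). Under the occurrence-quantified definition given in the body of the paper (``every instance of \(\code w\) appears within the scope of a binder''), descent through \(\sigma(e_1,\dots,e_n)\) can fail: a variable may be bound in \(e_i\) in that sense while having an unbound occurrence in some \(e_j\), hence be free in \(\sigma(\vec e)\), so admissibility of \((\sigma(\vec e), f)\) would not transfer to \((e_i, f)\) --- and indeed the lemma would then be false as stated, e.g.\ for \(e = \sigma(\code u, \mu\code u~\code a\code u)\) and \(f = \code u\) with \(\code v \neq \code u\). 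With ``bound in \(e\)'' read as membership in \(\bv(e)\), the rest of your argument --- the only source of partiality being the side condition on the \(\mu\code u\) clause, discharged because \(\code u \in \bv(\mu\code u~e')\) forces \(\code u \notin \fv(f)\) --- goes through exactly as written.
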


\begin{proof}
	See, for example, \cite[Lemma C.8]{full}.
\end{proof}

\begin{lemma}
	The syntactic substitution operators \((-)[-\sub \code v]\) and \((-)[-\gdsub \code v]\) are monotone with respect to \(\le_{\exp}\) in both arguments. 
\end{lemma}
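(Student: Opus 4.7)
The plan is to establish each substitution's monotonicity in each argument separately, with monotonicity in the second argument proved by structural induction on the expression being substituted \emph{into}, and monotonicity in the first argument by induction on the derivation of the ordering hypothesis.

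For monotonicity of $(-)[-\sub\code v]$ in the second argument, fix $e \in \Exp$ and $\code v \in \Var$, and suppose $f_1 \le_{\exp} f_2$ with neither $f_1$ nor $f_2$ containing any variable bound in $e$ (so the preceding well-definedness lemma guarantees both substitutions exist). Proceed by induction on the structure of $e$. If $e = \code v$, the claim reduces to $f_1 \le_{\exp} f_2$. If $e = \code u \neq \code v$, both sides equal $\code u$. If $e = \sigma(e_1, \dots, e_n)$, apply the inductive hypothesis to each $e_i$ and then invoke the $S$-term congruence rule. The prefix case $e = \code a e'$ and the recursion cases $e = \beta\code u~e'$, $e = \mu\code u~e'$ (with $\code u \neq \code v$) each follow from the appropriate congruence rule among the term ordering rules, after applying the induction hypothesis to $e'$.

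For monotonicity in the first argument, fix $f \in \Exp$ and $\code v \in \Var$ and suppose $e_1 \le_{\exp} e_2$. Induct on the derivation of $e_1 \le_{\exp} e_2$. Reflexivity and transitivity are immediate. For each congruence rule among the term ordering rules---congruence under $\sigma$, $\code a(-)$, $\beta\code u$, $\mu\code u$---the substitution distributes into subterm positions (possibly requiring $\alpha$-renaming of bound variables to avoid capture of $\code v$ or of the free variables of $f$), so the induction hypothesis produces the appropriately ordered substituted subterms, and the same congruence rule reapplies. The action rule based on $\code a_1 \le_{\act} \code a_2$ is preserved because substitution does not touch the action label. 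The $S$-term rule $\sigma_1 \le_n \sigma_2 \Rightarrow \sigma_1(\vec e) \le \sigma_2(\vec e)$ is preserved since substitution commutes with $\sigma$.

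For the guarded variant $(-)[-\gdsub\code v]$, the two inductions proceed identically; the only case requiring extra attention is the action prefix, where by definition $(\code a e')[g\gdsub\code v] = \code a(e'[g\sub\code v])$ defers to ordinary (non-guarded) substitution. Monotonicity of $(-)[-\sub\code v]$ in both arguments, just established, is therefore invoked to finish that case. The main obstacle throughout is the capture-avoidance bookkeeping at the $\beta$ and $\mu$ cases: one must ensure that $\alpha$-renaming of bound variables preserves $\le_{\exp}$, which is immediate from the congruence rules for $\beta\code u$ and $\mu\code u$ together with the discreteness of $\Var$. Since the statement of the lemma already presupposes well-definedness of the substitutions involved, this is a routine hygiene check rather than a substantive difficulty.
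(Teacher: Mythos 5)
Your proof is correct and follows essentially the same route as the paper's: monotonicity in the substituted-in argument by structural induction on the host expression, monotonicity in the host argument by induction on the derivation of \(e_1 \le_{\exp} e_2\), and the guarded variant by reducing its action-prefix case to ordinary substitution. The only cosmetic difference is that the paper never \(\alpha\)-renames, instead keeping substitution partial (defined only when no free variable of the substituted term is bound in the host), which is precisely the hygiene hypothesis you already presuppose.
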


\begin{proof}
	We argue monotonicity of the first syntactic  substitution operator, since monotonicity of guarded syntactic substituton is argued similarly. 
 
	Let \(\preceq\) be any precongruence on \((\Exp, \le_{\exp}, \ev)\). 
	Let \(e \in \Exp\) and \(\code v \in \Var\).
	Monotonicity of the map \(e[-\sub\code v]\) is a direct consequence of the precongruence conditions. 
	Given \(g_1 \preceq g_2\) such that no free variable of either \(g_i\) is bound in \(e\), we show that \(e[g_1\sub\code v] \preceq e[g_1\sub\code v]\) by induction on \(e\). 
	For the base cases, 
	\begin{itemize}
		\item \(\code u[g_1\sub\code v] = \code u = \code u[g_2\sub\code v]\) and similarly for \(0\) in place of \(\code u\), and any expression of the form \(\beta\code v~e\) or \(\mu\code v~e\) in place of \(\code u\), and
		\item \(\code v[g_1\sub \code v] = g_1 \preceq g_2 = \code v[g_2\sub\code v]\).
	\end{itemize}
	For the inductive step,
	\begin{itemize}
		\item \((\code a e)[g_1\sub\code v] = \code a e[g_1\sub\code v] \preceq \code ae[g_2\sub\code v] = (\code ae)[g_2\sub\code v]\),
		\item \(\sigma(\vec e)[g_1\sub\code v] = \sigma((e_i[g_1\sub\code v])_{i \le n}) \preceq \sigma((e_i[g_2\sub\code v])_{i \le n}) = \sigma(\vec e)[g_2\sub\code v]\),
	\end{itemize}
	and similarly for \(\beta\code u~e\) and \(\mu\code u~e\). 

	Monotonicity in the first argument is similarly straightforward but requires discreteness of the poset \((\Var, =)\). 
	Let \(e_1 \le_{\exp} e_2\) and \(g \in \Exp\) such that no free variable of \(g\) appears bound in either \(e_1\) or \(e_2\). 
	The proof proceeds by induction on the deduction of \(e_1 \le_{\exp e_2}\) from the term ordering rules. 
	We only cover the base cases, since the inductive steps follow from the definition of \([g\sub\code v]\) as a (partial) \(\Sigma_M\)-algebra homomorphism. 
	
	If the deduction is 
	\[
		\inferrule{\code a_1 \le_{\act} \code a_2}{\code a_1e \le_{\exp} \code a_2e}
	\] 
	then we find 
	\[
		(\code a_1 e)[g\sub\code v]
		= \code a_1 e[g\sub\code v]	
		\le_{\exp} \code a_2 e[g\sub\code v]	
		= (\code a_2 e)[g\sub\code v]	
	\]
	because the inequality above is also an instance of the rule:
	\[
		\inferrule{\code a_1 \le_{\act} \code a_2}{\code a_1e[g\sub\code v] \le_{\exp} \code a_2e[g\sub\code v]}
	\] 
	Similarly for the inference rule
	\[
		\inferrule{\sigma_1 \le_n \sigma_2}{\sigma_1(\vec e) \le_{\exp} \sigma_2(\vec e)}	\qedhere
	\]
\end{proof}

\begin{remark}
	The issue with having a relation such as \(\code u < \code v\) in \((\Var, \le_{\var})\), say, is that \(\code u = \code u[0\sub \code v]\) while \(\code v[0 \sub \code v] = 0\). 
	That is, even though \(\code u < \code v\), \(\code v[0\sub \code v] < \code u[0 \sub \code v]\), thus \([0\sub\code v]\) fails to be monotone. 
	This is why we assume \((\Var, =)\) is discrete. 
\end{remark}

\paragraph*{Semantic substitution.}
Unlike its syntactic relative, \(\{s\sub\code v\}\) is a total function (and consequently so is \(\{s\gdsub\code v\}\)).
Both are automatically monotone by construction:
formally, semantic substitution is defined as a final coalgebra homomorphism.
Consider the coalgebra 
\[
	[\zeta , \zeta'] : (Z, \le) + (Z\times\{\bullet\}, \le) \longrightarrow B_M((Z, \le) + (Z, \le))	
\]
where on the second component,
\[
	\zeta'(t, \bullet) 
	= \begin{cases}
		\code u &\zeta(t) = \code u \neq \code v \\
		\zeta(s) &\zeta(t) = \code v \\
		\langle \code a, (t', \bullet)\rangle &\zeta(t) = \langle\code a, t'\rangle \\
		\sigma(\zeta'(t_1, \bullet), \dots, \zeta'(t_n, \bullet)) & \zeta(t) = \sigma(\zeta(t_1), \dots, \zeta(t_n))
	\end{cases}
\]
We need to check that \(\zeta'\) is monotone. 
To this end, let \(t \le_\beh t'\). 
Then \(\zeta(t) \sqsubseteq_{\Ie} \zeta(t')\), by monotonicity of \(\zeta\). 
We show monotonicity by induction on the proof of \(\zeta(t) \sqsubseteq_{\Ie} \zeta(t')\). 
For the base cases,
\begin{itemize}
	\item if \(\zeta(t) = \code u\) and \(\zeta(t') = \code u'\), then \(\code u = \code u'\) by discreteness of \(\Var\). 
	In this case, \(\zeta'(t, \bullet) = \zeta'(t', \bullet)\). 
	\item If \(\zeta(t) = \langle a_1, t_0\rangle\) and \(\zeta(t') = \langle a_2, t_0\rangle\) and \(\code a_1 \le_{\act} \code a_2\), then \(\zeta'(t, \bullet) = \langle \code a_1, (t_0, \bullet)\rangle \sqsubseteq_\Ie \langle \code a_2, (t_0, \bullet)\rangle \zeta'(t', \bullet)\).
	\item If \(\zeta(t) = p(\vec{\code v}, \langle\code a_i, t_i\rangle_{i \le n})\) and \(\zeta(t') = q(\vec{\code u}, \langle\code a_i, t_i'\rangle_{i \le n})\) and \((p, q) \in \Ie\), then \[
		\zeta'(t, \bullet) = p(\vec{\code v}, \langle\code a_i, (t_i, \bullet)\rangle_{i \le n})
	\] 
	and \(\zeta'(t', \bullet) = q(\vec{\code u}, \langle\code a_i, (t_i', \bullet)\rangle_{i \le n})\) and therefore \(\zeta'(t,\bullet) \sqsubseteq_\Ie \zeta'(t', \bullet)\) as well. 
\end{itemize}
The inductive step follows from the precongruence of \(\sqsubseteq_\Ie\). 

Thus, we obtain \[
	\beh_{[\zeta, \zeta']} : ((Z, \le) + (Z\times \{\bullet\}, \le), [\zeta, \zeta']) \to (Z, \le_\beh, \zeta)
\]
and define \(\{s \sub \code v\}\) by
\[\begin{tikzcd}[column sep=2cm]
	(Z, \le_\beh) 
	\ar[r, dashed, "\{s\sub\code v\}"]
	\ar[d, "\cong"']
	&
	(Z, \le_\beh)
	\\
	(Z, \le_\beh)\times\{\bullet\}
	\ar[r, "in_r"]
	&
	(Z, \le_\beh) + (Z, \le_\beh)\times\{\bullet\} 
	\ar[u, "\beh_{[\zeta, \zeta']}"']
\end{tikzcd}\]
Guarded syntactic substitution is defined in terms of \(\{s\sub\code v\}\): given \(p(\vec{\code v}, \langle \code a_i, t_i\rangle_{i \le n}) \in {B_M(Z, \le_\beh)}\),
\[
	p(\vec{\code v}, \langle \code a_i, t_i\rangle_{i \le n})\{s \gdsub \code v\} 
	= p(\vec {\code v}, \langle \code a_i, t_i\{s \sub \code v\}\rangle_{i \le n})
\]
Despite their differences, however, behavioural substitution enjoys many of the important properties of syntactic substitution. 
The following recreation of~\cite[Theorem C.6]{schmidrozowskisilvarot2022processes} provides a simplified coinductive principle with which we can prove the desirable properties for our processes.

\begin{theorem}\label{thm:bisimilar maps}
	Let \(h,k : (Z,\le_\beh) \to (Z,\le_\beh)\).
	If \(h\) and \(k\) satisfy (i)-(iii) below, then \(h(t) = k(t)\) for all \(t \in Z\).
	\begin{enumerate}
		\item[(i)] if \(\zeta(t) \in \Var\), then \(h(t) = k(t)\);
		\item[(ii)] if \(\zeta(t) = \langle \code a, r\rangle \in \Act\times Z\), then \(\zeta(h(t)) = \langle\code a,h(r)\rangle\) and \(\zeta(k(t)) = \langle \code a,k(r)\rangle\); and
		\item[(iii)] if \(\zeta(t) = \sigma(\zeta(t_1), \dots, \zeta(t_n))\), then 
		\[
		\zeta(h(t)) = \sigma(\zeta(h(t_1)), \dots, \zeta(h(t_n)))
		\quad
		\text{and}
		\quad
		\zeta(k(t)) = \sigma(\zeta(k(t_1)), \dots, \zeta(k(t_n)))
		\] 
	\end{enumerate}
\end{theorem}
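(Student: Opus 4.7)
The plan is to invoke the coinduction principle for the final $B_M$-coalgebra $(Z, \le_\beh, \zeta)$: two coalgebra homomorphisms into a final coalgebra with a common domain must coincide. I would introduce the ``graph-of-pairs'' relation $R = \Delta_Z \cup \{(h(t), k(t)) \mid t \in Z\} \subseteq Z \times Z$, equipped with the partial order inherited from $Z \times Z$. The goal is to exhibit a coalgebra structure $\omega : (R, \le) \to B_M(R, \le)$ making the two projections $\pi_1, \pi_2 : (R, \le, \omega) \to (Z, \le_\beh, \zeta)$ into monotone coalgebra homomorphisms. Finality would then force $\pi_1 = \pi_2$, collapsing $R$ onto the diagonal and yielding $h(t) = k(t)$ for every $t \in Z$.

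For the witness $\omega(s_1, s_2) \in B_M R$ I proceed by cases. For a diagonal pair $(s,s) \in \Delta_Z$ I take the ``diagonal lift'' of $\zeta(s)$: the element of $B_M R$ obtained by replacing each generator $\langle a, r \rangle$ inside $\zeta(s)$ with $\langle a, (r,r) \rangle$ and leaving return variables unchanged. For a non-diagonal pair $(h(t), k(t))$, condition (i) rules out $\zeta(t) \in \Var$, so (ii) or (iii) applies at $t$. Iterating these conditions downward through the $S$-term structure and falling through to (i) at variable-valued substates shows that $\zeta(h(t))$ and $\zeta(k(t))$ share a common outer $S$-term skeleton whose leaves are either ``paired'' positions $\langle a, h(r) \rangle$ versus $\langle a, k(r) \rangle$ (arising from (ii)) or ``matched'' subterms $\zeta(h(t')) = \zeta(k(t'))$ (arising from (i) applied at deeper substates). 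The witness is then built along this common skeleton, inserting $\langle a, (h(r), k(r)) \rangle$ at positions of the first kind and the diagonal lift of $\zeta(h(t'))$ at positions of the second. By construction, $B_M(\pi_1)(\omega(h(t), k(t))) = \zeta(h(t))$ and $B_M(\pi_2)(\omega(h(t), k(t))) = \zeta(k(t))$.

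The main technical obstacle is verifying that $\omega$ is a well-defined monotone map. Well-definedness on non-diagonal pairs requires choosing a witnessing $t$; two choices may yield different $B_M R$ elements, but both project correctly to $\zeta(h(t))$ and $\zeta(k(t))$, so any choice suffices for the coalgebra-homomorphism equations. Monotonicity of the skeleton lift then reduces to monotonicity of $h$, $k$, and $\zeta$ and the compositional, structure-preserving nature of the construction. Once $\omega$ is in place and the projection identities are verified, finality of $(Z, \le_\beh, \zeta)$ delivers the conclusion, transporting the argument of~\cite[Theorem C.6]{schmidrozowskisilvarot2022processes} into the ordered setting.
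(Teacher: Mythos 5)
Your proposal matches the paper's proof: the same relation \(R = \Delta_Z \cup \{(h(t),k(t)) \mid t \in Z\}\) is equipped with a \(B_M\)-coalgebra structure via the diagonal lift on \(\Delta_Z\) and the paired-skeleton witness on the remaining pairs, and finality of \((Z,\le_\beh,\zeta)\) forces the two projections to coincide. The one simplification the paper makes is to put the \emph{discrete} order on \(R\) rather than the order inherited from \(Z\times Z\), so the monotonicity verification you flag as the main technical obstacle disappears: any map out of a discrete poset is monotone, and this costs nothing since only the set-level identity \(\pi_1=\pi_2\) is needed.
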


\begin{proof}
	We proceed with a proof by \emph{coinduction}. 
	Namely, we will give the relation\[
	R := \Delta_Z \cup \{(h(r), k(r)) \mid r \in Z\}
	\]
	a \(B_M\)-coalgebra structure \(\rho : (R, =) \to B_M(R, =)\) such that the projections \(\pi_i : (R, =) \to (Z, \le_\beh)\), \(i = 1,2\), are coalgebra homomorphisms. 
	By finality of \((Z,\le_\beh,\zeta)\), this implies that \(\pi_1 = {!}_{\rho} = \pi_2\), which establishes the identity we are hoping to prove.
	
	Let \(t \in Z\) and \(
	\zeta(t) = p(\code v_1,\dots,\code v_m, \langle\code a_i,s_i\rangle_{i \le n})
	\)
	for some \(p \in B_M(Z, \le_\beh)\).
	Along the diagonal, define
	\[
	\rho(t,t) = p(\code v_1,\dots,\code v_m, \langle\code a_i,(s_i, s_i)\rangle_{i \le n})
	\]
	For the other pairs, assume \(\code v_i = \zeta(h(t_i)) = \zeta(k(t_i))\) for \(i = 1,\dots, m\) using (i) and write
	\[
		\rho(h(t), k(t)) = p(\code v_1, \dots, \code v_m, \langle\code a_i, (h(s_i), k(s_i))\rangle_{i \le n})
	\]   
	using (ii) and (iii).

	To see that \(\pi_i\) is a coalgebra homomorphism for each \(i = 1,2\), observe
	\begin{align*}
		B_M(\pi_1)(\rho(h(t),k(t)))
		&= B_M(\pi_1)(p(\code v_1, \dots, \code v_n, \langle\code a_i, (h(s_i), k(s_i))\rangle_{i \le n})) \\
		&= p(\code v_1, \dots, \code v_n, \langle\code a_i, h(s_i)\rangle_{i \le n}) \\
		&= \zeta \circ \pi_1(h(r), k(r))
	\end{align*}
	and similarly for \(\pi_2\).
\end{proof} 

\begin{definition}
	A return variable \(\code u\) is said to be \defn{dead} in \(t\) if for any \(s \in Z\), \(t\{s\sub \code u\} = t\).
\end{definition}

\begin{lemma}
	If \(\code v\) is dead in \(t\), then \(\lfp_{\code v} \zeta(t) = \zeta(t)\).
\end{lemma}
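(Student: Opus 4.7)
My strategy is to show that $\zeta(t)[q\sub\code v] = \zeta(t)$ for every $q \in B_M Z$. Once this is established, the monotone map $\lambda_\code v \zeta(t) : q \mapsto \zeta(t)[q\sub\code v]$ is constantly equal to $\zeta(t)$, so its unique (and therefore least) fixed point is $\zeta(t)$ itself, giving $\lfp_\code v \zeta(t) = \zeta(t)$ as desired.

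Since $(Z, \le_\beh, \zeta)$ is a final $B_M$-coalgebra, Lambek's lemma provides that $\zeta$ is an isomorphism, so every $q \in B_M Z$ can be written uniquely as $q = \zeta(s)$ for some $s \in Z$. The target reduces to showing $\zeta(t)[\zeta(s)\sub\code v] = \zeta(t)$ for every $s$. Writing $\zeta(t) = p(\code v, \vec{\code u}, \langle \code a_i, t_i\rangle_{i\le n})$ and unfolding the behavioural differential equation for semantic substitution yields $\zeta(t\{s\sub\code v\}) = p(\zeta(s), \vec{\code u}, \langle \code a_i, t_i\{s\sub\code v\}\rangle_{i\le n})$, while the ``outer'' substitution on $M$-terms gives $\zeta(t)[\zeta(s)\sub\code v] = p(\zeta(s), \vec{\code u}, \langle \code a_i, t_i\rangle_{i\le n})$. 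The deadness hypothesis forces $\zeta(t\{s\sub\code v\}) = \zeta(t)$, so closing the gap reduces to identifying the two $p(\zeta(s),\dots)$ expressions.

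The bridging step is a propagation claim: if $\code v$ is dead in $t$, then $\code v$ is dead in every behaviour reachable from $t$. Granted this, $t_i\{s\sub\code v\} = t_i$ for each $i$, and the two $p$-terms coincide on the nose. I plan to prove propagation by coinduction, packaged through the theorem on bisimilar maps (stated earlier in the appendix) applied to the pair $h(r) = r$ and $k(r) = r\{s\sub\code v\}$, restricted to the subcoalgebra of behaviours reachable from $t$. The three conditions (i)--(iii) of that theorem unfold to consequences of the behavioural differential equations for $\{s\sub\code v\}$: condition (i) rules out reachable states with $\zeta(r) = \code v$ (such a state would immediately violate deadness by exhibiting an $s$ for which $r\{s\sub\code v\} \neq r$), while (ii) and (iii) are direct consequences of the recursive form of the BDE together with the fact that the subcoalgebra of reachable states is closed under the components of $\zeta$.

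The main obstacle is the propagation claim. Deadness is a semantic, global property of $t$, and extracting local information about successors relies on carefully combining finality of $(Z, \le_\beh, \zeta)$ with the structural form of the BDE. I expect the detail-intensive part to be verifying that no reachable state in the subcoalgebra generated by $t$ outputs $\code v$; this reverses the usual reading of deadness and needs the BDE at each layer, but once this invariant is in hand the remaining conditions for the bisimilar-maps theorem follow mechanically, and the lemma falls out.
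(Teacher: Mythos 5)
Your strategy is sound and, despite the different packaging, it bottoms out in exactly the same place as the paper's proof. The paper argues that deadness of \(\code v\) in \(t\) lets one choose a representative \(p(\vec{\code u}, \langle\code a_i, t_i\rangle_{i\le n})\) of \(\zeta(t)\) with \(\code v \notin \vec{\code u}\) and with \(\code v\) dead in each \(t_i\); the lemma then follows because \(\lfp_{\code v}\) of a term in which the variable \(\code v\) does not occur is that term itself. You instead show that the substitution map \(q \mapsto \zeta(t)[q\sub\code v]\) is constant and invoke Lambek's lemma; the two are interderivable (constancy yields the \(\code v\)-free representative \(\zeta(t)[0\sub\code v]\), and conversely a \(\code v\)-free representative makes substitution trivially constant), and your closing fixed-point step is fine. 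In both versions the entire content is the propagation claim: deadness of \(\code v\) at \(t\) must be inherited by the successors \(t_i\), equivalently \(\code v\) must be eliminable from the whole unfolding of \(t\). The paper dispatches this with ``by induction on \(\zeta(t)\)''; you propose a coinduction through the bisimilar-maps theorem. So you are doing neither more nor less than the paper; the question is whether your sketch of propagation holds up.

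That is where the one genuine weak point sits. You justify condition (i) of the bisimilar-maps theorem by saying that a reachable \(r\) with \(\zeta(r) = \code v\) ``would immediately violate deadness'' of \(\code v\) in \(t\). But deadness is a property of the \(\equiv_\Ie\)-class of the whole term, and a perturbation at a successor need not be visible at the root: the inequational theory can absorb it (in ordered convex algebra \(x \oplus_0 y = y\), and in the downset theory a modified branch that remains \(\le_\beh\) a sibling disappears into the downset). So ``\(r\{s\sub\code v\} \neq r\) for some reachable \(r\)'' does not immediately give ``\(t\{s\sub\code v\} \neq t\)''; one must argue that the occurrence of \(r\) survives in every representative of \(\zeta(t)\), which is exactly the induction on the term structure that the paper gestures at. Relatedly, the bisimilar-maps theorem is stated for endomaps of all of \(Z\), and with \(k = \{s\sub\code v\}\) condition (i) fails globally at the state whose unfolding is \(\code v\), so you genuinely need the restricted subcoalgebra variant you allude to, and you need the no-\(\code v\)-in-the-unfolding invariant in hand before you can apply it. None of this makes your route wrong --- the claim is true and the paper's own treatment is no more detailed --- but ``immediately violates deadness'' is the step where the real work hides.
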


\begin{proof}
	If \(\code v\) is dead in \(t\), then a representative of \(\zeta(t)\) can be chosen of the form
	\[
	p(\vec{\code u}, \langle\code a_i, t_i\rangle_{i\le n})
	\]
	with \(\code v \notin \vec{\code u}\) and such that \(\code v\) is dead in each \(t_i\) (this can be shown by induction on \(\zeta(t)\)). 
	Thus, if \(\code v\) is dead in \(t\), then
	\[
	\lfp_{\code v} \zeta(t) 
	= p(\vec{\code u}, \langle\code a_i, t_i\{s\sub\code v\}\rangle_{i\le n})
	= p(\vec{\code u}, \langle\code a_i, t_i\rangle_{i\le n}) = \zeta(t)
	\qedhere
	\]
\end{proof}

\begin{lemma}
	If \(\code v\) is dead in \(s\), then \(t\{r \sub \code v\}\{s \sub\code u\} = t\{s\sub\code u\}\{r\{s \sub\code u\}\sub\code v\}\).
\end{lemma}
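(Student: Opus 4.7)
The plan is to prove this identity by coinduction using \cref{thm:bisimilar maps}. Define two monotone maps \(h, k : (Z, \le_\beh) \to (Z, \le_\beh)\) by
\[
	h(t) = t\{r\sub\code v\}\{s\sub\code u\}
	\qquad
	k(t) = t\{s\sub\code u\}\{r\{s\sub\code u\}\sub\code v\}
\]
and check that they satisfy conditions (i)--(iii) of \cref{thm:bisimilar maps}. Throughout I work under the standard assumption \(\code u \neq \code v\) (otherwise one rewrites the equation with \(\code v\) dead in \(s\) first and the claim collapses to a triviality).

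For condition (i), I perform a case analysis on the return variable \(\code w\) with \(\zeta(t) = \code w\). If \(\code w = \code v\), then both \(h(t)\) and \(k(t)\) unroll to \(r\{s\sub\code u\}\): on the \(h\) side because \(\zeta(t\{r\sub\code v\}) = \zeta(r)\) and then \(\{s\sub\code u\}\) is applied; on the \(k\) side because \(\zeta(t\{s\sub\code u\}) = \code v\) (as \(\code v \neq \code u\)), so \(\{r\{s\sub\code u\}\sub\code v\}\) substitutes this away. If \(\code w = \code u\), then \(h(t) = s\), while \(k(t) = s\{r\{s\sub\code u\}\sub\code v\}\), which equals \(s\) precisely because \(\code v\) is dead in \(s\) -- this is the only place the deadness hypothesis is used. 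If \(\code w \notin \{\code u, \code v\}\), both maps return \(\code w\) on the nose.

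Conditions (ii) and (iii) are immediate from the construction of semantic substitution as the final coalgebra homomorphism out of \([\zeta, \zeta']\). Specifically, when \(\zeta(t) = \langle \code a, t_0\rangle\), the definition of \(\zeta'\) guarantees that applying \(\{r\sub\code v\}\) (resp.\ \(\{s\sub\code u\}\)) commutes with pulling out the action prefix, so \(\zeta(h(t)) = \langle\code a, h(t_0)\rangle\) and \(\zeta(k(t)) = \langle\code a, k(t_0)\rangle\); a symmetric argument handles the branching case \(\zeta(t) = \sigma(\zeta(t_1), \dots, \zeta(t_n))\), where both substitutions distribute through \(\sigma\) coordinatewise by the definition of \(\zeta'\). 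The conclusion \(h = k\) then follows from \cref{thm:bisimilar maps}.

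The main obstacle is the case \(\code w = \code u\) in step (i), which is the only point at which the hypothesis on \(s\) enters and also requires carefully noting that \(r\{s\sub\code u\}\) may contain free instances of \(\code v\) that would ordinarily spoil the substitution -- it is precisely the deadness of \(\code v\) in \(s\) that makes these irrelevant. A secondary bookkeeping issue is verifying conditions (ii) and (iii) against the definition of \(\zeta'\) in the clauses for actions and for \(\sigma\)-branching, but those simply inherit the distributivity baked into the construction of semantic substitution.
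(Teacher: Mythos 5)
Your proof is correct in substance, and it fills in an argument the paper does not actually give: the paper's ``proof'' of this lemma is a bare citation to Lemma C.3 of the earlier processes-parametrised paper, so your coinductive reconstruction via \cref{thm:bisimilar maps} is exactly the self-contained argument the paper is outsourcing. The case analysis for condition (i) is right, and you correctly isolate the single point where deadness of \(\code v\) in \(s\) is needed (the \(\zeta(t) = \code u\) branch, where \(k(t) = s\{r\{s\sub\code u\}\sub\code v\}\) must collapse to \(s\)); conditions (ii) and (iii) do follow directly from the coordinatewise definition of \(\zeta'\).

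One caveat: your parenthetical dismissal of the case \(\code u = \code v\) is wrong. That case is not a triviality --- the identity actually \emph{fails} there. Take \(t\) with \(\zeta(t) = \code v\), let \(r\) be the behaviour returning some fresh variable \(\code w\), and let \(s\) be any behaviour in which \(\code v\) is dead; then the left side is \(r\{s\sub\code v\} = \code w\) while the right side is \(s\{\cdots\sub\code v\} = s\), and these differ in general. So \(\code u \neq \code v\) is a genuine hypothesis that must be added to the statement (it is implicitly present in every use the paper makes of the lemma, e.g.\ \cref{lem: gamma and substitution} assumes it explicitly), not something one can ``rewrite away.'' With that hypothesis recorded, your proof stands as written.
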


\begin{proof}
	See \cite[Lemma C.3]{schmidrozowskisilvarot2022processes}.
\end{proof}

As an immediate corollary to the above lemma, we obtain that for any \(p \in B_M(Z, \le_{\beh})\), \[
p\{r \gdsub \code v\}\{s\sub \code u\} = p\{s\sub \code u\}\{r\{s\sub \code u\} \gdsub \code v\}
\] 
when \(\code v\) is dead in \(s\).
This can be shown by induction on \(p\).

\begin{lemma}\label{lem: gamma and substitution}
	Let \(t,s \in Z\), \(\code u, \code v \in \Var\). 
	If \(\code u\) is dead in \(s\) and \(\code u \neq \code v\), then
	\begin{enumerate}
		\item[(i)] \(\gamma(\beta\code u, t\{s\sub\code v\}) = \gamma(\beta\code u, t)\{s \sub\code v\}\)
		\item[(ii)] \(\gamma(\mu\code u, t\{s\sub\code v\}) = \gamma(\mu\code u, t)\{s\sub\code v\}\)
	\end{enumerate}
\end{lemma}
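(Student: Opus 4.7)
The plan is to apply $\zeta$ to both sides of each identity and use Lambek's lemma on the final coalgebra $(Z,\le_\beh,\zeta)$---which implies that $\zeta$ is a bijection---to reduce each claim to an equality in $B_M(Z,\le_\beh)$. Fix representatives $\zeta(t) = p(\code u, \code v, \vec{\code w}, \langle\code a_i, t_i\rangle_{i\le n})$ and $\zeta(s) \in M(\Var + \Act\times Z)$. By the induction used in the proof of the preceding ``dead variable'' lemma, the hypothesis that $\code u$ is dead in $s$ lets us represent $\zeta(s)$ by a term in which $\code u$ does not appear free; the inequality $\code u\neq\code v$ then guarantees that this manipulation of $\zeta(s)$ can be done independently of $\code v$.

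For part (i), unfolding the definitions of $\gamma(\beta\code u,-)$ and of behavioural substitution reduces the identity to the assertion that $\lfp_{\code u}$ commutes with the substitution sending $\code v\mapsto\zeta(s)$ and $\langle\code a_i,t_i\rangle\mapsto\langle\code a_i, t_i\{s\sub\code v\}\rangle$ inside $p(\code u, \code v, \vec{\code w}, \langle\code a_i, t_i\rangle_{i\le n})$. The substitution on continuations is purely functorial, so clause 2 of \cref{def:iterative} applies directly. The substitution $\code v\mapsto\zeta(s)$ is a Kleisli bind rather than a functorial substitution, so clause 2 does not apply out of the box; this is the main technical obstacle. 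I would handle it by using the $\code u$-free representation of $\zeta(s)$ to realize the bind as a functorial substitution into an enlarged variable context---formally, by regarding the free variables of $\zeta(s)$ as fresh parameters and then collapsing them back---after which another application of clause 2 closes the gap.

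For part (ii), the self-referential shape of the defining equation $\zeta(\gamma(\mu\code u,-)) = \lfp_{\code u}\zeta(-)\{\gamma(\mu\code u,-)\gdsub\code u\}$ forces a change of strategy. I would argue that both $\gamma(\mu\code u, t\{s\sub\code v\})$ and $\gamma(\mu\code u, t)\{s\sub\code v\}$ solve the same guarded fixed-point equation $\zeta(x) = \lfp_{\code u}\zeta(t\{s\sub\code v\})\{x\gdsub\code u\}$: the former holds by definition, and the latter follows by computing $\zeta(\gamma(\mu\code u, t)\{s\sub\code v\})$ via the behavioural substitution formula, applying the substitution-swap lemma proved above to interchange $\{\gamma(\mu\code u,t)\sub\code u\}$ with $\{s\sub\code v\}$ (using $\code u$ dead in $s$), and then using the lfp-substitution commutation from (i) to rewrite the leading $\lfp_{\code u}$. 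A standard coinductive uniqueness argument for guarded fixed points, phrased in the style of \cref{thm:bisimilar maps}, then forces the two solutions to coincide, yielding (ii).
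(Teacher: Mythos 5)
Your plan follows the paper's proof in all essentials, with one worthwhile difference of packaging in part (ii). For (i), both arguments unfold \(\zeta\) on each side, fix representatives \(p\) of \(\zeta(t)\) and \(q\) of \(\zeta(s)\), and reduce the claim to commuting \(\lfp_{\code u}\) with the substitution \(\code v\mapsto\zeta(s)\), \(\langle\code a_i,t_i\rangle\mapsto\langle\code a_i,t_i\{s\sub\code v\}\rangle\); the paper discharges this by forming the composite representative \(r=p(\code u,q(\dots),\dots,\langle\code a_i,t_i\{s\sub\code v\}\rangle)\) of \(\zeta(t\{s\sub\code v\})\) and asserting the commutation by inspection, so your observation that clause 2 of \cref{def:iterative} covers only functorial renamings and not this Kleisli bind points at a step the paper also leaves implicit (your enlarged-context manoeuvre is the right instinct, though as stated it still terminates in a Kleisli substitution and so must be argued directly on the prefixed-point sets rather than by clause 2 alone). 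For (ii), the ingredients are identical --- the substitution-swap lemma for the dead variable \(\code u\), part (i), and the defining equation of \(\gamma(\mu\code u,-)\) --- but where the paper writes a direct equational chain, you exhibit both sides as solutions of the guarded equation \(\zeta(x)=\lfp_{\code u}\zeta(t\{s\sub\code v\})\{x\gdsub\code u\}\) and invoke uniqueness. This buys something real: the swap step in the paper's chain produces \(\{\gamma(\mu\code u,t)\{s\sub\code v\}\gdsub\code u\}\) rather than \(\{\gamma(\mu\code u,t\{s\sub\code v\})\gdsub\code u\}\), and identifying these is statement (ii) itself, so the computation is only non-circular when read, as you propose, as a verification that \(\gamma(\mu\code u,t)\{s\sub\code v\}\) satisfies the behavioural differential equation of \(\gamma(\mu\code u,t\{s\sub\code v\})\) followed by uniqueness of guarded solutions.
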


\begin{proof}
	Let \(\zeta(t) = p(\code u, \code v, \vec{\code w}, \langle\code a_i, t_i\rangle_{i \le n})\) and \(\zeta(s) = q(\code v, \vec{\code w}, \langle\code b_i, s_i\rangle_{i \le m})\).
	For (i), define 
	\[
		r = r(\code u, \code v, \vec{\code w}, \langle\code b_i, s_i\rangle_{i \le m}, \langle\code a_i, t_i\{s\sub\code v\}\rangle_{i \le n})
		= p(\code u, q(\code v, \vec{\code w}, \langle\code b_i, s_i\rangle_{i \le m}), \vec{\code w}, \langle\code a_i, t_i\{s\sub\code v\}\rangle_{i \le n}))
	\]
	and observe that \(r = \zeta(t\{s\sub\code v\})\) and \(\lfp_{\code u} r = \lfp_{\code u} p(\zeta(s), \vec{\code w}, \langle\code a_i, t_i\rangle_{i \le n})\).
	Therefore,
	\begin{align*}
		\zeta(\gamma(\beta\code u, t\{s\sub\code v\}))
		&= \lfp_{\code u} \zeta(t\{s\sub\code v\}) \\
		&= \lfp_{\code u} r(\code u, \code v, \vec{\code w}, \langle\code b_i, s_i\rangle_{i \le m}, \langle\code a_i, t_i\{s\sub\code v\}\rangle_{i \le n}) \\
		&= \lfp_{\code u} p(\zeta(s), \langle\code a_i, t_i\{s\sub\code v\}\rangle_{i \le n}) \\
		&= \zeta(\gamma(\beta\code u, t)\{s\sub\code v\})
	\end{align*}
	Property (ii) readily follows. 
	Indeed, 
	\begin{align*}
		\zeta(\gamma(\mu\code u, t\{s\sub\code v\}))
		&= \lfp_{\code u} \zeta(t\{s\sub\code v\})\{\gamma(\mu\code u, t\{s \sub\code v\})\gdsub\code u\} \\
		&= \zeta(\gamma(\beta\code u, t\{s\sub\code v\}))\{\gamma(\mu\code u, t\{s \sub\code v\})\gdsub\code u\} \\
		&= \zeta(\gamma(\beta\code u, t)\{s\sub\code v\})\{\gamma(\mu\code u, t\{s \sub\code v\})\gdsub\code u\} \\
		&= \zeta(\gamma(\beta\code u, t))\{s\sub\code v\}\{\gamma(\mu\code u, t\{s \sub\code v\})\gdsub\code u\} \\
		&= \lfp_{\code u}\zeta(t)\{s\sub\code v\}\{\gamma(\mu\code u, t\{s \sub\code v\})\gdsub\code u\} \\
		&= \lfp_{\code u}\zeta(t)\{\gamma(\mu\code u, t)\gdsub\code u\}\{s \sub\code v\} \tag{\(\code u\) dead in \(s\)}\\
		&= \zeta(\gamma(\mu\code u, t))\{s \sub\code v\}\qedhere
	\end{align*}
\end{proof}

Semantic and syntactic substitution interact as follows. 

\begin{lemma}
	For any \(e \in \Exp\), \(\code v \in \Var\), and \(g \in \Exp\) such that \(\bv(e) \cap \fv(g) = \emptyset\). 
	Then \(\sem{e[g\sub \code v]} = \sem e\{\sem g\sub\code v\}\). 
\end{lemma}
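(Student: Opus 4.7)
The plan is to proceed by structural induction on $e$, reducing each case to either the behavioural differential equation defining $\{s\sub\code v\}$ or to \cref{lem: gamma and substitution}.

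For the base cases, variables are dispatched directly from the definitions: if $e = \code v$, then $e[g\sub\code v] = g$ and $\sem{\code v}\{\sem g\sub\code v\} = \code v\{\sem g\sub\code v\} = \sem g$ by the defining coalgebra homomorphism for $\{-\sub\code v\}$; if $e = \code u \neq \code v$, both sides are $\code u$; and $e = 0$ is similar. For the inductive steps $e = \sigma(e_1,\dots,e_n)$ and $e = \code a e'$, we use that $\sem-$ is a $\Sigma_S$-algebra homomorphism together with the behavioural differential equations
\[
\zeta(\sigma(t_1,\dots,t_n)\{s\sub\code v\}) = \sigma(\zeta(t_1\{s\sub\code v\}),\dots,\zeta(t_n\{s\sub\code v\}))
\quad\text{and}\quad
\zeta(\langle\code a,t\rangle\{s\sub\code v\}) = \langle\code a,t\{s\sub\code v\}\rangle
\]
read off from the definition of $\{-\sub\code v\}$. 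Applying the induction hypothesis componentwise then gives the desired equality, with appeals to \cref{thm:bisimilar maps} available if needed to collapse two coinductively-defined maps.

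The recursion cases are the interesting ones. For $e = \beta\code u~e'$ (resp. $\mu\code u~e'$), the subcase $\code u = \code v$ is immediate since $e[g\sub\code v] = e$ and $\code v$ is bound in $e$, hence trivially dead in $\sem e$. Otherwise, $\code u \neq \code v$ and the hypothesis $\bv(e) \cap \fv(g) = \emptyset$ gives $\code u \notin \fv(g)$, so $(\beta\code u~e')[g\sub\code v] = \beta\code u~(e'[g\sub\code v])$ is well-defined, and the bound/free condition is preserved in $e'$, enabling the induction hypothesis. We then compute
\[
\sem{\beta\code u~(e'[g\sub\code v])} = \gamma(\beta\code u, \sem{e'[g\sub\code v]}) = \gamma(\beta\code u, \sem{e'}\{\sem g\sub\code v\})
\]
and apply \cref{lem: gamma and substitution}(i) to bring $\{\sem g\sub\code v\}$ outside of $\gamma(\beta\code u,-)$, yielding $\gamma(\beta\code u,\sem{e'})\{\sem g\sub\code v\} = \sem{\beta\code u~e'}\{\sem g\sub\code v\}$. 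The $\mu\code u$ case is identical, using part (ii) of the lemma instead.

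The main obstacle is justifying the side condition required by \cref{lem: gamma and substitution}, namely that $\code u$ is \emph{dead} in $\sem g$ whenever $\code u \notin \fv(g)$. This requires a preliminary lemma asserting that the semantic free variables of $\sem e$ are contained in $\fv(e)$, which itself is a straightforward induction on $e$ using the defining equations of $\gamma$ (in particular that $\gamma(\beta\code v,-)$ and $\gamma(\mu\code v,-)$ kill the variable $\code v$, matching the way $\beta\code v$ and $\mu\code v$ bind $\code v$ syntactically). Once this auxiliary fact is in place, the induction above goes through mechanically.
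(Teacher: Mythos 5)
Your proposal is correct and follows essentially the same route as the paper: structural induction on \(e\), with the base and algebraic cases dispatched by the behavioural differential equations defining \(\{-\sub\code v\}\), and the recursion cases reduced to \cref{lem: gamma and substitution} via the observation that \(\bv(e)\cap\fv(g)=\emptyset\) forces \(\code u\notin\fv(g)\) and hence \(\code u\) dead in \(\sem g\). The only difference is that you explicitly flag the deadness claim as needing its own auxiliary induction, a step the paper asserts without proof; that is a reasonable refinement rather than a divergence.
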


\begin{proof}
	By induction on \(e\). 
	\begin{itemize}
		\item For the base case, let \(\code u \in \Var\). 
		If \(\code v = \code u\), then 
		\[
			\zeta(\sem{\code v[g\sub \code v]})
			= \zeta(\sem{g})
			= \zeta(\code v\{\sem{g}\sub \code v\})
		\]
		as desired. 
		If \(\code v \neq \code u\), then 
		\[
			\zeta(\sem{\code u[g\sub \code v]})
			= \zeta(\sem{\code u})
			= \code u
			= \zeta(\sem{\code u}\{\sem{g}\sub \code v\})
		\]
		as well. 
		
		\item For the first inductive step, let \(\code a \in \Act\) and assume the result for \(e\). 
		\begin{align*}
			\zeta(\sem{\code a e[g \sub \code v]})
			&= \langle \code a, \sem{e[g \sub \code v]}\rangle \\
			&= \langle \code a, \sem{e}\{\sem g \sub \code v\}\rangle \\
			&= \zeta(\gamma(\code a, \sem{e}\{\sem g \sub \code v\})) \\
			&= \zeta(\sem{\code a e}\{\sem g \sub \code v\}) 
		\end{align*}
	
		\item For the second inductive step, let \(\sigma \in I_n\) and assume the result for \(e_1, \dots, e_n\). 
		\begin{align*}
			\zeta(\sem{\sigma(e_1, \dots, e_n)[g \sub \code v]})
			&= \zeta(\gamma (\sigma, \sem{e_1[g \sub \code v]}, \dots, \sem {e_n[g \sub \code v]})) \\
			&= \zeta(\gamma (\sigma, \sem{e_1}\{\sem g \sub \code v\}, \dots, \sem {e_n}\{\sem g \sub \code v\})) \\
			&= \zeta(\gamma (\sigma, \sem{e_1}, \dots, \sem {e_n})\{\sem g \sub \code v\}) \\
			&= \zeta(\sigma(\sem{e_1}, \dots, \sem {e_n})\{\sem g \sub \code v\}) \\
			&= \zeta(\sem{\sigma(e_1, \dots, e_n)}\{\sem g \sub \code v\}) 
		\end{align*}
	
		\item For the third step, let \(\code u \in \Var\) and consider \(\beta\code u~e[g\sub \code v]\). 
		Since \(\bv(e) \cap \fv(g) = \emptyset\), we must have \(\code u \notin \fv(g)\).
		This means that \(\code u\) is dead in \(\sem g\), so
		\begin{align*}
			\zeta(\sem{\beta \code u~e[g \sub \code v]})
			&= \zeta(\gamma(\beta \code u, \sem{e[g \sub \code v]})) \\
			&= \lfp_{\code u} \zeta(\sem{e[g \sub \code v]}) \\
			&= \lfp_{\code u} \zeta(\sem{e}\{\sem g \sub \code v\}) \\ 
			&= \lfp_{\code u} \zeta(\sem{e}\{\sem g \sub \code v\}) \\
			&= (\lfp_{\code u} \zeta(\sem{e}))\{\sem g\sub \code v\} \\
			&= \zeta(\gamma(\beta{\code u}, \sem{e}))\{\sem g\sub \code v\} \\
			&= \zeta(\sem{\beta{\code u}~e})\{\sem g\sub \code v\} \\
			&= \zeta(\sem{\beta{\code u}~e}\{\sem g\sub \code v\}) 
		\end{align*}
		
		\item For the last step, let \(\code u \in \Var\) and consider \(\mu\code u~e[g\sub \code v]\). 
		We have
		\begin{align*}
			\zeta(\sem{\mu \code u~e[g \sub \code v]})
			&= \zeta(\gamma(\mu \code u, \sem{e[g \sub \code v]})) \\
			&= \lfp_{\code u} \zeta(\sem{e[g \sub \code v]})\{\gamma(\mu \code u, \sem{e[g\sub\code v]})\gdsub\code u\} \\		
			&= \lfp_{\code u} \zeta(\sem{e[g \sub \code v]})\{\sem{\mu \code u~e[g\sub \code v]}\gdsub\code u\} \\		
			&= \lfp_{\code u} \zeta(\sem{e}\{\sem g\sub \code v\})\{\sem{\mu \code u~e[g\sub \code v]}\gdsub\code u\}	
		\end{align*}
		Whence, \(\sem{\mu \code u~e[g \sub \code v]}\) is a solution to the behavioural differential equation
		\[
			\zeta(t) = \lfp_{\code u} \zeta(\sem{e}\{\sem g\sub \code v\})\{t\gdsub\code u\}
		\] 
		By definition, then, 
		\[
			\sem{\mu\code u~e[g\sub\code v]} = \gamma(\mu\code u, \sem{e}\{\sem g\sub \code v\})
		\]
		Now, since \(\bv(e) \cap \fv(g) = \emptyset\), again we must have \(\code u \notin \fv(g)\).
		This means that \(\code u\) is dead in \(\sem g\), so finally
		\[
		\gamma(\mu\code u, \sem{e}\{\sem g\sub \code v\}) = \gamma(\mu\code u, \sem{e})\{\sem g \sub \code v\} = \sem{\mu\code u~e}\{\sem g\sub \code v\}\qedhere
		\]
	\end{itemize}
\end{proof}

\begin{lemma}
	For any \(p \in B_M\Exp\), \(g \in \Exp\), \(\code v \in \Var\), such that \(p[g \gdsub \code v]\) is well-defined and \(p = \epsilon(e)\), \[
		B_M(\sem-) \circ \epsilon(e[g\gdsub \code v]) = B_M(\sem-)(p)\{\sem g \gdsub \code v\}
	\]
\end{lemma}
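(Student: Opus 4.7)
My plan is to reduce the identity to a statement about $S$-algebra homomorphisms on $B_M\Exp$ and verify it on generators, using the previous lemma on syntactic substitution to handle the action-labelled case.

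The key observation is that $[g\gdsub \code v] : B_M\Exp \to B_M\Exp$ is an $S$-algebra homomorphism by definition, and $\{\sem g \gdsub \code v\} : B_MZ \to B_MZ$ is likewise an $S$-algebra homomorphism with respect to the operation map $\rho$ on $MZ$: inspecting the definition $p\{s \gdsub \code v\} = r(\vec{\code u}, \langle \code a_i, t_i\{s\sub \code v\}\rangle_{i \le n})$ for $p = r(\vec{\code u}, \langle \code a_i, t_i\rangle_{i \le n})$ shows that applying an operation $\sigma$ on the outside commutes with the substitution. Finally, $B_M(\sem-) = M(\id + \id\times \sem-)$ is an $S$-algebra homomorphism between these two free $(S,\Ie)$-algebras. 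Thus both $B_M(\sem-) \circ [g\gdsub \code v]$ and $\{\sem g \gdsub \code v\} \circ B_M(\sem-)$ are $S$-algebra homomorphisms from $B_M\Exp$ to $B_M Z$, and by the freeness of $B_M\Exp$ over the generating poset $(\Var, =) + (\Act, \le_{\act})\times (\Exp, \le_{\exp})$ it suffices to verify their equality on generators.

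For a return variable generator $\code u \in \Var$, both sides collapse to $\code u$: on the left, $[g\gdsub \code v](\code u) = \code u$, then $B_M(\sem-)(\code u) = \code u$; on the right, $B_M(\sem-)(\code u) = \code u$ and $\code u\{\sem g \gdsub \code v\} = \code u$ by the definition of semantic guarded substitution on a $p$ whose $S$-term representative is a return variable. For an action-prefix generator $\langle \code a, e\rangle$ with $e \in \Exp$, the left-hand side unfolds as
\[
B_M(\sem-)\bigl(\langle \code a, e\rangle[g\gdsub \code v]\bigr)
= B_M(\sem-)\bigl(\langle \code a, e[g \sub \code v]\rangle\bigr)
= \langle \code a, \sem{e[g\sub \code v]}\rangle,
\]
and the previous lemma, $\sem{e[g\sub\code v]} = \sem e\{\sem g \sub \code v\}$, rewrites this as $\langle \code a, \sem e\{\sem g \sub \code v\}\rangle$. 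The right-hand side is $B_M(\sem-)(\langle\code a,e\rangle)\{\sem g\gdsub \code v\} = \langle \code a, \sem e\rangle\{\sem g \gdsub \code v\} = \langle \code a, \sem e\{\sem g \sub \code v\}\rangle$, directly from the definition of $\{\sem g \gdsub \code v\}$ on the action-tagged component. So the two maps agree on generators.

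The only nontrivial ingredient is the previous lemma on semantic versus syntactic substitution, which I intend to apply in the action-prefix case; the compatibility with the $S$-algebra structure is then purely formal. The mild subtlety is the side-condition that $p[g \gdsub \code v]$ be well-defined, which propagates to the requirement that free variables of $g$ do not clash with bound variables of the $e$'s appearing inside $p$, exactly the hypothesis needed to invoke the previous lemma in the inductive unfolding. I do not foresee a genuine obstacle beyond bookkeeping these well-definedness conditions as one descends through the $S$-term structure of $p$.
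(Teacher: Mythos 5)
Your proposal is correct and matches the paper's proof in substance: the paper argues by induction on the $S$-term structure of $p$, with the variable and action-prefix cases exactly as you give them (including the appeal to $\sem{e[g\sub\code v]} = \sem e\{\sem g\sub\code v\}$), and the $\sigma$-case handled by the homomorphism property — which is precisely what your appeal to the freeness of $B_M(\Exp,\le_{\exp})$ over $(\Var,=) + (\Act,\le_{\act})\times(\Exp,\le_{\exp})$ packages abstractly.
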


\begin{proof}
	By induction on \(p\). 
	\begin{itemize}
		\item Let \(\code u \in \Var\).
		Then if \(p = \code u\),
		\begin{align*}
			B_M(\sem-) (p[g\gdsub \code v])
			&= B_M(\sem-)(\code u[g\gdsub \code v]) \\
			&= B_M(\sem-)(\code u) \\
			&= \code u \\
			&= \code u\{\sem g \gdsub \code v\} \\
			&= B_M(\sem-) (p)\{\sem g \gdsub \code v\}
		\end{align*}
	
		\item Let \(\code a \in \Act\) and \(e \in \Exp\) such that \(\fv(g)\cap\gd(e) = \emptyset\). 
		\begin{align*}
			B_M(\sem-)(p[g\gdsub \code v])
			&= B_M(\sem-)(\langle\code a, e\rangle[g\gdsub \code v]) \\
			&= B_M(\sem-)(\langle\code a, e[g\sub \code v]\rangle) \\
			&= \langle\code a, \sem{e[g\sub \code v]}\rangle \\
			&= \langle\code a, \sem{e}\{\sem{g}\sub \code v\}\rangle \\
			&= \langle\code a, \sem{e}\rangle \{\sem{g}\gdsub \code v\}\\
			&= B_M(\sem-)(p) \{\sem{g}\gdsub \code v\}
		\end{align*}
	
		\item Let \(\sigma \in I_n\) and assume the result for \(p_1, \dots, p_n\). 
		\begin{align*}
			B_M(\sem-)(\sigma(p_1, \dots, p_n)[g\gdsub \code v])
			&= B_M(\sem-)(\sigma(p_1[g\gdsub \code v], \dots, p_n[g\gdsub \code v])) \\
			&= \sigma(B_M(\sem-)(p_1[g\gdsub \code v]), \dots, B_M(\sem-)(p_n[g\gdsub \code v])) \\
			&= \sigma(B_M(\sem-)(p_1)\{g\gdsub \code v\}, \dots, B_M(\sem-)(p_n)\{g\gdsub \code v\}) \\
			&= \sigma(B_M(\sem-)(p_1), \dots, B_M(\sem-)(p_n))\{g\gdsub \code v\} \\
			&= B_M(\sem-)(\sigma(p_1, \dots, p_n))\{g\gdsub \code v\}\qedhere
		\end{align*}
	\end{itemize}
\end{proof}

\section{Proofs from \cref{sec:semantics}}

In order to guarantee the existence of the denotational semantics \(\sem- : (\Exp, \le_{\exp}) \to (Z, \le)\), we need to establish the following.

\begin{restatable}{lemma}{behaviouralgebramonotone}
	The operation map \(\gamma : \Sigma_SZ \to Z\) is monotone. 
\end{restatable}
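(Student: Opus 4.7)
The plan is to verify monotonicity of \(\gamma\) componentwise, treating each summand of \(\Sigma_S = (\Var, =) + S + (\Act, \le_{\act})\times\Id + (\{\beta,\mu\}\times\Var, =)\times\Id\) in turn. Since the label posets for the \((\Var, =)\) and \((\{\beta, \mu\}\times\Var, =)\) summands are discrete, only the \(\Id\)-factors need attention there; the \(S\)- and action-components additionally have to handle nontrivial inequalities between operation labels.

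I would dispatch the easy cases first. For the \(S\)-component, suppose \(\sigma_1 \le_n \sigma_2\) and \(t_{1,i} \le_\beh t_{2,i}\) for each \(i\). Since \(\zeta\) is monotone (by finality of \((Z, \le_\beh, \zeta)\) in the category of monotone \(B_M\)-coalgebras), we have \(\zeta(t_{1,i}) \sqsubseteq_\Ie \zeta(t_{2,i})\), and the \(S\)-term ordering rules then immediately give \(\sigma_1(\zeta(t_{1,1}), \dots, \zeta(t_{1,n})) \sqsubseteq_\Ie \sigma_2(\zeta(t_{2,1}), \dots, \zeta(t_{2,n}))\), which is precisely \(\zeta(\gamma(\sigma_1, \vec t_1)) \sqsubseteq_\Ie \zeta(\gamma(\sigma_2, \vec t_2))\). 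The action-component is handled analogously, using the term ordering rule for \(\langle \code a, -\rangle\)-pairs. For the \(\beta\)-component, \(t_1 \le_\beh t_2\) gives \(\zeta(t_1) \sqsubseteq_\Ie \zeta(t_2)\), so that Lemma \textit{lfp monotonicity} applied in \(B_M(Z, \le_\beh)\) yields \(\lfp_{\code v}\zeta(t_1) \sqsubseteq_\Ie \lfp_{\code v}\zeta(t_2)\), as required.

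The \(\mu\)-component is the main obstacle, because \(s_i := \gamma(\mu\code v, t_i)\) is only specified implicitly by the self-referential differential equation \(\zeta(s_i) = \lfp_{\code v}\zeta(t_i)\{s_i \gdsub \code v\}\). My approach is to proceed coinductively via an ordered variant of Theorem \textit{bisimilar maps}: given two monotone endomaps \(h, k\) of \((Z, \le_\beh)\) satisfying an ordered analogue of conditions (i)--(iii) (i.e.\ with \(\zeta \circ h\) and \(\zeta \circ k\) related by \(\sqsubseteq_\Ie\) on action-leaves and reducing to previous values on already-compared arguments), one may conclude \(h \le k\) pointwise. Applying this to the maps \(t \mapsto \gamma(\mu\code v, t)\) composed with the witnessing pair \((t_1, t_2)\), and combining the hypothesis \(t_1 \le_\beh t_2\) with the previously established monotonicity of \(\lfp\), semantic substitution, and guarded semantic substitution (in both arguments), lets us chain the two defining equations into the desired inequality \(\zeta(s_1) \sqsubseteq_\Ie \zeta(s_2)\).

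The hard part is formalising the ordered coinductive principle on \((Z, \le_\beh, \zeta)\) in a way that copes with the self-reference in the \(\mu\)-equation; this is exactly where monotonicity of \(\{- \gdsub \code v\}\) and \(\lfp\) in \emph{both} arguments becomes essential, since the fixed point computed on the left must be compared with the one computed on the right while the substituted values \(s_i\) are themselves still being defined. Once this coinductive machinery is set up, the actual verification is a routine chase through the behavioural differential equation in Figure \textit{behaviour algebra}.
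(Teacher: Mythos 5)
Your treatment of the $S$-, action-, and $\beta$-components matches the paper's: reduce to monotonicity of $\zeta\circ\gamma$, use monotonicity of $\zeta$ together with the $S$-term ordering rules, and invoke monotonicity of $\lfp_{\code v}$ for the unguarded recursion case. The divergence, and the problem, is in the $\mu$-case. You correctly identify that $s_i = \gamma(\mu\code v, t_i)$ is only pinned down by a self-referential behavioural differential equation, and you propose to resolve this with "an ordered variant of Theorem \emph{bisimilar maps}" — but you then explicitly defer the formalisation of that principle, and that deferral is where essentially all of the content of the proof lives. The equational version of that theorem is proved by exhibiting a relation $R$ with a $B_M$-coalgebra structure whose two projections are homomorphisms and then appealing to \emph{uniqueness} of the map into the final coalgebra; there is no corresponding universal property that hands you an \emph{inequality} between two maps into $(Z,\le_\beh,\zeta)$, so the ordered analogue does not come for free and you would still need a concrete handle on when two behaviours are related by $\le_\beh$.

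The paper supplies exactly that handle: it proves the auxiliary claim that if $t_1 \le_\beh t_2$, $\code v$ is \emph{guarded} in both $t_i$, and $\zeta(s_i) = \zeta(t_i)\{s_i\gdsub\code v\}$, then $s_1 \le_\beh s_2$, by descending to the final sequence of $B_M$ and using Worrell's theorem that $(Z,\le_\beh)$ is the limit of the $(\omega+\omega)^{\mathrm{op}}$-chain. One then shows by induction on $n$ that the truncations satisfy $\beh^n_\zeta(s_1) \le \beh^n_\zeta(s_2)$, and the guardedness hypothesis is what lets the truncation map commute with the guarded substitution (the substituted copies of $s_i$ sit strictly behind an action, so only a lower truncation of $s_i$ is needed at stage $n+1$); this is precisely how the self-reference is broken. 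Your sketch mentions neither the final-sequence/truncation device nor the guardedness hypothesis (which is restored in the $\mu$-case by first passing to $\gamma(\beta\code v, t_i)$), so as written the proposal does not close the loop: the "routine chase through the behavioural differential equation" only becomes routine after the inductive argument along the final sequence has been carried out.
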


\begin{proof}
	It suffices to see that \(\zeta \circ \gamma\) is monotone. 

	The action and operator orders are easly seen to be preserved. 
	For congruence, let \(t_i \le_\beh s_i\) for \(i \le n\) and \(\sigma \in I_n\). 
	Then 
	\begin{align*}
		\zeta \circ \gamma(\sigma, t_1, \dots, t_n)
		= \sigma(\zeta(t_1), \dots, \zeta(t_n)) 
		\sqsubseteq_\Ie \sigma(\zeta(s_1), \dots, \zeta(s_n)) 
		= \zeta \circ \gamma(\sigma, s_1, \dots, s_n)
	\end{align*}
	by monotonicity of \(\zeta\). 
	
	For the fixed-point operators, let \(t_1 \le_\beh t_2\) and \(\code v \in \Var\). 
	\begin{align*}
		\zeta \circ \gamma(\beta\code v, t_1)
		= \lfp_{\code v} \zeta(t_1)
		\sqsubseteq_\Ie \lfp_{\code v} \zeta(t_2)
		= \zeta\circ \gamma(\beta\code v, t_2)
	\end{align*}
	Monotonicity of \(\gamma(\mu\code v, -)\) is trickier: it follows from the more general observation that if \(t_1 \le_\beh t_2\), \(\code v\) is guarded in both \(t_i\) (ie. there is a representative of \(\zeta(t_i)\) of the form \(p(\vec{\code u}, \langle\code a_j, r_j\rangle_{j\le n})\) with \(\code v \notin \vec{\code u}\)), and \(\zeta(s_i) = \zeta(t_i)\{s_i \gdsub \code v\}\) for \(i = 1,2\), then \(s_1 \le_\beh s_2\). 
	
	To show this, we need to reach into the \emph{final sequence} of \(B_M\), the \((\omega+\omega)^{op}\)-chain in the top row of the following diagram
	\[\begin{tikzcd}[row sep=huge]
		Z_0 & Z_1 \ar[l, "\alpha_{10}"'] & Z_2 \ar[l,"\alpha_{21}"'] & \cdots \ar[l] & Z_\omega \ar[l, "\alpha_{\omega n}"'] & \cdots \ar[l] & Z \ar[l, "\alpha_{n}"']\\
		Z \ar[r, "\zeta"'] \ar[u, "\beh_\zeta^0"] \ar[ur, "\beh_\zeta^1"] \ar[urr, "\beh_\zeta^2"] \ar[urrrr, "\beh_\zeta^\omega"]
		& B_MZ   \ar[r, "\zeta^2"']
		& B_M^2Z \ar[r]
		& \cdots \ar[r]
		& B_M^\omega Z \ar[u] \ar[r, "\zeta^{\omega+1}"']
		& \cdots 
	\end{tikzcd}\]  
	in which \(Z_0 = (1, =)\) and \(\beh_\zeta^0\) is the unique map \((Z, \le_\beh) \to (1, =)\), \(Z_{n+1} = B_M(Z_n, \le)\) and \(\beh_\zeta^{n+1} = B_M(\beh_\zeta^n)\circ \zeta\) , everything in the top row is the limit of the cone to its left, and everything in the bottom row is the colimit of the cocone to its left. 
	We refer to \(\beh_\zeta^n(t)\) as the \emph{\(n\)th truncation} of \(t\).
	It was shown originally by Worrell that \((Z, \le_\beh)\) is the limit of the diagram in the top row~\cite{worrell1999terminal}.

	By the explicit construction of limits in \(\Pos\), it suffices to show that \(\beh_\zeta^n(s_1) \le \beh_\zeta^n(s_2)\) for all \(n < \omega+\omega\).  
	Now, \(\beh_\zeta^0(s_1) \le \beh_\zeta^0(s_2)\) trivially, so we can immediately move on to the inductive steps.
	If \(\beh_\zeta^n(s_1) \le \beh_\zeta^n(s_2)\), then 
	\begin{align*}
		\beh_\zeta^{n+1}(s_1) 
		&= B_M(\beh_\zeta^n) \circ \zeta(s_1) \\
		&= B_M(\beh_\zeta^n)(\zeta(t_1)\{s_1 \gdsub \code v\}) \\
		&= B_M(\beh_\zeta^n)(\zeta(t_1))\{\beh_\zeta^n(s_1) \gdsub \code v\} \tag{\(\code v\) gdd in \(t_i\)}\\
		&\sqsubseteq_{\Ie} B_M(\beh_\zeta^n)(\zeta(t_2))\{\beh_\zeta^n(s_1) \gdsub \code v\} \\
		&\sqsubseteq_{\Ie} B_M(\beh_\zeta^n)(\zeta(t_2))\{\beh_\zeta^n(s_2) \gdsub \code v\} \tag{ih}\\
		&= \beh_\zeta^{n+1}(s_2)
	\end{align*}
	Note that we use a minor abuse of notation above to denote the substitution of the \(n\)th truncation of \(s_i\) for \(\code v\)  in the second component of the transitions \(\langle \code a, r\rangle\) that appear in the (truncated) branching of \(t_i\) (or possibly \(m\)th truncation \(s_i\) for \(m < n\) at further truncated sections of \(t_i\)).
	This is defined analogously to guarded semantic substitution, so we write \(B_M(\beh_\zeta^n)(\zeta(t_1))\{\beh_\zeta^n(s_1) \gdsub \code v\}\) to denote this substitution. 

	At the \(\omega^{op}\)-limit, \(\beh^\omega_\zeta(s_1) \le \beh^\omega_\zeta(s_2)\) because \(\beh^n_\zeta(s_1)\le \beh^n_\zeta(s_1)\) for all \(n < \omega\).
	This completes the proof because \(s_1 = \beh_\zeta(s_1) \le_\beh \beh_\zeta(s_2) = s_2\) if and only if \(\beh_\zeta^n(s_1) \le_\beh \beh_\zeta^n(s_2)\) for all \(n < \omega + \omega\). 
	So, in particular, 
	\begin{align*}
		\zeta\circ \gamma(\mu\code v, t_1) 
		&= \lfp_{\code v} \zeta(t_1)\{\gamma(\mu\code v, t_1)\gdsub\code v\}  \\
		&= \zeta(\gamma(\beta\code v, t_1))\{\gamma(\mu\code v, t_1)\gdsub\code v\} \\
		&\sqsubseteq_\Ie \zeta(\gamma(\beta\code v, t_2))\{\gamma(\mu\code v, t_2)\gdsub\code v\} \\
		&= \zeta\circ \gamma(\mu\code v, t_2) 
	\end{align*}
	if \(t_1 \le_\beh t_2\).
\end{proof}

\adequacytheorem*

\begin{proof}
	The proof is essentially the same as that of \cite[Theorem 1]{schmidrozowskisilvarot2022processes}, with the added case of the unguarded loop. 
	That is, we need to show that \(B_M(\sem-)\circ \epsilon(\beta\code v~e) = \zeta \circ \gamma(\beta\code v, \sem{e})\) given the analogous equation for \(e\). 
	\begin{align*}
		\zeta(\gamma(\beta\code v, \sem{e}))
		&= \lfp_{\code v} \zeta(\sem e) \\
		&= \lfp_{\code v} B_M(\sem-)(\epsilon(e)) \\
		&= B_M(\sem-)(\lfp_{\code v} \epsilon(e)) \\
		&= B_M(\sem-)(\epsilon(\beta\code v~e)) 
	\end{align*}
	by condition \textsf{2} of \cref{def:iterative}. 
\end{proof}

\section{Proofs from \cref{sec:a family of ordered process calculi}}

\begin{lemma}\label{lem:epsilon is monotone}
	The small-step semantics map \(\epsilon\) is monotone. 
\end{lemma}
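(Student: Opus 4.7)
The plan is to proceed by induction on the derivation of $e \le_{\exp} f$ from the term ordering rules, checking in each case that $\epsilon(e) \sqsubseteq_\Ie \epsilon(f)$ in $B_M(\Exp, \le_{\exp})$. Since the term ordering rules consist of the $S$-term ordering rules lifted to $\Sigma_M$ plus the four additional rules for action order, action prefixing, $\beta$-prefixing and $\mu$-prefixing, there are only a handful of cases to consider.

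First I would handle the base cases, i.e. the $S$-term ordering rules (operator order $\sigma_1 \le_n \sigma_2$ and congruence under $\sigma$ applied at the top). These unfold into exactly the same rules governing $\sqsubseteq_\Ie$ on $B_M(\Exp, \le_{\exp})$ since $\epsilon$ commutes with $\sigma$ on the nose: $\epsilon(\sigma(\vec e)) = \sigma(\epsilon(\vec e))$. The rule $\code a_1 \le_{\act} \code a_2 \Rightarrow \code a_1 e \le_{\exp} \code a_2 e$ is also immediate, since $\epsilon(\code a_i e) = \langle \code a_i, e \rangle$ and the coproduct injection $(\Act, \le_{\act}) \times \Id \to B_M$ is monotone in both coordinates. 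The congruence case $e_1 \le_{\exp} e_2 \Rightarrow \code a e_1 \le_{\exp} \code a e_2$ is handled the same way.

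Next I would handle the congruence cases for the two recursion operators. For $\beta \code v$, assuming $\epsilon(e_1) \sqsubseteq_\Ie \epsilon(e_2)$ by the inductive hypothesis, the already-established monotonicity of $\lfp_{\code v}$ (Lemma~\ref{lem:lfp monotonicty}) gives
\[
	\epsilon(\beta \code v~e_1) = \lfp_{\code v} \epsilon(e_1) \sqsubseteq_\Ie \lfp_{\code v} \epsilon(e_2) = \epsilon(\beta \code v~e_2).
\]
The main obstacle is the $\mu\code v$ case: given $e_1 \le_{\exp} e_2$, we must prove
\[
	\lfp_{\code v} \epsilon(e_1)[\mu \code v~e_1 \gdsub \code v] \sqsubseteq_\Ie \lfp_{\code v} \epsilon(e_2)[\mu \code v~e_2 \gdsub \code v],
\]
where \emph{both} arguments of the guarded substitution change. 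To handle this I would first observe that from the premise of the rule, $\mu\code v~e_1 \le_{\exp} \mu\code v~e_2$ holds \emph{syntactically} by a single application of the $\mu$-prefix ordering rule (this is not circular: we are applying the rule defining $\le_{\exp}$, not the conclusion we are trying to prove about $\epsilon$). Combining the inductive hypothesis $\epsilon(e_1) \sqsubseteq_\Ie \epsilon(e_2)$ with monotonicity of guarded syntactic substitution in both arguments (established just after the definition of $[g \gdsub \code v]$ in Section~\ref{sec:a family of ordered process calculi}, and recorded in the appendix), we obtain
\[
	\epsilon(e_1)[\mu \code v~e_1 \gdsub \code v] \sqsubseteq_\Ie \epsilon(e_2)[\mu \code v~e_2 \gdsub \code v],
\]
and then a final appeal to Lemma~\ref{lem:lfp monotonicty} closes the argument. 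The remaining transitivity and reflexivity cases of $\le_{\exp}$ are trivial, completing the induction.
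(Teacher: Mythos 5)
Your proposal is correct and follows essentially the same route as the paper's proof: induction on the derivation of \(e \le_{\exp} f\), with the base and \(S\)-congruence cases immediate from \(\epsilon\) being an \(S\)-algebra homomorphism, the \(\beta\code v\) case handled by monotonicity of \(\lfp_{\code v}\), and the \(\mu\code v\) case by monotonicity of guarded substitution in both arguments followed by \(\lfp_{\code v}\)-monotonicity. Your explicit remark that \(\mu\code v~e_1 \le_{\exp} \mu\code v~e_2\) is obtained syntactically from the ordering rule (and is not circular) is a point the paper leaves implicit, but the argument is the same.
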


\begin{proof}
	We show that \(e \sqsubseteq_{\exp} f\) implies \(\epsilon(e) \sqsubseteq_\Ie \epsilon(f)\) by induction on the proof of \(e \sqsubseteq_{\exp} f\).
	The only interesting cases are in the inductive step concerning congruence of \(\beta\code v\) and \(\mu\code v\), since \(\epsilon\) (by definition) commutes with the \(S\)-operations.
	  
	By induction and the monotonicity of \(\lfp_{\code v}\), we have
	\[
	\epsilon(\beta\code v~e)
	= \lfp_{\code v} \epsilon(e) 
	\sqsubseteq_\Ie \lfp_{\code v} \epsilon(f)
	= \epsilon(\beta\code v~f)
	\]
	From monotonicity of \((-)[-\gdsub\code v]\), we glean that
	\[
	\epsilon(\mu\code v~e)
	= \lfp_{\code v} \epsilon(e)[\mu\code v~e\gdsub \code v] 
	\sqsubseteq_\Ie \lfp_{\code v} \epsilon(f)[\mu\code v~e\gdsub \code v] 
	\sqsubseteq_\Ie \lfp_{\code v} \epsilon(f)[\mu\code v~f\gdsub \code v] 
	= \epsilon(\mu\code v~f)\qedhere
	\] 
\end{proof}

\section{Proofs from \cref{sec:axiomatisation}}

\intermediatesoundnesstheorem*

\begin{proof}
	Uniqueness follows from the surjectivity of \([-]_\equiv\), so it suffices to show existence.
	We use a diagonal fill-in to produce the map \(\bar\epsilon\) in the diagram
	\[\begin{tikzcd}
		(\Exp, \sqsubseteq_{exp}) \ar[d, "\epsilon"'] \ar[r, "{[-]_{\equiv}}"] & (\E, \sqsubseteq) \ar[d, dashed, "\bar\epsilon"] \\
		B_M(\Exp, \sqsubseteq_{exp}) \ar[r] & B_M(\E, \sqsubseteq)
	\end{tikzcd}\]
	This requires us to show that whenever \(e \sqsubseteq f\), we must also have the inequality \(B_M([-]_\equiv)\circ \epsilon(e) \sqsubseteq_\Ie B_M([-]_\equiv)\circ \epsilon(f)\).
	We show this by induction on the proof of \(e \sqsubseteq f\). 
	\begin{description}
		\item[(\acro{\Ie})] Suppose \(e \sqsubseteq f\) follows from \(e \sqsubseteq_\Ie f\). Then, since \(\epsilon\) is an \(S\)-algebra homomorphism, \(\epsilon(e) \sqsubseteq_\Ie \epsilon(f)\). 
		By monotonicity of \(B_M([-]_\equiv)\), \(B_M([-]_\equiv)\circ \epsilon(e) \sqsubseteq_\Ie B_M([-]_\equiv)\circ \epsilon(f)\).
		
		\item[(\acro{S})] Suppose \(e = \sigma_1(e_1, \dots, e_n)\) and \(f = \sigma_2(e_1, \dots, e_n)\) and \(\sigma_1 \le_S \sigma_2\). 
		Again, \(\epsilon\) is an \(S\)-algebra homomorphism, so is by definition monotonic wrt to the operation order.
		It follows that \(\epsilon(e) \sqsubseteq_\Ie \epsilon(f)\).
		By monotonicity of \(B_M([-]_\equiv)\), \(B_M([-]_\equiv)\circ \epsilon(e) \sqsubseteq_\Ie B_M([-]_\equiv)\circ \epsilon(f)\).
		
		\item[(\acro{Act})] Suppose \(e = \code a_1 e_1\) and \(f = \code a_2 e_1\).
		Then by monotonicity of \(\langle-\rangle\), \(\langle \code a_1, e_1\rangle \le \langle \code a_2, e_1\rangle\). 
		Again, it follows that \(\epsilon(e) \sqsubseteq_\Ie \epsilon(f)\). 
		
		\item[(\acro{R1a})] Suppose that \(\code v \in \gd(g_i)\) for \(i \le n\) and \(e = \beta\code v~p(\code v, \vec g)\) and \(f = (\lfp_{\code v} p)(\vec g)\). 
		Then 
		\[
		\epsilon(e) 
		= (\lfp_{\code v} p)(\epsilon(g_1), \dots, \epsilon(g_n))
		= \epsilon((\lfp_{\code v} p)(g_1, \dots, g_n))
		\]
		because \(\epsilon\) is an \(S\)-algebra homomorphism.
		Therefore, \(B_M([-]_\equiv)\circ \epsilon(e) = B_M([-]_\equiv)\circ \epsilon(f)\) as desired.
		
		\item[(\acro{R1b})] Suppose \(e = \mu \code v~e_1\) and \(f = \beta \code v~e_1[e\gdsub \code v]\).
		Then \[
		\epsilon(e) 
		= \lfp_{\code v} \epsilon(e_1)[e\gdsub \code v]
		= \epsilon(\beta\code v~e_1)[e\gdsub \code v]
		= \epsilon(\beta\code v~e_1[e\gdsub \code v])
		\]
		because \(\epsilon\) is an \(S\)-algebra homomorphism.
		Therefore, \(B_M([-]_\equiv)\circ \epsilon(e) = B_M([-]_\equiv)\circ \epsilon(f)\) as desired.
	\end{description}
	This concludes the base case.
	For the inductive case, assume  \(B_M([-]_\equiv)\circ \epsilon(e_i) \le B_M([-]_\equiv)\circ \epsilon(f_i)\) for all \(i \le n\). 
	We skip the (\acro{P}i) cases, as these are trivially true by assumption. 
	\begin{description}
		\item[(\acro{C1})] Suppose \(e = \sigma(e_1, \dots, e_n)\) and \(f = \sigma(f_1, \dots, f_n)\). 
		Since \(\epsilon\) is an \(S\)-algebra homomorphism and \(B([-]_\equiv)\) is a monotone \(S\)-algebra homomorphism, 
		\begin{align*}
			B([-]_\equiv) \circ \epsilon(e) 
			&= B([-]_\equiv) (\sigma(\epsilon(e_1), \dots, \epsilon(e_n))) \\
			&= \sigma (B([-]_\equiv) \circ \epsilon(e_1), \dots, B([-]_\equiv) \circ \epsilon(e_n)) \\
			&\sqsubseteq_\Ie \sigma (B([-]_\equiv) \circ \epsilon(f_1), \dots, B([-]_\equiv) \circ \epsilon(f_n)) \\
			&= B([-]_\equiv) \circ \epsilon(f)
		\end{align*}
		
		\item[(\acro{C2})] Suppose \(e = \code a e_1\) and \(f = \code a f_1\). 
		Then 
		\begin{align*}
			B([-]_\equiv) \circ \epsilon(e) 
			&= B([-]_\equiv) (\langle \code a, e \rangle) \\
			&= \langle \code a, [e]_\equiv \rangle \\
			&\le \langle \code a, [f]_\equiv \rangle \\
			&= B([-]_\equiv) \circ \epsilon(f)
		\end{align*}
		by monotonicity of \(\eta\).
		
		\item[(\acro{C3})] Suppose \(e = \beta\code v~e_1\) and \(f = \beta\code v~f_1\). 
		We have 
		\begin{align*}
			B([-]_\equiv) \circ \epsilon(e) 
			&= B([-]_\equiv) (\lfp_{\code v}\epsilon(e_1)) \\
			&= \lfp_{\code v} B_M([-]_{\equiv}) \circ \epsilon(e_1)\\ 
			&\sqsubseteq_\Ie \lfp_{\code v} B_M([-]_{\equiv}) \circ \epsilon(f_1)\\
			&= B([-]_\equiv) \circ \epsilon(f)
		\end{align*}
		
		\item[(\acro{C4})] Suppose \(e = \mu\code v~e_1\) and \(f = \mu\code v~f_1\). 
		Then 
		\begin{align*}
			B([-]_\equiv) \circ \epsilon(e) 
			&= B([-]_\equiv) (\lfp_{\code v}\epsilon(e_1)[e\gdsub \code v]) \\
			&= B([-]_\equiv)(\lfp_{\code v}\epsilon(e_1))[[e]_\equiv\gdsub \code v] \\
			&= \lfp_{\code v} B([-]_\equiv)\circ \epsilon(e_1)[[e]_\equiv\gdsub \code v] \\
			&\sqsubseteq_\Ie \lfp_{\code v} B([-]_\equiv)\circ \epsilon(f_1)[[e]_\equiv\gdsub \code v] \\
			&\sqsubseteq_\Ie \lfp_{\code v} B([-]_\equiv)\circ \epsilon(f_1)[[f]_\equiv\gdsub \code v] \\
			&= B([-]_\equiv) \circ \epsilon(f)
		\end{align*}
		
		\item[(\acro{R2a})] Suppose \(e = \beta\code v~p(v, \vec f)\) and \(f = g\) and \(p(g, \vec f) \sqsubseteq g\), and assume that
		\[
		B_M([-]_\equiv)\circ \epsilon(p(g, \vec f)) \sqsubseteq_\Ie B_M([-]_\equiv)\circ \epsilon(g).
		\] 
		Then we have 
		\begin{align*}
			B([-]_\equiv) \circ \epsilon(e) 
			&= B([-]_\equiv) (\lfp_{\code v}\epsilon(p(\code v, \vec f))) \\
			&= \lfp_{\code v} (B([-]_\equiv) \circ \epsilon(p(\code v, \vec f))) \\
			&= \lfp_{\code v} (B([-]_\equiv) \circ p(\code v, \epsilon(\vec f))) \\
			&= \lfp_{\code v} (p(v, B([-]_\equiv) \circ \epsilon(\vec f))) \\
			&= \lfp_{\code v} p(B([-]_\equiv) \circ \epsilon(\vec f)) \\
		\end{align*}
		In other words, \(B([-]_\equiv) \circ \epsilon(e)\) is the least fixed-point of \(p(\code v, B([-]_\equiv) \circ \epsilon(\vec f))\), so it suffices to show that \(B_M([-]_\equiv)\circ \epsilon(g)\) is another prefixed-point. 
		To this end, compute
		\begin{align*}
			B_M([-]_\equiv)\circ \epsilon(p(g, \vec f))
			&= B_M([-]_\equiv) (p(\epsilon(g), \epsilon(\bar f))) \\
			&= p(B_M([-]_\equiv)\circ \epsilon(g), B_M([-]_\equiv)\circ\epsilon(\bar f)) 
		\end{align*}
		By assumption, then,
		\[
		p(B_M([-]_\equiv)\circ \epsilon(g), B_M([-]_\equiv)\circ\epsilon(\bar f)) \sqsubseteq_\Ie B_M([-]_\equiv)\circ \epsilon(g)
		\]
		It follows that \(B([-]_\equiv) \circ \epsilon(e) \sqsubseteq_\Ie B([-]_\equiv) \circ \epsilon(g)\).
		
		\item[(\acro{R2b})] Suppose \(e = \mu\code v~e_1\) and \(f = g\) and \(\beta\code v~e_1[g\gdsub\code v] \sqsubseteq g\), and assume that
		\[
		B_M([-]_\equiv)\circ \epsilon(\beta\code v~e_1[g\sub \code v]) \sqsubseteq_\Ie B_M([-]_\equiv)\circ \epsilon(g).
		\] 
		We have 
		\begin{align*}
			B_M([-]_\equiv) \circ \epsilon(e)
			&= B_M([-]_\equiv)(\lfp_{\code v} \epsilon(e_1)[e\gdsub\code v]) \\
			&= B_M([-]_\equiv)(\beta{\code v}~ \epsilon(e_1)) [[e]_\equiv\gdsub\code v]\\
			&\sqsubseteq_\Ie B_M([-]_\equiv)(\beta{\code v}~ \epsilon(e_1)) [[g]_\equiv\gdsub\code v]\\
			&= B_M([-]_\equiv)(\beta{\code v}~ \epsilon(e_1)[g\gdsub\code v])	\\
			&\sqsubseteq_\Ie B_M([-]_\equiv)(g)	
		\end{align*}
		
		\item[(\acro{R3})] Suppose \(e = g\) and \(f = \mu\code v~e_1\), and assume that \(g \sqsubseteq \beta\code v~e_1[g\gdsub\code v]\) and
		\[
		B_M([-]_\equiv) \circ \epsilon(g) 
		\sqsubseteq_\Ie B_M([-]_\equiv) \circ \epsilon(\beta\code v~e_1[g\gdsub\code v])
		\]
		Then we have 
		\begin{align*}
			B_M([-]_\equiv) \circ \epsilon(g) 
			&\le B_M([-]_\equiv) \circ \epsilon(\beta\code v~e_1[g\gdsub\code v]) \\
			&= B_M([-]_\equiv)(\lfp_{\code v}\epsilon(e_1)[g\gdsub\code v]) \\
			&= B_M([-]_\equiv)(\lfp_{\code v}\epsilon(e_1))[[g]_\equiv\gdsub\code v] \\
			&\sqsubseteq_\Ie B_M([-]_\equiv)(\lfp_{\code v}\epsilon(e_1))[[\mu\code v~e_1]_\equiv\gdsub\code v] \\
			&= B_M([-]_\equiv)(\lfp_{\code v}\epsilon(e_1)[\mu\code v~e_1\gdsub\code v]) \\
			&= B_M([-]_\equiv) \circ \epsilon(\mu\code v~e_1)\qedhere
		\end{align*}
	\end{description}
\end{proof}

\paragraph*{Proof of Completeness}
We now turn our attention to the details of steps 1 through 4 in \cref{sec:axiomatisation}, beginning with the association of each ordered process with a system of equations.
Given a poset \((X, \le)\) of \emph{indeterminates}, the underlying set of the poset \(\Exp(X, \le)\) is given by the grammar
\[
x \mid \code v \mid \code a e \mid \sigma(e_1, \dots, e_n) \mid \beta \code v~ e \mid \mu\code v~e
\qquad\qquad (x \in X)
\]
and partially ordered by the smallest precongruence extending \(\le_{\exp}\) and the partial order \(\le\) on \(X\).
For any \(h : (X, \le) \to (\Exp, \le_{\exp})\), we define a map \(h^\# : \Exp(X, \le) \to (\Exp, \le_{\exp})\) by simultaneously substituting indeterminates for expressions, ie.
\[
	h^\#(x) = h(x) 
	\qquad 
	h^\#(\code v) = \code v 
	\qquad 
	h^\#(\code ae) = \code a~h^\#(e)
	\qquad 
	h^\#(\sigma(\vec e)) = \sigma(h^\#(e_1), \dots, h^\#(e_n))
\]
and \(h^\#(\beta\code v~e) = \beta\code v~h^\#(e)\) and \(h^\#(\mu\code v~e) = \mu\code v~h^\#(e)\) for any \(x \in X\) and \(e,e_i \in \Exp(X, \le)\). 

\begin{definition}
	A \defn{(monotone) system of equations (with indeterminates from)} \((X, \le)\) is a map \(s : (X, \le) \to \Exp(X, \le)\).
	A map \(\phi : (X, \le) \to (\E, \sqsubseteq)\) is a \emph{solution} to \(s\) if \(\phi \equiv \phi^{\#}\circ s\). 
	We say that \(s\) is \emph{guarded} if \(x\) is guarded in \(s(x)\) for all \(x \in X\). 
\end{definition}

Step 1 can now be completed as follows:
Given an ordered process \((X, \le, \vartheta)\) and a state \(x \in X\), we define \(s_\vartheta(x)\) by choosing a representative \(p(\vec{\code v}, \langle \code a_i, x_i\rangle_{i \le n})\) of \(\vartheta(x)\) and setting \(s_\vartheta(x) = p(\vec{\code v}, \code a_1x_1, \dots, \code a_nx_n)\). 
Any two choices in representatives produce equivalent syntactic systems up to \(\equiv\), so we simply refer to any \(s_\vartheta\) constructed in this manner as \emph{the} syntactic system associated with \((X, \le, \vartheta)\).
Notice also that \(s_\vartheta\) is necessarily guarded. 

Step 2 is contained in the following theorem, which implies that the behaviours of finite ordered processes can be compared (in \(\E\)) by solving their systems and comparing terms. 

\begin{theorem}\label{thm:solutions are homs}
	Let \((X, \le, \vartheta)\) be an ordered process with associated system of equations \(s_\vartheta\). 
	Then \(\phi : (X, \le) \to (\E, \sqsubseteq)\) is a solution to \(s_\vartheta\) if and only if \(\phi : {(X, \le, \vartheta)} \to (\E, \sqsubseteq, \bar\epsilon)\).
\end{theorem}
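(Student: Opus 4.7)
The plan is to prove both implications; the forward direction is a direct computation, while the reverse direction turns on a \emph{fundamental expansion lemma} for the small-step semantics.

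For the forward direction, fix $x \in X$ and choose any representative $\vartheta(x) = p(\vec{\code v}, \langle\code a_i, x_i\rangle_{i\le n})$, so by construction $s_\vartheta(x) = p(\vec{\code v}, \code a_1 x_1, \dots, \code a_n x_n)$. The solution hypothesis gives $\phi(x) = \phi^{\#}(s_\vartheta(x)) = p(\vec{\code v}, \code a_1\phi(x_1), \dots, \code a_n\phi(x_n))$ in $\E$. Selecting representatives $e_i \in \Exp$ of each $\phi(x_i)$ makes $e := p(\vec{\code v}, \code a_1 e_1, \dots, \code a_n e_n)$ a representative of $\phi(x)$, and since $\epsilon$ is an $S$-algebra homomorphism and $\bar\epsilon\circ[-]_\equiv = B_M([-]_\equiv)\circ\epsilon$, I compute
\[
\bar\epsilon(\phi(x)) \;=\; B_M([-]_\equiv)(\epsilon(e)) \;=\; p(\vec{\code v}, \langle\code a_i, \phi(x_i)\rangle) \;=\; B_M(\phi)(\vartheta(x)).
\]

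For the reverse direction, the key tool is the fundamental expansion lemma: for every $e \in \Exp$ and any representative $\epsilon(e) = q(\vec{\code u}, \langle\code a_j, e_j'\rangle_{j\le m})$, one has $e \equiv q(\vec{\code u}, \code a_1 e_1', \dots, \code a_m e_m')$. I would prove this by induction on $e$: the base cases and the $\sigma$ case follow from the definition of $\epsilon$ as an $S$-algebra homomorphism; the $\beta\code v~e'$ case expands $e'$ by the inductive hypothesis and invokes (R1a) to collapse the $\beta$ over action-guarded subterms; and the $\mu\code v~e'$ case assembles (R1b), (R2b), and (R3) together with the small-step definition $\epsilon(\mu\code v~e') = \lfp_\code v\epsilon(e')[\mu\code v~e'\gdsub\code v]$ to pin $\mu\code v~e'$ down up to $\equiv$. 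With the lemma in hand, assume $\phi$ is a coalgebra homomorphism and pick any representative $e$ of $\phi(x)$. The expansion $e \equiv q(\vec{\code u}, \code a_1 e_1', \dots, \code a_m e_m')$ and the coalgebra hypothesis yield
\[
q(\vec{\code u}, \langle\code a_j, [e_j']_\equiv\rangle) \;=\; \bar\epsilon(\phi(x)) \;=\; B_M(\phi)(\vartheta(x)) \;=\; p(\vec{\code v}, \langle\code a_i, \phi(x_i)\rangle)
\]
in the free $(S,\Ie)$-algebra $B_M\E$, an equality witnessed by an $\equiv_\Ie$-derivation. This derivation lifts through axiom (\Ie) and the precongruence closure of $\equiv$ to $q(\vec{\code u}, \code a_1 e_1', \dots) \equiv p(\vec{\code v}, \code a_1 f_1, \dots, \code a_n f_n)$ in $\Exp$, once any representatives $f_i$ of $\phi(x_i)$ are chosen, and chaining with the expansion of $e$ delivers $\phi(x) = \phi^{\#}(s_\vartheta(x))$ in $\E$.

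The main obstacle is the $\mu\code v~e'$ case of the fundamental expansion lemma: the semantic fixed point in $\epsilon(\mu\code v~e')$ involves the guarded syntactic substitution $[\mu\code v~e'\gdsub\code v]$, whereas (R1b), (R2b), and (R3) are stated in terms of capture-avoiding substitution, so matching them requires careful use of (R1a) after inductively expanding $e'$ into a branching configuration. A secondary technicality is the lifting of $\equiv_\Ie$-derivations from $B_M\E$ back to $\Exp$, which relies on (\Ie) together with the fact that $\equiv$ is a precongruence preserved under each term constructor.
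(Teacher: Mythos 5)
Your proposal is correct and follows essentially the same route as the paper: your ``fundamental expansion lemma'' ($e \equiv q(\vec{\code u}, \code a_1 e_1', \dots, \code a_m e_m')$ whenever $\epsilon(e) = q(\vec{\code u}, \langle\code a_j, e_j'\rangle)$) is exactly the content of the paper's Fundamental Lemma, which packages the same fact as the statement that $\bar\epsilon$ has a monotone two-sided inverse $(-)^\heartsuit$ determined by $\langle\code a,[e]_\equiv\rangle^\heartsuit = [\code a e]_\equiv$, proved by the same induction on $e$ with the same use of (\textsf{R1a}) in the $\beta$ case and of guarded substitution in the $\mu$ case. The only difference is presentational: the paper derives both implications at once from the single chain $(-)^\heartsuit\circ B_M(\phi)\circ\vartheta(x) = s_\vartheta(x)[\phi(z)\gdsub z]_{z\in X}$, whereas you compute the forward direction directly on representatives and invoke the expansion lemma only for the converse.
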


The proof of this is made much easier by the following lemma.

\begin{lemma}[Fundamental]
	The coalgebraic structure map \(\bar\epsilon : (\E, \sqsubseteq) \to B_M(\E,\sqsubseteq)\) has a monotone inverse.
\end{lemma}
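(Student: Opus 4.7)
The plan is to construct a monotone inverse $\psi: B_M(\E, \sqsubseteq) \to (\E, \sqsubseteq)$ via the universal property of $B_M(\E, \sqsubseteq) = M((\Var,=) + (\Act, \le_{\act})\times\E)$ as the free $(S,\Ie)$-algebra on the poset $(\Var,=) + (\Act, \le_{\act})\times\E$. First, $(\E, \sqsubseteq)$ is itself an $(S,\Ie)$-algebra: axiom $(\Ie)$ forces $\sqsubseteq_\Ie \subseteq \sqsubseteq$, so the $S$-operations descend from $(\Exp, \le_{\exp}, \ev)$ to a quotient that satisfies every inequation in $\Ie$. On generators, I will set $\kappa(\code{v}) = [\code{v}]_\equiv$ and $\kappa(\code{a}, [e]_\equiv) = [\code{a}e]_\equiv$; this is well-defined by precongruence of $\equiv$ and monotone by $(\textsf{Act})$ together with the discreteness of $\Var$. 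The universal property extends $\kappa$ uniquely to an $(S,\Ie)$-algebra homomorphism $\psi$, which is then automatically monotone.

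Next I would verify $\bar\epsilon\circ\psi = \id_{B_M(\E, \sqsubseteq)}$ by noting that both sides are $(S,\Ie)$-algebra endomorphisms of $B_M(\E, \sqsubseteq)$ and that they agree on generators: $\bar\epsilon([\code{v}]_\equiv) = \code{v}$ and $\bar\epsilon([\code{a}e]_\equiv) = \langle\code{a}, [e]_\equiv\rangle$. Uniqueness of the universal extension then forces equality.

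For $\psi\circ\bar\epsilon = \id_\E$, I would use the surjectivity of $[-]_\equiv$ and induct on $e \in \Exp$. The base cases $e = \code{v}$ and $e = \code{a}f$ are immediate from the definition of $\kappa$, and the case $e = \sigma(\vec e)$ reduces to the inductive hypotheses because both $\bar\epsilon$ and $\psi$ are $S$-algebra homomorphisms. For $e = \beta\code{v}~e_1$, iterativity condition \textsf{2} pushes $B_M([-]_\equiv)$ through $\lfp_{\code{v}}$; writing a representative of $B_M([-]_\equiv)(\epsilon(e_1))$ as $p(\code{v}, \vec{\code{u}}, \langle\code{a}_i, [f_i]_\equiv\rangle_{i\le n})$, the guardedness premise of $(\textsf{R1a})$ holds automatically because the second coordinates of $B_M$-transitions only occur under action labels, and the axiom folds $\psi$ of the resulting least fixed point back into $[\beta\code{v}~p(\code{v}, \vec{\code{u}}, \code{a}_1f_1, \ldots, \code{a}_nf_n)]_\equiv$. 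The inductive hypothesis on $e_1$ then collapses this to $[\beta\code{v}~e_1]_\equiv$.

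The main obstacle is the case $e = \mu\code{v}~e_1$, where the small-step clause $\epsilon(\mu\code{v}~e_1) = \lfp_{\code{v}}\,\epsilon(e_1)[\mu\code{v}~e_1 \gdsub\code{v}]$ forces me to track the interaction of $B_M([-]_\equiv)$ with guarded syntactic substitution. I expect a naturality lemma to yield $B_M([-]_\equiv)(q[\mu\code{v}~e_1\gdsub\code{v}]) = B_M([-]_\equiv)(q)\{[\mu\code{v}~e_1]_\equiv \gdsub \code{v}\}$, after which the argument from the $\beta$-case produces $\psi(\bar\epsilon([\mu\code{v}~e_1]_\equiv)) = [(\beta\code{v}~e_1)[\mu\code{v}~e_1\sub\code{v}]]_\equiv$. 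Closing the loop then reduces to deriving $(\beta\code{v}~e_1)[\mu\code{v}~e_1\sub\code{v}] \equiv \mu\code{v}~e_1$: one direction is $(\textsf{R1b})$, and the other follows from $(\textsf{R2b})$ instantiated at $g = (\beta\code{v}~e_1)[\mu\code{v}~e_1\sub\code{v}]$, whose premise is obtained by monotonicity of syntactic substitution together with $(\textsf{R1b})$ again.
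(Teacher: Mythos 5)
Your proposal is correct and follows essentially the same route as the paper's proof: the inverse is obtained as the unique \((S,\Ie)\)-algebra extension of the generator map \(\code v\mapsto[\code v]_\equiv\), \(\langle\code a,[e]_\equiv\rangle\mapsto[\code ae]_\equiv\) (the paper calls it \((-)^\heartsuit\)), the right-inverse identity is checked on generators and propagated by uniqueness of \(S\)-algebra homomorphisms, and the left-inverse identity is proved by induction on \(e\), with \((\textsf{R1a})\) closing the \(\beta\code v\) case and commutation with guarded substitution plus \((\textsf{R1b})\)/\((\textsf{R2b})\) closing the \(\mu\code v\) case. The only difference is organisational: the paper isolates the commutation of the inverse with \([[g]_\equiv\gdsub\code v]\) as an explicit sub-step of the \(\mu\) case, which your plan defers to an expected naturality lemma.
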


\begin{proof}
	By (\acro{\Ie}), \((\E, \sqsubseteq, \bar\ev|_S) \models \Ie\). 
	Thus, the monotone map \((-)^\heartsuit : (\Var + \Act \times \E, \le) \to (\E, \sqsubseteq)\) defined by \[
		\code v^\heartsuit = \code v 
		\qquad \qquad
		\code \langle\code a, [e]_\equiv\rangle^\heartsuit = [\code ae]_\equiv
	\]
	lifts to a unique \(S\)-algebra homomorphism \((-)^\heartsuit : (B_M(\E, \sqsubset), \rho )\to (\E, \sqsubseteq, \bar\ev|_S)\).
	This map is readily seen to be a right inverse to \(\bar\epsilon\), as 
	\(
		\bar\epsilon(\code v^\heartsuit) = \bar\epsilon(\code v) = \code v
	\)
	and 
	\(
		\bar\epsilon(\langle\code a, [e]_\equiv\rangle^\heartsuit) 
		= \bar\epsilon([\code a e]_\equiv)
		= \langle \code a, [e]_\equiv\rangle 
	\)
	and both \(\bar\epsilon\) and \((-)^\heartsuit\) are \(S\)-algebra homomorphisms.
	It remains to see that \((-)^\heartsuit\) is a left inverse to \(\bar\epsilon\).
	We prove that \(\bar\epsilon([e]_\equiv)^\heartsuit = [e]_\equiv\) by induction on \(e\).
	\begin{itemize}
		\item \(\bar\epsilon([\code v]_\equiv)^\heartsuit = \code v^\heartsuit = \code v\)
		\item \(\bar\epsilon([\code ae]_\equiv)^\heartsuit = \langle \code a, [e]_\equiv\rangle^\heartsuit = \code a[e]_\equiv = [\code ae]_\equiv\)
	\end{itemize}
	This concludes the base cases. 
	For the inductive step,
	\begin{itemize}
		\item \(
		\bar\epsilon([\sigma(e_1, \dots, e_n)]_\equiv)^\heartsuit
		= \sigma(\bar\epsilon([e_1]_\equiv)^\heartsuit, \dots, \bar\epsilon([e_n]_\equiv)^\heartsuit)
		= \sigma([e_1]_\equiv, \dots, [e_n]_\equiv)
		= [\sigma(e_1, \dots, e_n)]_\equiv
		\)
		
		\item For the unguarded recursion case, let \(\bar\epsilon(e) = q(\code v, \vec{\code u}, \langle \code a_i, f_i\rangle_{i \le n})\).
		By induction, 
		\[
		[e]_\equiv = \bar\epsilon([e]_\equiv)^\heartsuit 
		= (q(\code v, \vec{\code u}, \langle \code a_i, [f_i]_\equiv\rangle_{i \le n}))^\heartsuit
		= q(\code v, \vec{\code u}, (\code a_i [f_i]_\equiv)_{i \le n})
		\]
		Thus, we have
		\begin{align*}
			\bar\epsilon([\beta\code v~e]_\equiv)^\heartsuit
			&= (\lfp_{\code v} q(\vec{\code u}, \langle \code a_i, [f_i]_\equiv\rangle_{i \le n}))^\heartsuit \\
			&= \lfp_{\code v} q(\vec{\code u}, \langle \code a_i, [f_i]_\equiv\rangle_{i \le n}^\heartsuit) \\
			&= \lfp_{\code v} q(\vec{\code u}, ([\code a_i f_i]_\equiv)_{i \le n}) \\
			&= \beta{\code v}~ q(\code v, \vec{\code u}, ([\code a_i f_i]_\equiv)_{i \le n}) \\
			&= [\beta\code v~e]_\equiv
		\end{align*}
		as desired.
		
		\item For the full recursion case, we need to first show that \((-)^\heartsuit\) commutes with \([[g]_\equiv\gdsub\code v]\) for any \(g \in \Exp\). 
		This is achieved by recalling that \([[g]_\equiv\gdsub\code v]\) is an \(S\)-algebra homomorphism \(B_M(\E, \subseteq) \to B_M(\E, \sqsubseteq)\) and noting that \[
			([\code v]_\equiv[[g]_\equiv\gdsub\code v])^\heartsuit = ([\code v]_\equiv)^\heartsuit = [\code v]_\equiv = [\code v]_\equiv^\heartsuit[[g]_\equiv\gdsub\code v]
		\]
		and
		\begin{align*}
		(\langle \code a, [e]_\equiv\rangle [[g]_\equiv\gdsub\code v])^\heartsuit 
		&= (\langle \code a, [e[g\sub\code v]]_\equiv\rangle)^\heartsuit 	\\
		&= [\code a e[g\sub\code v]]_\equiv \\
		&= [\code a e]_\equiv [[g]_\equiv\gdsub\code v] \\
		&= \langle\code a, [e]_\equiv\rangle^\heartsuit [[g]_\equiv\gdsub\code v]
		\end{align*}
		Finally, we use the previous case to conclude that
		\begin{align*}
			\bar\epsilon([\mu\code v~e]_\equiv)^\heartsuit 
			&= (\bar\epsilon([\beta\code v~e]_\equiv)[[\mu\code v~e]_\equiv\gdsub\code v])^\heartsuit \\
			&= (\bar\epsilon([\beta\code v~e]_\equiv))^\heartsuit [[\mu\code v~e]_\equiv\gdsub\code v] \\
			&= [\beta\code v~e]_\equiv[[\mu\code v~e]_\equiv\gdsub\code v] \\
			&= [\beta\code v~e[\mu\code v~e\gdsub\code v]]_\equiv \\
			&= [\mu\code v~e]_\equiv\qedhere
		\end{align*}
	\end{itemize}
\end{proof}

\begin{proof}[Proof of \cref{thm:solutions are homs}]
	It suffices to show that \(\phi\) is a solution iff \((-)^\heartsuit \circ B_M(\phi) \circ \vartheta = \phi\), since \(\bar\epsilon\) is a monotone left inverse to \((-)^\heartsuit\).
	This is straight-forward: for any \(x \in X\), observe that
	\begin{align*}
		(-)^\heartsuit\circ B_M(\phi)\circ\vartheta(x)
		&= (B_M(\phi) \circ \vartheta(x))^\heartsuit \\
		&= (\vartheta(x)[\phi(z)\gdsub z]_{z \in X})^\heartsuit \\
		&= (\vartheta(x))^\heartsuit[\phi(z)\gdsub z]_{z \in X} \\
		&= s_\vartheta(x)[\phi(z)\gdsub z]_{z \in X}
	\end{align*}
	The last expression is equal to \(\phi(x)\) if \(\phi\) is a solution, and the first expression is equal to \(\phi(x)\) if \(\phi\) is a homomorphism. 
\end{proof}

Step 3 takes the form of the following theorem, a variation on a result of Milner~\cite[Theorem 5.7]{milner1984complete} that establishes the unique solvability of systems of equations associated with finite LTSs. 
Extra care is needed in our result to ensure the monotonicty of our solutions.

\begin{theorem}\label{thm:ordered Milner theorem}
	Every finite guarded system of equations admits a unique solution.
\end{theorem}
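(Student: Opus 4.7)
The plan is to proceed by induction on $|X|$ using a variant of Milner's Gaussian-elimination technique, adapted to the ordered setting. The base case $|X|=0$ is vacuous. Before beginning the inductive step, I would first derive the fixed-point law $\mu\code v~e \equiv e[\mu\code v~e\sub\code v]$ whenever $\code v$ is guarded in $e$. Writing $e$ in the form $p(\vec f)$ where $p$ is the top-level $S$-structure and each $f_i$ has $\code v$ guarded, axiom (R1a) (with $p$ free of $\code v$) gives $\beta\code v~e \equiv e$, so (R1b) yields $e[\mu\code v~e\sub\code v] \sqsubseteq \mu\code v~e$. The reverse inclusion follows from (R2b) applied with $g = e[\mu\code v~e\sub\code v]$: congruence together with the inclusion just established produces the premise $(\beta\code v~e)[g\sub\code v]\equiv e[e[\mu\code v~e\sub\code v]\sub\code v] \sqsubseteq e[\mu\code v~e\sub\code v]=g$.

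For existence in the inductive step, I would pick a maximal element $z$ of $(X,\le)$ (maximality will be essential for monotonicity), set $X'=X\setminus\{z\}$, and define the reduced system $s':(X',\le)\to\Exp(X',\le)$ by $s'(y) = s(y)[\mu z~s(z)\sub z]$. Guardedness and monotonicity of $s'$ are routine from the corresponding properties of $s$ together with monotonicity of substitution. By the induction hypothesis, $s'$ has a unique monotone solution $\phi':(X',\le)\to(\E,\sqsubseteq)$, which I extend to $\phi:(X,\le)\to(\E,\sqsubseteq)$ by $\phi(w)=\phi'(w)$ for $w\neq z$ and $\phi(z) = (\phi')^\#(\mu z~s(z))$. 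The solution equation for $y\neq z$ is immediate by unpacking the substitutions in $(\phi')^\#(s'(y))$, and for $y=z$ it is exactly the derived fixed-point law.

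Monotonicity of $\phi$ reduces, by maximality of $z$, to the case $y<z$ with $y\in X'$. Here $s(y)\le s(z)$ in $\Exp(X,\le)$ by monotonicity of $s$; monotonicity of $\phi'$ (from the induction hypothesis) makes $\phi^\#$ monotone, and so $\phi(y)\equiv\phi^\# s(y)\sqsubseteq\phi^\# s(z)\equiv\phi(z)$. Uniqueness proceeds by a parallel induction: given any solution $\phi_1$, the equation $\phi_1(z)\equiv s(z)[\phi_1(w)\sub w]_{w\in X}$ combined with guardedness of $z$ in $s(z)$ and axioms (R2b) and (R3) pins $\phi_1(z)$ down to $\mu z~s(z)[\phi_1(w)\sub w]_{w\in X'}$; restricting $\phi_1$ to $X'$ then yields a solution of $s'$, which by the induction hypothesis agrees up to $\equiv$ with $\phi'$, so $\phi_1\equiv\phi$ on all of $X$.

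The main obstacle I anticipate is the interlocking of monotonicity with the Gaussian elimination: establishing $\phi(y)\sqsubseteq\phi(z)$ relies on the induced monotonicity of $\phi^\#$, which in turn needs monotonicity of $\phi'$, so choosing $z$ maximal is essential to prevent circular comparisons between $\phi'(y)$ and $\phi(z)$ when $y>z$. A secondary subtlety is the derivation of the fixed-point law above from the one-sided axiom (R1b), without which the inductive step could not even exhibit $\phi$ as a solution to $s$.
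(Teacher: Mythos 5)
Your proposal is correct and follows essentially the same route as the paper's: induction on \(|X|\) via Gaussian elimination of one indeterminate, deriving and then using the guarded fixed-point law \(\mu\code v~e \equiv e[\mu\code v~e\sub\code v]\) (the paper's derived rules \textsf{(Rfp2)}/\textsf{(UFP)}) to drive both existence and uniqueness. The only difference is that you eliminate a maximal element of \((X,\le)\) where the paper eliminates a minimal one; the monotonicity bookkeeping is symmetric under this swap, so nothing of substance changes.
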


The proof of this theorem requires the following lemma (particularly the last statement), which establishes the crutial features of the inference system given in \cref{fig:axioms}.

\begin{lemma}
	The following are derivable from the axioms in \cref{fig:axioms}.
	\begin{description}
		\item[\acro{(R3')}] If \(e[g\sub \code v] \sqsubseteq g\), then \(\mu\code v~e \sqsubseteq g\).
		\item[\acro{(R\alpha a)}] If \(\code w \notin \fv(f_i)\) for any \(i \le n\), then \(\beta\code w~p(\code w, \vec f) \sqsubseteq \beta\code v~p(\code v, \vec f)\).
		\item[\acro{(R\alpha b)}] If \(\code w \notin \fv(e)\), then \(\mu\code w~e[\code w/\code v] \sqsubseteq \mu\code v~e\).
		\item[\acro{(Rfp1)}] \(\beta \code v~e \equiv e[\beta\code v~e\unsub\code v]\) where \([\beta\code v~e\unsub\code v]\) replaces unguarded instances of \(\code v\) with \(\beta\code v~e\)
		\item[\acro{(Rfp2)}] \(\mu \code v~e \equiv e[\mu\code v~e\sub\code v]\) 
		\item[\acro{(Rex)}] \(\mu\code v~\beta\code v~e \equiv \mu\code v~e\)
		\item[\acro{(UFP)}] If \(\code v \in \gd(e)\) and \(e[g\sub\code v] \equiv g\), then \(\mu\code v~e \equiv g\). 
	\end{description}
\end{lemma}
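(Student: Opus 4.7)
My plan is to derive the seven rules in the order listed, each building on the earlier ones and on the primitive axioms (\acro{R1a}, \acro{R1b}, \acro{R2a}, \acro{R2b}, \acro{R3}) from \cref{fig:axioms}. The workhorse observation is a \emph{top-level normal form}: every \(e \in \Exp\) is \(\equiv\)-equivalent, via the small-step structure \(\epsilon\), to a term of the shape \(p(\code v, \vec f)\) where \(p\) is an \(S\)-term and each \(f_i\) is either a return variable different from \(\code v\) or an action-prefixed expression \(\code a e'\); in either case \(\code v \in \gd(f_i)\). This is exactly the shape required to fire \acro{R1a} and \acro{R2a}.

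For \acro{R3'}, starting from \(e[g \sub \code v] \sqsubseteq g\), I would put the guarded substitution \(e[g \gdsub \code v]\) into normal form as \(p(\code v, \vec{f'})\); since every guarded occurrence of \(\code v\) has been eliminated by \(g\), each \(f_i'\) has no free \(\code v\). Then \(p(g, \vec{f'}) \equiv e[g \sub \code v] \sqsubseteq g\), so \acro{R2a} yields \(\beta\code v~p(\code v, \vec{f'}) \sqsubseteq g\). Because this term represents \((\beta\code v~e)[g \sub \code v]\) --- the axiomatic substitution on \(\beta\code v~e\) being precisely guarded substitution under the binder, as witnessed by the soundness argument for \acro{R1b} --- \acro{R2b} then delivers \(\mu\code v~e \sqsubseteq g\). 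The \(\alpha\)-conversion rules \acro{R$\alpha$a} and \acro{R$\alpha$b} follow by instantiating \acro{R2a} respectively \acro{R2b} with \(g\) taken to be the other side of the target inequality, the freshness hypotheses \(\code w \notin \fv(f_i)\) and \(\code w \notin \fv(e)\) preventing variable capture during the required renaming.

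The unfolding equivalences \acro{Rfp1} and \acro{Rfp2} each split into two inequalities: one direction comes straight from \acro{R1a} or \acro{R1b}, and the other from \acro{R2a} or the just-derived \acro{R3'} applied to the unfolding, which is manifestly a pre-fixed point of the relevant functional. The rule \acro{Rex} combines \acro{Rfp2} with \acro{R1a} to absorb the inner \(\beta\code v\) into \(\mu\code v\), observing that \(\beta\code v\) already resolves the unguarded \(\code v\)-occurrences to bottom so the outer \(\mu\code v\) ``sees'' the same expression it would have had without the \(\beta\code v\). Finally, \acro{UFP} is the immediate conjunction of \acro{R3'} (giving \(\mu\code v~e \sqsubseteq g\)) and \acro{R3} (giving \(g \sqsubseteq \mu\code v~e\)), the guardedness hypothesis being what licenses the premise of \acro{R3}. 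I expect \acro{R3'} to be the main obstacle: bridging the gap between ordinary syntactic substitution in its hypothesis and the axiomatic substitution \((\beta\code v~e)[g\sub\code v]\) expected by \acro{R2b} is exactly what the normal-form decomposition is designed to accomplish, and once \acro{R3'} is in hand the other six derivations are essentially packaging arguments.
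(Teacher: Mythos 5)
The paper's proof of this lemma is deliberately minimal: it derives only \acro{(UFP)} (``because it is made use of in the next theorem'') and leaves the other six items unproved. For \acro{(UFP)} your derivation is essentially the paper's: guardedness gives \(e \equiv \beta\code v~e\), hence \(e[g\sub\code v] \equiv (\beta\code v~e)[g\sub\code v]\), and then \acro{(R2b)} (your \acro{(R3')}) gives \(\mu\code v~e \sqsubseteq g\) while \acro{(R3)} gives the converse. Your sketches of the remaining six items go beyond what the paper records and look broadly viable, but two points need tightening. First, in \acro{(R3')} the claim that after substituting \(g\) ``each \(f_i'\) has no free \(\code v\)'' is false in general (take \(g = \code a\code v\)); what you actually need, and what does hold, is only that \(\code v \in \gd(f_i')\), since every occurrence of \(\code v\) introduced by the substitution lands behind an action prefix --- and you still owe an argument that \(\beta\code v~p(\code v, \vec f\,[g\sub\code v])\) is provably equivalent to the term \((\beta\code v~e)[g\sub\code v]\) appearing in the hypothesis of \acro{(R2b)}, which depends on the paper's convention for substitution under a \(\beta\code v\) binder. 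Second, in \acro{(Rex)} the remark that \(\beta\code v\) ``resolves the unguarded \(\code v\)-occurrences to bottom'' misstates the semantics this paper is built around: \(\beta\code v\) computes a least fixed point, not deadlock; the correct justification is that \(\mu\code v\) itself applies \(\lfp_{\code v}\) before looping and \(\lfp_{\code v}\) is idempotent (by the lemma stating that \(\lfp_x p\) contains no occurrence of \(x\)), so the inner \(\beta\code v\) is absorbed. Neither slip is fatal, but both should be repaired before the derivations are written out in full.
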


\begin{proof}
	We show (\textsf{UFP}) because it is made use of in the next theorem.

	Since \(\code v \in \gd(e)\), \(e \equiv \beta\code v~e\), so \(e[g\gdsub\code v] = e[g \sub \code v] \equiv (\beta\code v~e)[g\sub\code v]\).
	From (\textsf{R3}) we see that \(g \sqsubseteq \mu\code v~e\).
	The converse inequality can be seen from (\textsf{R2b}). 
\end{proof}

We use the fact (\acro{UFP}) above frequently in the proof of \cref{thm:ordered Milner theorem}. 

\begin{proof}[Proof of \cref{thm:ordered Milner theorem}]
	Let \(X = \{x_1, \dots, x_n\}\) and set \(e_i = s(x_i) \in \Exp(X, \le)\) for each \(i \le n\). 
	We proceed by induction on \(n\), the base case being easy: 
	if \(n = 1\), then \(\phi(x_1) = \mu x_1~e_1\) is the unique solution by (\acro{UFP}).
	
	Now suppose that any guarded system with fewer than \(n\) indeterminates has a unique solution.
	Since \(X\) is finite, there is a minimal element of \((X, \le)\). 
	We take this to be the last of the indexed indeterminates, \(x_n\), and solve the system \(s\) by first solving for \(x_n\).
	
	To this end, let \(f_n = \mu x_n ~e_n\), and \(f_i = e_i[f_j\sub x_j]_{j \le n}\) for \(i < n\).
	Then by monotonicity of substitution, \(s' : ((X, \le)-x_n) \to \Exp((X,\le)-x_n)\) defined \(s'(x_i) = f_i\) for \(i < n\) defines a monotone syntactic system. 
	And furthermore, \(x_i\) is guarded in \(s'(x_i)\) for each \(i < n\). 
	By the induction hypothesis, there exists a unique solution \(\phi' : ((X, \le)-x_n) \to (\E, \sqsubseteq)\). 
	Let \(g_i = \phi'(x_i)\) for \(i < n\) and define \(g_n = f_n[g_i\sub x_i]_{i < n}\). 
	By assumption, \(\phi'\) is monotone. 
	If \(x_n \le x_i\), we know that \(e_n \sqsubseteq e_i\), and so by monotonicity of substitution
	\begin{align*}
		g_n &= f_n[g_1\sub x_1, \dots, g_{n-1}\sub x_{n-1}] \\
			&= \mu x_n~e_n[g_1\sub x_1, \dots, g_{n-1}\sub x_{n-1}] \\
			&\equiv e_n[f_n\sub x_n][g_1\sub x_1, \dots, g_{n-1}\sub x_{n-1}] \\
			&\equiv e_n[g_1\sub x_1, \dots, g_{n-1}\sub x_{n-1}, g_n\sub x_n] \\
			&\sqsubseteq e_i[g_1\sub x_1, \dots, g_{n-1}\sub x_{n-1}, g_n\sub x_n] \\
			&\equiv g_i
	\end{align*}
	This establishes monotonicity of \(\phi : (X, \le) \to (\E, \sqsubseteq)\), where \(\phi(x_i) = g_i\) for \(i \le n\).
	We have \(\phi(x_i) \equiv e_i[g_j\sub x_j]_{j \le n}\) for each \(i \le n\), and therefore \(\phi : (X, \le)\to (\E, \sqsubseteq)\) is a solution to \(s\). 
	
	Now let \(\psi :(X, \le) \to (\E, \sqsubseteq)\) be another solution to \(s\), and write \(h_i = \psi(x_i)\) for \(i \le n\).
	Then in particular, 
	\[
		h_n \equiv e_n[h_1\sub x_1, \dots, h_{n-1}\sub x_{n-1}, h_n\sub x_n]
	\]
	so by (\acro{UFP}), 
	\[
	h_n \equiv \mu x_n~ e_n[h_1\sub x_1, \dots, h_{n-1}\sub x_{n-1}]
	= f_n[h_1\sub x_1, \dots, h_{n-1}\sub x_{n-1}]
	\]
	It follows that \(\psi' : ((X, \le)-x_n) \to (\E, \sqsubseteq)\) defined \(\psi'(x_i) = h_i[f_n\sub x_n]\) is a solution to \(s'\). 
	By the induction hypothesis, \(\phi' \equiv \psi'\), or in other words \(g_i \equiv h_i\) for each \(i < n\). 
	By \(\Sigma_S\)-precongruence, 
	\begin{align*}
		h_n 
		&\equiv \mu x_n~e_n[h_1\sub x_1, \dots, h_{n-1} \sub x_{n-1}] \\
		&\equiv \mu x_n~e_n[g_1\sub x_1, \dots, g_{n-1} \sub x_{n-1}] \\
		&= g_n
	\end{align*}
	Thus, \(\phi \equiv \psi\), and consequently \(\phi\) is the unique solution.
\end{proof}

Finally, step 4 is a result of the following observation:
Call an ordered process \((X, \le, \vartheta)\) \emph{locally finite} if for every \(x \in X\) there is a finite subcoalgebra \(U \subseteq X\) of \((X \le, \vartheta)\) with \(x \in U\).

\begin{restatable}{lemma}{localfiniteness}\label{lem:local finiteness}
	Both \((\Exp, \le_{\exp}, \ev)\) and \((J, \le_\beh, \zeta_J)\) are locally finite. 
\end{restatable}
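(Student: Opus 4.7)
The plan is to prove local finiteness of $(\Exp, \le_{\exp}, \epsilon)$ by a structural induction that exhibits, for each $e \in \Exp$, an explicit finite set $D(e) \subseteq \Exp$ containing $e$ and closed under the small-step semantics, in the sense that $\epsilon(d) \in B_M(D(e), \le)$ for every $d \in D(e)$. Equipped with the order inherited from $\Exp$, such a $D(e)$ is the required finite subcoalgebra: the inclusion is automatically an embedding, and closure under $\epsilon$ provides the unique structure $\vartheta_U$ factoring $\epsilon$ through $B_M(\iota)$.

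I would define $D$ by induction on $e$, setting $D(\code v) = \{\code v\}$, $D(\code a e) = \{\code a e\} \cup D(e)$, $D(\sigma(e_1,\ldots,e_n)) = \{\sigma(\vec e)\} \cup \bigcup_i D(e_i)$, and $D(\beta\code v~e) = \{\beta\code v~e\} \cup D(e)$. The key case is
\[
    D(\mu\code v~e) \;=\; \{\mu\code v~e\} \;\cup\; \{\,f[\mu\code v~e \sub \code v] \mid f \in D(e)\,\},
\]
motivated by the small-step rule $\epsilon(\mu\code v~e) = \lfp_{\code v}\epsilon(e)[\mu\code v~e \gdsub \code v]$, whose targets are precisely the syntactic substitutions of $\mu\code v~e$ into the targets of $\epsilon(e)$. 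Finiteness of each $D(e)$ is immediate from the induction.

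Closure under $\epsilon$ is then checked by induction on $e$. The $\code v$, $\code a e$, and $\sigma(\vec e)$ cases follow straight from the definition of $\epsilon$; the $\beta$ case uses that $\lfp_{\code v}$ modifies only the branching configuration and can only remove target expressions, so the targets of $\epsilon(\beta\code v~e) = \lfp_{\code v}\epsilon(e)$ are already contained in $D(e)$ by the inductive hypothesis. The main obstacle is the $\mu$ case: for $f \in D(e)$ I must show that the targets of $\epsilon(f[\mu\code v~e \sub \code v])$ lie in $D(\mu\code v~e)$. The plan is to prove a substitution lemma of the form
\[
    \epsilon\bigl(f[\mu\code v~e \sub \code v]\bigr) \;=\; \epsilon(f)\bigl[\epsilon(\mu\code v~e)\sub \code v,\; \mu\code v~e \gdsub \code v\bigr],
\]
by induction on $f$, where unguarded occurrences of $\code v$ in $\epsilon(f)$ are replaced by $\epsilon(\mu\code v~e)$ and guarded occurrences (return-variable positions inside targets) are replaced by $\mu\code v~e$. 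With this in hand, every target of $\epsilon(f[\mu\code v~e \sub \code v])$ is either $g[\mu\code v~e \sub \code v]$ for a target $g$ of $\epsilon(f)$, or $g'[\mu\code v~e \sub \code v]$ for a target $g'$ of $\epsilon(\mu\code v~e)$; applying the inductive hypothesis to $f \in D(e)$ and to $e \in D(e)$ places $g, g' \in D(e)$, and both resulting expressions sit in $D(\mu\code v~e)$ by construction. The technical care in this step concerns avoiding variable capture and verifying that $\lfp_{\code v}$ commutes appropriately with the substitution, both of which are standard given the assumption that $(S, \Ie)$ is iterative.

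For the second part, local finiteness of $(J, \le_\beh, \zeta_J)$ is then immediate. By \cref{thm:adequacy} the map $\sem{-} : (\Exp, \le_{\exp}, \epsilon) \to (J, \le_\beh, \zeta_J)$ coincides with $\beh_\epsilon$ and is therefore a coalgebra homomorphism. Given $t \in J$, choose $e \in \Exp$ with $\sem e = t$; then $\sem{D(e)}$ is a finite subset of $J$ containing $t$. Since $\zeta_J \circ \sem{-} = B_M(\sem{-}) \circ \epsilon$, the targets of any transition from $\sem{e'}$ with $e' \in D(e)$ are images under $\sem{-}$ of targets of $\epsilon(e')$, which lie in $D(e)$; thus $\sem{D(e)}$ is closed under $\zeta_J$ and furnishes the required finite subcoalgebra of $(J, \le_\beh, \zeta_J)$ containing $t$.
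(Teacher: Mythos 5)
Your proposal is correct and follows essentially the same route as the paper: an inductively defined finite closure set of subexpressions (with the $\mu$-case obtained by substituting $\mu\code v~e$ into the closure set of $e$), a claim that $\epsilon$ of each member has a representative whose targets stay in the set (resting on the fact that $\lfp_{\code v}$ introduces no new targets, which the paper isolates as a separate lemma), and transfer to $(J,\le_\beh,\zeta_J)$ as the homomorphic image under $\sem-$. The only cosmetic difference is that the paper first replaces unguarded occurrences of $\code v$ by $0$ and then uses guarded substitution, where you keep them and use full substitution together with your substitution lemma; both bookkeeping choices work.
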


To prove \cref{lem:local finiteness}, we need the following.

\begin{lemma}\label{lem:lfp reduces variables}
	Let \(p(x, \vec y) \in M(X, \le)\) and \(q = \lfp_xp\).
	Then \(q = q(\vec y)\).
\end{lemma}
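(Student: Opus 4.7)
The plan is to apply condition 2 of iterativity twice: first to reduce the ambient poset to the subposet $W := \{x\} \cup \{\vec y\}$ carrying the subspace order from $X$, and then to show that the least fixed point taken in $M(W, \le)$ already lies in $M(Y, \le)$, where $Y := \{\vec y\}$.

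For the reduction, $p$ admits a preimage $p_W$ under the embedding $M(\iota) : M(W, \le) \hookrightarrow M(X, \le)$ since its free variables are contained in $W$. Applying condition 2 with $V = W$, $v = x$, and the empty substitution $h : \emptyset \to X \setminus W$ (so that $M(\operatorname{id}_W + h) = M(\iota)$) yields $M(\iota)(\lfp_x p_W) = \lfp_x p$. Hence it suffices to show that $q_W := \lfp_x p_W \in M(W, \le)$ has a representative in $M(Y, \le)$.

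For the elimination step, I appeal to the fixed-point identity $q_W \equiv_\Ie p_W(q_W, \vec y)$ supplied by the preceding lemma, together with the self-substitution operator $[q_W/x]$. This operator is a well-defined monotone $S$-algebra homomorphism on $M(W, \le)$: its monotonicity is exactly the admissibility condition making $p_W(q_W, \vec y)$ defined, and this is automatically satisfied because $q_W \in \Pfp_x(p_W)$. Applying $[q_W/x]$ to the fixed-point identity shows that $q_W[q_W/x]$ is again a fixed point, whence $q_W \sqsubseteq_\Ie q_W[q_W/x]$ by $\sqsubseteq_\Ie$-minimality; the reverse inequality then follows from substitution monotonicity. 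A final application of condition 2, now with the map $h : W \to M(Y, \le)$ sending $x \mapsto q_W$ and $y_i \mapsto \eta(y_i)$ (monotone, again by the admissibility bounds on $q_W$), places $q_W$ inside $M(Y, \le)$: the naturality equation $M(h)(\lfp_x p_W) = \lfp_x M(h)(p_W)$ collapses, because $M(h)(p_W)$ is already an element of $M(Y, \le)$, and yields $M(h)(q_W) = q_W$ with generators drawn entirely from $\vec y$.

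The main obstacle will be ensuring substitution monotonicity throughout, since this depends on the order relations between $x$ and the $\vec y$ in $X$. I expect to address it by case analysis on how $x$ sits within $W$: when $x$ is incomparable to all $y_i$, the disjoint-union structure $W \cong \{x\} + Y$ makes the argument go through cleanly; when $x$ has comparabilities, the admissibility constraints built into prefixed-point membership guarantee that $q_W$ already lies in the correct order-theoretic interval, which is all that is needed for the self-substitution and collapsing steps to succeed.
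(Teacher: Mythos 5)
The decisive step of your argument does not work: self-substitution cannot eliminate the variable \(x\). Even if you establish \(q_W \equiv_\Ie q_W[q_W/x]\), the term \(q_W[q_W/x]\) has exactly the same free variables as \(q_W\) --- every substituted copy of \(q_W\) reintroduces whatever free occurrences of \(x\) you were trying to remove --- so the concluding claim that this ``places \(q_W\) inside \(M(Y,\le)\)'' is false. The reverse inequality \(q_W[q_W/x]\sqsubseteq_\Ie q_W\) also does not ``follow from substitution monotonicity'': monotonicity gives \(q_W[r/x]\sqsubseteq_\Ie q_W[r'/x]\) when \(r\sqsubseteq_\Ie r'\), so to compare \(q_W[q_W/x]\) with \(q_W=q_W[x/x]\) you would need \(q_W\sqsubseteq_\Ie x\), which there is no reason to expect. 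The repair is to substitute an \emph{arbitrary} term for the offending variable, which is what the paper does: since \(\sqsubseteq_\Ie\) is closed under substitution, \(p(q[r/x],\vec y)=p(q,\vec y)[r/x]\sqsubseteq_\Ie q[r/x]\), so \(q[r/x]\in\Pfp_x(p)\) and hence \(q\sqsubseteq_\Ie q[r/x]\) for \emph{every} \(r\); taking \(r=0\), and using \(0\sqsubseteq_\Ie x\) together with monotonicity of substitution in the substituted term, yields \(q[0/x]\sqsubseteq_\Ie q\sqsubseteq_\Ie q[0/x]\), and \(q[0/x]\) visibly no longer contains \(x\). The same one-variable-at-a-time argument disposes of any other \(z\notin\vec y\), which also makes your preliminary reduction to \(W\) unnecessary.

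Two further cautions about your appeals to condition~2 of \cref{def:iterative}. That condition concerns maps of the form \(M(\id_V+h)\) where \(h\) is a monotone map \emph{between posets} and the domain is a \emph{coproduct} \((V,\le)+(X,\le)\); it does not license (i) treating the sub-poset \(W\subseteq X\) as a coproduct summand, since \(X\neq W+(X\setminus W)\) in \(\Pos\) as soon as \(x\) is comparable to some \(y_i\) or some \(y_i\) to an element outside \(W\), nor (ii) your final map \(h:W\to M(Y,\le)\), which is a substitution (a Kleisli map) rather than a map of posets. Neither of these is the main problem --- the substitution-instance argument above needs no such reduction --- but as written both applications fall outside the scope of the condition.
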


\begin{proof}
	Let \(z \notin \vec y\).
	It suffices to show that if \(q = q(\vec y, z)\) for some \(z \in X\setminus\{x\}\), then for any \(r \in M(X, \le)\), \(q(\vec y, r) \equiv_\Ie q\).
	Since \(p(q, \vec y) \sqsubseteq_\Ie q\), we know that \(p(q(\vec y, r), \vec y) \sqsubseteq_\Ie q(\vec y, r)\),
	so that \(q(\vec y, r) \in \Pfp_x(p)\). 
	By assumption, \(q \sqsubseteq_\Ie q(\vec y, r)\). 
	But \(r\) is arbitrary, so in particular \(q(\vec y, z) \sqsubseteq_\Ie q(\vec y, 0)\). 
	Replacing \(z\) with an arbitrary \(r\) again reveals \(q(\vec y, r) \sqsubseteq_\Ie q(\vec y, 0)\), so in fact, \(q(\vec y, r) \equiv_\Ie q(\vec y, 0)\), and furthermore \(q(\vec y, r_1) \equiv_\Ie q(\vec y, r_2)\) for any \(r_1, r_2 \in M(X, \le)\).
	It follows that \(q(\vec y, z) \equiv_\Ie q(\vec y, r)\) for all \(r \in M(X, \le)\). 
\end{proof}

\begin{proof}[Proof of \cref{lem:local finiteness}]
	It suffices to see that \((\Exp, \le_{\exp}, \epsilon)\) is locally finite. 
	Given an arbitrary \(e \in \Exp\), we will explicitly construct a subcoalgebra of \((\Exp, \le_{\exp}, \epsilon)\) that has a finite set of states that includes \(e\).
	To this end, we define a function \(U : \Exp \to \P(\Exp)\) inductively.
	For any \(e \in \Exp\), write \(e[0\unsub \code v]\) for the expression obtained by replacing every unguarded instance of \(\code v\) in \(e\) with \(0\).
	Then   
	\begin{gather*}
		U(\code v) = \{\code v\}
		\qquad 
		U(\code a e) = \{\code a e\} \cup U(e)
		\qquad 
		U(\sigma(e_1, \dots, e_n)) = \{\sigma(e_1, \dots, e_n)\} \cup \bigcup_{i \le n} U(e_i)
		\\
		U(\beta\code v~e) = \{\beta\code v~e\} \cup U(e[0\unsub \code v])
		\qquad 
		U(\mu\code v~e) = \{\mu\code v~e\}\cup \{f[\mu\code v~e \gdsub \code v] \mid f \in U(e[0\unsub \code v])\}
	\end{gather*}
	Note that \(e \in U(e)\) for all \(e \in \Exp\), and that \(U(e)\) is finite. 
	We begin with following claim, which says that the derivatives of \(e\) can be given in terms of expressions from \(U(e)\): 
	for any \(e \in \Exp\), there is a representative \(S\)-term \(p \in \epsilon(e)\) such that \(p \in S^*(\Var + \Act \times U(e))\). 
	This can be seen by induction on \(e\).
	For the base cases, we see that
	\begin{itemize}
		\item \(\epsilon(\code u) = \code u \in S^*(\Var + \Act \times U(\code u)) = S^*(\Var + \Act \times \{\code u\})\) if \(\code{u} \in \Var\)
		\item \(\epsilon(\code a e) = \langle \code a, e\rangle \in S^*(\Var + \Act\times U(\code a e))\) since \(e \in U(\code a e)\).
	\end{itemize}
	For the inductive cases, 
	\begin{itemize}
		\item \(\epsilon(\sigma(e_1, \dots, e_n)) = \sigma(\epsilon(e_1), \dots, \epsilon(e_n))\) and since \(\epsilon(e_i)\) has a representative \(p_i \in S^*(\Var + \Act\times U(e_i))\) for each \(i \le n\), \(\sigma(p_1, \dots, p_n) \in S^*(\Var + \Act\times \bigcup_{i\le n} U(e_i))\).
		\item \(\epsilon(\beta\code u~e) = \lfp_{\code u}\epsilon(e)\), so we need to show that if \(p(\vec{\code v}, \langle\code a_i, e_i\rangle_{i \le n}) \in S^*(\Var + \Act\times U(e))\) is a representative of \(\epsilon(e)\), then at most the variables \(\vec {\code v}\) and \(\langle\code a_i, e_i\rangle_{i \le n}\) need appear in a representative of \(\lfp_{\code u}\epsilon(e)\).
		This is the content of \cref{lem:lfp reduces variables}.
		\item \(\epsilon(\mu \code v~e) = \lfp_{\code v}\epsilon(e)[\mu\code v\gdsub\code v]\), in which case we let \(p \in \epsilon(e)\) and note that \(\lfp_{\code v} p[\mu \code v~e/\!/v]\) is a representative of \(\epsilon(\mu \code v~e)\) in \(S^*(\Var + \Act \times U(\mu v~e))\).
	\end{itemize}
	
	To finish the proof of the lemma, fix an \(e \in \Exp\) and define a sequence of sets beginning with \(U_0 = \{e\}\) and proceeding with 
	\[
	U_{n+1} = U_n \cup \bigcup_{e_0 \in U_n}\{g \mid (\exists \code a \in \Act)(\exists p \in \epsilon(e_0)\cap S^*(\Var + \Act\times U(e_0)))~\text{\(\langle\code a,g\rangle\) appears in \(p\)}\}
	\] 
	Then \(U_0 \subseteq U_1 \subseteq \cdots \subseteq U(e)\), and the latter set is finite. 
	Hence \(U := \bigcup U_n\) is finite and contained in \(U(e)\). 
	We define a coalgebra structure \(\epsilon_U : U \to B_MU\) by taking \(\epsilon_U(e) = p\) where if \(e \in U_n\), then \(p\) is a representative of \(\epsilon(e)\) in \(S^*(\Var + \Act\times U_{n+1})\).
	Since \(S^*(\Var + \Act \times U_{n+1}) \subseteq S^*(\Var + \Act \times U)\) and \(\epsilon_U\) agrees with \(\epsilon\) (up to \(\equiv_\Ie\)), this defines a monotone \(B_M\)-coalgebra structure on \(U\).
	Where \(\iota : U \hookrightarrow \Exp\), we have \(\epsilon(\iota(e)) = \epsilon(e) = B_M(\iota)\circ\epsilon_U(e)\). 
	Thus, \((U,\epsilon_U)\) is a finite subcoalgebra of \((\Exp,\le_{exp},\epsilon)\) containing \(e\). 
\end{proof}

\begin{proof}[Proof of \cref{thm:completeness}]
	The formal construction of \(\phi : J \to \E\) can now be carried out by choosing a finite subcoalgebra \(U_t \subseteq J\) containing \(t\) for each \(t \in J\) and setting \(\phi(t) = \phi_{U_t}(t)\) where \(\phi_{U_t}\) is the unique solution to the system of equations associated with \((U_t, \le_\beh, \zeta_{U_t})\). 
	Every homomorphism out of a locally finite coalgebra is the union of its restrictions to finite subcoalgebras, so it suffices to check that \(\phi\) is well-defined and restricts to a coalgebra homomorphism on each finite subcoalgebra.

	To that end, notice that if \(t \in U_{t_1} \cap U_{t_2}\) we can write \(\zeta(t) = p(\vec {\code v}, \langle \code a_i, s_i\rangle_{i \le n})\) with \(s_i \in U_{t_1} \cap U_{t_2}\), and similarly for each \(s_i\) and so on, so that there is a finite subcoalgebra \(U \subseteq U_{t_1} \cap U_{t_2}\) containing \(t\). 
	By uniqueness of solutions, \(\phi_{U_{t_1}}(t) \equiv \phi_{U}(t) \equiv \phi_{U_{t_2}}(t)\), so \(\phi\) is a well-defined coalgebra homomorphism. 
	To see that \(\phi\) is left-inverse to \(\sem-\), observe that
	\(
		\phi(\sem e) = \phi_{\sem e}(\sem e) \equiv e
	\) 
	follows from \cref{thm:ordered Milner theorem} since the inclusion map \(\phi[U_{\sem e}] \hookrightarrow \E\) is the unique solution to the finite subcoalgebra \(\phi[U_{\sem e}]\) of \((\E, \sqsubseteq, \bar\epsilon)\). 
\end{proof}

\paragraph*{An aside: Least solutions without guardedness}
There is another version of \cref{thm:ordered Milner theorem} that does not require guardedness of systems.
The following proposition records this, but it should be noted that we do not require this proposition anywhere else in the appendix. 

\begin{proposition}
	Every finite system of equations admits a least solution. 
\end{proposition}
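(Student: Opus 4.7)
The plan is to mimic the proof of \cref{thm:ordered Milner theorem}, replacing the unique-fixed-point arguments with leastness arguments. The key tool is the derivable rule \textsf{(R3')} established in the preceding lemma, which states that $\mu v~e \sqsubseteq g$ whenever $e[g\sub v] \sqsubseteq g$; this identifies $\mu v~e$ as the least $\sqsubseteq$-prefixed point of the operator $h \mapsto e[h\sub v]$, regardless of whether $v$ is guarded in $e$.

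I would strengthen the induction hypothesis and claim that every finite system $s : (X, \le) \to \Exp(X, \le)$ admits a solution $\phi$ that lies $\sqsubseteq$-below every \emph{prefix solution} $\psi$, meaning every monotone $\psi$ with $\psi^\#(s(x)) \sqsubseteq \psi(x)$ for all $x \in X$. Such a $\phi$ is automatically a least solution. The proof is then by induction on $|X|$. For the base case $|X|=1$, set $\phi(x_1) = \mu x_1~s(x_1)$; this is a solution by \textsf{(Rfp2)}, and it lies below every prefix solution by \textsf{(R3')}.

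For the inductive step, I would mirror the construction in the proof of \cref{thm:ordered Milner theorem}: fix a $\le$-minimal element $x_{n+1}$ of $X$, let $f_{n+1} = \mu x_{n+1}~s(x_{n+1})$, and define the reduced system $s'(x_i) = s(x_i)[f_{n+1}\sub x_{n+1}]$ on $X \setminus \{x_{n+1}\}$. The induction hypothesis supplies a least prefix-point solution $\phi'$ of $s'$, which I would extend to $\phi$ by setting $\phi(x_{n+1}) = f_{n+1}[\phi'(x_i)\sub x_i]_{i \le n}$. Monotonicity of $\phi$ and the fact that $\phi$ solves $s$ then follow exactly as in the guarded case, this time using only \textsf{(Rfp2)} in place of uniqueness axioms such as \textsf{(UFP)}.

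The main obstacle will be verifying minimality against an arbitrary prefix solution $\psi$ of $s$. Applying \textsf{(R3')} to the $(n{+}1)$-st component yields $f_{n+1}[\psi(x_i)\sub x_i]_{i \le n} \sqsubseteq \psi(x_{n+1})$, and then monotonicity of syntactic substitution together with $\psi$ being a prefix solution of $s$ should imply that the restriction $\tilde\psi = \psi|_{X \setminus \{x_{n+1}\}}$ is itself a prefix solution of $s'$. Invoking the strengthened induction hypothesis gives $\phi'(x_i) \sqsubseteq \psi(x_i)$ for $i \le n$, and one more application of monotonicity of substitution transports this bound through $f_{n+1}$ to conclude $\phi(x_{n+1}) \sqsubseteq \psi(x_{n+1})$.
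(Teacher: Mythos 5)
Your proposal is correct and follows essentially the same route as the paper's proof: strengthen the induction hypothesis to produce a solution lying below every presolution, eliminate a $\le$-minimal indeterminate via $\mu$, and use the least-fixed-point rule (the paper cites the axioms of \cref{fig:axioms} directly where you cite the derived rule \textsf{(R3')}) to compare against an arbitrary presolution of the reduced system. The only cosmetic difference is that you take the restriction of $\psi$ as the presolution of $s'$, while the paper writes it as $\psi'(x_i) = h_i[f_n\sub x_n]$; since the $h_i$ do not contain the indeterminate $x_n$, these agree.
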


\begin{proof}
	The idea behind this proof is the same as the proof of \cref{thm:ordered Milner theorem}, but with a different induction hypothesis/step.
	Let \(X = \{x_1, \dots, x_n\}\) and set \(e_i = s(x_i) \in \Exp(X, \le)\) for each \(i \le n\). 
	We will show, inductively on \(n\), that (i) \(s\) has a solution \(\phi\), and (ii) for every \emph{pre}solution \(\psi : {(X, \le)} \to {(\E, \sqsubseteq)}\) of \(s\), meaning that \(\psi^\# \circ s \sqsubseteq \psi\), we have \(\psi(x_i) \sqsubseteq \psi(x_i)\) for each \(i \le n\). 

	Again, the base case is easy: 
	if \(n = 1\), then \(\phi(x_1) = \mu x_1~e_1\) is a solution.
	Furthermore, if \(e_1[\psi(x_1) \sub x_1] \sqsubseteq \psi(x_1)\), then \(\phi(x_1) \sqsubseteq \psi(x_1)\) by the least fixed-point axiom of \cref{fig:axioms}.
	
	Now suppose that any guarded system with fewer than \(n\) indeterminates has a solution below every presolution.
	Since \(X\) is finite, there is a minimal element of \((X, \le)\). 
	Again, we take this to be the last of the indexed indeterminates, \(x_n\), and solve the system \(s\) by first solving for \(x_n\).
	
	To this end, let \(f_n = \mu x_n ~e_n\), and \(f_i = e_i[f_j\sub x_j]_{j \le n}\) for \(i < n\).
	Then by monotonicity of substitution, \(s' : ((X, \le)-x_n) \to \Exp((X,\le)-x_n)\) defined \(s'(x_i) = f_i\) for \(i < n\) defines a monotone syntactic system. 
	By the induction hypothesis, there exists a solution \(\phi' : ((X, \le)-x_n) \to (\E, \sqsubseteq)\) that is below every presolution. 
	Let \(g_i = \phi'(x_i)\) for \(i < n\) and define \(g_n = f_n[g_i\sub x_i]_{i < n}\). 
	By assumption, \(\phi'\) is monotone. 
	If \(x_n \le x_i\), we know that \(e_n \sqsubseteq e_i\), and so by monotonicity of substitution
	\begin{align*}
		g_n &= f_n[g_1\sub x_1, \dots, g_{n-1}\sub x_{n-1}] \\
			&= \mu x_n~e_n[g_1\sub x_1, \dots, g_{n-1}\sub x_{n-1}] \\
			&\equiv e_n[f_n\sub x_n][g_1\sub x_1, \dots, g_{n-1}\sub x_{n-1}] \\
			&\equiv e_n[g_1\sub x_1, \dots, g_{n-1}\sub x_{n-1}, g_n\sub x_n] \\
			&\sqsubseteq e_i[g_1\sub x_1, \dots, g_{n-1}\sub x_{n-1}, g_n\sub x_n] \\
			&\equiv g_i
	\end{align*}
	This establishes monotonicity of \(\phi : (X, \le) \to (\E, \sqsubseteq)\), where \(\phi(x_i) = g_i\) for \(i \le n\).
	We have \(\phi(x_i) \equiv e_i[g_j\sub x_j]_{j \le n}\) for each \(i \le n\), and therefore \(\phi : (X, \le)\to (\E, \sqsubseteq)\) is a solution to \(s\). 
	
	Now let \(\psi :(X, \le) \to (\E, \sqsubseteq)\) be any presolution to \(s\), and write \(h_i = \psi(x_i)\) for \(i \le n\).
	Then in particular, 
	\[
		h_n \sqsupseteq e_n[h_1\sub x_1, \dots, h_{n-1}\sub x_{n-1}, h_n\sub x_n]
	\]
	By the least fixed-point axioms of \cref{fig:axioms}, 
	\[
	h_n \sqsupseteq \mu x_n~ e_n[h_1\sub x_1, \dots, h_{n-1}\sub x_{n-1}]
	= f_n[h_1\sub x_1, \dots, h_{n-1}\sub x_{n-1}]
	\]
	It follows that \(\psi' : ((X, \le)-x_n) \to (\E, \sqsubseteq)\) defined \(\psi'(x_i) = h_i[f_n\sub x_n]\) is a presolution to \(s'\). 
	By the induction hypothesis, \(\phi' \sqsubseteq \psi'\), or in other words \(g_i \sqsubseteq h_i\) for each \(i < n\). 
	By \(\Sigma_S\)-precongruence, 
	\begin{align*}
		h_n 
		&\sqsupseteq \mu x_n~e_n[h_1\sub x_1, \dots, h_{n-1} \sub x_{n-1}] \\
		&\sqsupseteq \mu x_n~e_n[g_1\sub x_1, \dots, g_{n-1} \sub x_{n-1}] \\
		&= g_n
	\end{align*}
	Thus, \(\phi \sqsubseteq \psi\), and consequently \(\phi\) is a solution below every presolution.
\end{proof}

\section{The small-step semantics of polystar expressions}

Polystar expressions also carry an intuitive small-step semantics, given by an \(L_M\)-coalgebra structure \((\PExp, \le_{\text{pxp}}, \ell)\) where 
\(
L_M = M(\{\checkmark\} + \Act \times \Id)
\).
This is outlined below.

\begin{mathpar}
	\ell(0) = 0		
	\and 
	\ell(1) = 1			
	\and 
	\ell(e +_\sigma f) = \sigma(\ell(e), \ell(f))		
	\and 
	\ell(e f) = q(\ell(f), \langle \code a_1, e_1 f\rangle, \dots, \langle \code a_n, e_nf\rangle)		
	\and 
	\ell(\code a) = \langle \code a, 1\rangle
	\and
	\ell(e^{[p]}) = \lfp_x (p(q(x, \langle \code a_1, e_1e^{[p]}\rangle, \dots, \langle \code a_n, e_ne^{[p]}\rangle), \checkmark))
\end{mathpar}
Here, \(\sigma \in I\); \(e, e_1, e_2 \in \PExp\); \(\code a, \code a_i \in \Act\); \(p = p(x,y) \in S^*\{x,y\}\); \(\ell(e) = q(\checkmark, \langle \code a_1, e_1\rangle, \dots, \langle\code a_n, e_n\rangle)\). 

\section{Proofs from \cref{sec:simulations}}

\ordinaryimpliesordered*

\begin{proof}
	Let \(h : (X, \vartheta) \to (Y, \delta)\) be a \(B_N\)-coalgebra homomorphism such that \(h(x) = h(y)\).
	Then \(h = D(h) : (X, =, \vartheta) \to (Y, =, \delta)\), so \(x =_\beh y\).
\end{proof}

\orderedimpliesordinary*

\begin{proof}
	Suppose \(M\) lifts \(N\) and let \((X, \vartheta)\) be a \(B_N\)-coalgebra with states \(x, y \in X\). 
	If \(x \bisim y\), then \(x =_\beh y\) by \cref{lem:ordinary implies ordered}. 
	If \(x =_\beh y\), let \(h : (X, =, \vartheta) \to (Y, \le, \delta)\) be such that \(h(x) = h(y)\). 
	Then \(h\) is an \(N\)-coalgebra homomorphism, because \(U(\delta) : X \to UB_M(X, =) = B_N X\) and 
	\[
		\delta \circ h 
		= U(\delta \circ h) 
		= U(B_M(h) \circ \vartheta)
		= UB_M(h) \circ U(\vartheta)
		= B_ND(h) \circ \vartheta
		= B_N(h) \circ \vartheta
	\]
	It follows that \(x \bisim y\). 

	Conversely, suppose that \(x =_\beh y\) but not \(x \bisim y\). 
	There is an \(h : (X, =, \vartheta) \to (Y, \le, \delta)\) such that \(h(x) = h(y)\), but it must be the case that \(U(h) = h\) fails to be a \(B_N\)-coalgebra homomorphism.
	However, \(UB_M(h) \circ \vartheta = U(B_M(h) \circ \vartheta) = U(\delta \circ h) = U(\delta) \circ h\), so it must be the case that \(UB_M(h) \neq B_N(h) = B_NU(h)\).
	It follows that \(M\) fails to lift \(N\). 
\end{proof}

Say that \((S, \Ie)\) \emph{admits couplings} if for any \(h : (X, \le) \to (Y, \le)\) and any \(p(\vec x),q(\vec y) \in M(X, \le)\) such that \(p(h(\vec x)) \equiv_\Ie q(h(\vec y))\), there is an \(h\)-coupling \(r\) of \(p\) and \(q\). 
The next lemma is in reference to \cref{rem:admits couplings wpb}. 

\begin{lemma}
	\(M\) weakly preserves pullbacks if and only if \((S, \Ie)\) admits couplings.
\end{lemma}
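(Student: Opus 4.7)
The strategy is to reformulate $h$-couplings as elements of $M$ applied to the kernel-pair pullback. For $h : (X, \le) \to (Y, \le)$ in $\Pos$, the kernel-pair pullback is the poset $X \times_Y X = \{(u,v) \in X^2 : h(u) = h(v)\}$ with componentwise order and projections $\pi_1, \pi_2 : X \times_Y X \to X$. Using the concrete description of $M$ from \cref{def:free SIalg}, every element of $M(X \times_Y X)$ is represented by a term $r((u_i, v_i)_{i \le n})$ with $h(u_i) = h(v_i)$, and one computes that $M\pi_1([r]_\Ie) = [r(\vec u)]_\Ie$ and $M\pi_2([r]_\Ie) = [r(\vec v)]_\Ie$. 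Hence an $h$-coupling of $p$ and $q$ is nothing other than an element of $M(X \times_Y X)$ that projects to $[p]_\Ie$ and $[q]_\Ie$ along $M\pi_1$ and $M\pi_2$.

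The ``only if'' direction is then immediate: suppose $M$ weakly preserves pullbacks, and take $p, q \in MX$ with $Mh([p]_\Ie) = Mh([q]_\Ie)$. Applying weak preservation to the kernel-pair square yields the required element of $M(X \times_Y X)$, i.e., an $h$-coupling of $p$ with $q$.

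For the ``if'' direction, admitting couplings is exactly weak preservation of kernel pairs. To extend this to arbitrary pullbacks, I plan to use the classical coproduct trick (spelled out for $\Sets$ in~\cite{gumm1998functorsforcoalgebras}): given a pullback $P$ of $h : X \to Z$ and $k : Y \to Z$, factor both maps through the coproduct as $[h,k] : X + Y \to Z$, so that $P$ embeds into the kernel pair of $[h,k]$ via $(x, y) \mapsto (\iota_X(x), \iota_Y(y))$. Given $\alpha \in MX$ and $\beta \in MY$ with $Mh(\alpha) = Mk(\beta)$, lift them to $M\iota_X(\alpha), M\iota_Y(\beta) \in M(X+Y)$; weak preservation of the kernel pair of $[h,k]$ then yields some $[r]_\Ie \in M((X+Y) \times_Z (X+Y))$, which must be refined to a representative whose pairs all lie in $\iota_X(X) \times \iota_Y(Y)$, hence in $P$.

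The main obstacle is this final pruning step. It rests on the standing assumption that $\eta$ is an embedding, which prevents $\Ie$-equivalence from identifying variables across the coproduct components, together with the fact that any variable appearing in a representative of a pure-$X$ term (such as $M\iota_X(\alpha)$) can be replaced with an actual $X$-variable without leaving its equivalence class. Once this pruning is carried out on both sides, the resulting term gives the required weak-pullback lift of $\alpha$ and $\beta$ into $MP$, completing the proof.
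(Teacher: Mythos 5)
Your opening reformulation of an \(h\)-coupling as an element of \(M\) applied to the kernel pair of \(h\), and the resulting ``only if'' direction, match the paper's argument. The divergence is in the ``if'' direction: the paper never performs your reduction from kernel pairs to arbitrary pullbacks. Instead it takes an arbitrary cospan \(f_1 : (X,\le) \to (W,\le) \leftarrow (Y,\le) : f_2\) with pullback \((P,\le)\), observes that the comparison map \(M(P,\le) \to (Q,\le)\) into the pullback \((Q, \le)\) of \(M(f_1)\) and \(M(f_2)\) sends \(r((u_i,v_i)_{i\le n})\) to \((r(\vec u), r(\vec v))\), and invokes Gumm's surjectivity characterisation of weak pullback preservation \cite{gumm1999elements}; it then identifies the resulting condition --- a coupling for every \(p \in M(X,\le)\), \(q \in M(Y,\le)\) with \(M(f_1)(p) \equiv_\Ie M(f_2)(q)\), supported on \(P \subseteq X \times Y\) --- with ``admitting couplings''. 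That is, the paper tacitly reads the coupling condition as a statement about arbitrary cospans, not only about kernel pairs.

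The genuine gap in your proposal is the pruning step, and it cannot be repaired: your coproduct trick exhibits \(P\) as the preimage of \(\iota_X(X)\times\iota_Y(Y)\) inside the kernel pair of \([h,k]\), so pruning is exactly weak preservation of a preimage, which is independent of weak preservation of kernel pairs (Gumm and Schr\"oder). A concrete obstruction within the paper's framework is the equational theory of Boolean groups (\(x+y=y+x\), associativity, \(x+0=x\), \(x+x=0\)), for which \(M(X,=)\) is the set of finite subsets of \(X\) under symmetric difference. This theory admits couplings in the kernel-pair sense (match the elements of \(A\) and \(B\) fibrewise over \(Y\), padding with repeated pairs to cancel parity), but for \(f_1 : \{x_1,x_2,x_3\}\to\{w_1,w_2\}\) with \(f_1(x_1)=f_1(x_2)=w_1\), \(f_1(x_3)=w_2\) and \(f_2:\{y\}\to\{w_1,w_2\}\) with \(f_2(y)=w_2\), the pullback is \(P=\{(x_3,y)\}\), while \((\{x_1,x_2\},0)\) lies in the pullback of \(M(f_1)\) and \(M(f_2)\) and is not in the image of \(M(P)\). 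The coupling \(\{(x_1,x_1),(x_2,x_1)\}\) over the kernel pair of \([f_1,f_2]\) exists, but no representative of it can be supported on \(\iota_X(X)\times\iota_Y(Y)\): the obstruction is not that \(\Ie\)-equivalence identifies variables across the coproduct (your appeal to \(\eta\) being an embedding), but that the first components \(x_1,x_2\) have no partners in \(Y\) at all. To prove the lemma in the form the paper actually uses, you should take the cospan formulation of couplings as the definition and argue as the paper does; the kernel-pair special case alone is strictly weaker.
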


\begin{proof}
	Obtain \(k_1 : (P, \le) \to (X, \le)\) and \(k_2 : (P, \le) \to (Y, \le)\) as the pullback of \(f_1 : (X, \le) \to (W, \le)\) and \(f_2 : (Y, \le) \to (W, \le)\).
	Also obtain \(l_1 : (Q, \le) \to M(X, \le)\) and \(l_2 : (Q, \le) \to M(Y, \le)\) as the pullback of \(M(f_1), M(f_2)\). 
	The statement that \(M\) weakly preserves pullbacks is equivalent to the statement that \(M(P, \le)\) always surjects onto \((Q, \le)\). 

	Now, \(Q \cong \{(p,q) \in M(X, \le) \times M(X, \le) \mid M(f_1)(p) = M(f_2)(q)\}\) and \(P \cong \{(x, y) \in X \mid f_1(x) = f_2(y)\}\). 
	In other words, if \(p = p(\vec x)\) and \(q = q(\vec y)\), \(p(f_1(\vec x)) \equiv_\Ie q(f_2(\vec y))\). 
	It follows that the function \(M(P, \le) \to (Q, \le)\) defined \(r((u_i,v_i)_{i \le n}) \mapsto (r(\vec u), r(\vec v))\) is monotone. 

	By \cite[Theorem 8.6]{gumm1999elements}, \(M\) preserves weak pullbacks if and only if there is a \(r((u_i,v_i)_{i \le n}) \in M(P, \le)\) such that \(r(\vec u) \equiv_\Ie p\) and \(r(\vec v) \equiv_\Ie q\) for every \((p,q) \in Q\).
	That is, if and only if \((S, \Ie)\) admits couplings.
\end{proof}

\behaviouralinequivalenceisasim*

\begin{proof}
    We aim to show that for any pair \((x,y) \in X^2\) satisfying \(x \le_\beh y\), there is an \[
		r(\vec {\code v}, \langle \code a_i, (x_i, y_i)\rangle_{i \le n}) \in B_M(\le_\beh)
	\]
	such that \(\theta(x) \sqsubseteq_\Ie r(\vec {\code v}, \langle \code a_i, x_i\rangle_{i \le n})\) and \(r(\vec {\code v}, \langle \code a_i, x_i\rangle_{i \le n})\sqsubseteq_\Ie \theta(y)\).
	To this end, let \(h : (X, =, \vartheta) \to (Y, \le, \delta)\) such that \(h(x) \le h(y)\). 
	Then \[
		B_M(h) \circ \vartheta(x)
		\equiv_\Ie \delta \circ h(x) 
		\sqsubseteq_\Ie \delta \circ h(y)
		\equiv_\Ie B_M(h) \circ \vartheta(y)
	\]
	by monotonicity. 
	By assumption, there is a weak \((\Var + \Act\times h)\)-coupling \(r(\vec {\code v}, \langle\code a_i, (u_i,v_i)\rangle_{i \le n})\) of \(\vartheta(x)\) with \(\vartheta(y)\).
	That is, we have \(h(u_i) \le h(v_i)\) for each \(i \le n\), and \(\vartheta(x) \sqsubseteq_\Ie r(\vec {\code v}, \langle\code a_i, u_i\rangle_{i \le n})\), and \(r(\vec {\code v}, \langle\code a_i, v_i\rangle_{i \le n}) \sqsubseteq_\Ie \vartheta(y)\). 
	Since \(h(u_i) \le h(v_i)\), \(u_i \le_\beh v_i\) for each \(i \le n\) and \(r \in B_M(\le_\beh)\) as desired. 
\end{proof}

\collapsetheorem*

\begin{proof}
	Let \(x\) and \(y\) be states of \((X, \vartheta)\). 
	It suffices to show that the determinacy condition ensures that \(x \sim y\) implies \(x =_\beh y\). 
	To this end, let \(R_1,R_2 \subseteq X^2\) be simulations witnessing \(x \precsim y\) and \(y \precsim x\) respectively. 
	Let \(\le\) be the reflexive-transitive closure of \(R_1 \cup R_2\) and \(Y = X/{\le}\). 
	Then the quotient map \(h : (X, =) \to (Y, \le)\) satisfies \(h(x) \le h(y)\) and \(h(y) \le h(x)\).
	Since \((S, \Ie)\) admits weak \(h\)-couplings in both directions, by detemrinacy there is an \(h\)-coupling \(r(\vec v, \langle \code a, (u_i,v_i)\rangle_{i \le n})\) of \(\theta(x)\) and \(\theta(y)\). 

	Now let \(R = \{(u, v) \in X^2 \mid h(u) = h(v)\}\) and define \(\rho : (R, =) \to B_M (R, =)\) by choosing an \(h\)-coupling \(r_{u,v}(\vec v, \langle \code a_i, (u_i,v_i)\rangle_{i \le n})\) of \(\vartheta(u)\) with \(\vartheta(v)\) and setting 
	\[
		\rho(u, v) \equiv_\Ie r_{u,v}(\vec v, \langle \code a_i, (u_i,v_i)\rangle_{i \le n})
	\]
	Now,
	\begin{align*}
		B_M(\pi_1) \circ \rho(u, v)
		&\equiv_{\Ie} B_M(\pi_1) (r_{u,v}(\vec v, \langle \code a_i, (u_i,v_i)\rangle_{i \le n})) \\
		&= r_{u,v}(\vec v, \langle \code a_i, u_i\rangle_{i \le n}) \\
		&\equiv_\Ie \vartheta(u) \\
		&= \vartheta \circ \pi_1(u,v)
	\end{align*}
	for any \((u,v) \in R\), and similarly \(B_M(\pi_2) \circ \rho(u, v) \equiv_\Ie \vartheta(v)\).
	This tells us that \((R, \rho)\) is a \emph{bisimulation} on \((X, \vartheta)\) containing \((x, y)\). 
	This implies that \(x \bisim y\)~\cite{rutten2000universal}, so by \cref{lem:ordinary implies ordered} we have \(x =_\beh y\) as desired. 
\end{proof}

\section{Ordered Convex Algebra}\label{app:ordered convex algebra}

In this section, we give a proof of the monad presentation claimed in \cref{eg:convex algebra}.

\begin{lemma}
	The heavier-higher order is a partial order.
\end{lemma}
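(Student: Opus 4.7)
The plan is to verify the three partial-order axioms in turn. Reflexivity is immediate: $\sum_{x\in U}\theta(x) \le \sum_{x\in U}\theta(x)$ for every $U$. Transitivity is equally direct, since inequalities of real numbers chain pointwise over upwards-closed $U$. So the only real content is antisymmetry: assuming $\theta_1 \le^{hh} \theta_2$ and $\theta_2 \le^{hh} \theta_1$, I must show $\theta_1 = \theta_2$.

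My plan for antisymmetry is an induction on the size of the finite set $S := \supp(\theta_1) \cup \supp(\theta_2)$; this is finite by the definition of $\D^\bot$. The base case $S = \emptyset$ forces $\theta_1 = \theta_2 = 0$. For the inductive step, I will exploit that a finite subset of a poset always contains a maximal element $x \in S$. The key observation is that the principal upset $\upper x = \{y \in X \mid x \le y\}$ is upwards-closed in $(X, \le)$ and satisfies $\upper x \cap S = \{x\}$, because any strictly greater $y$ cannot lie in $S$ by maximality of $x$. Consequently,
\[
	\theta_i(x) \;=\; \sum_{y \in \upper x} \theta_i(y) \qquad (i = 1, 2),
\]
and the hypothesis $\theta_1 \le^{hh}\theta_2 \le^{hh}\theta_1$ evaluated on $U = \upper x$ yields $\theta_1(x) = \theta_2(x)$.

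Next I will subtract off the shared mass at $x$: setting $\theta_i'(y) = \theta_i(y) - \theta_i(x) \cdot \delta_x(y)$ gives subdistributions whose combined support is strictly smaller than $S$. A short calculation over any upwards-closed $U \subseteq X$ shows $\sum_{y \in U}\theta_i'(y) = \sum_{y\in U}\theta_i(y) - \theta_1(x)\cdot [x \in U]$, so the inequalities $\theta_1 \le^{hh} \theta_2$ and $\theta_2 \le^{hh}\theta_1$ transfer verbatim to $\theta_1', \theta_2'$. The induction hypothesis then gives $\theta_1' = \theta_2'$, and adding back the $\delta_x$ term yields $\theta_1 = \theta_2$.

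The only step that requires any care is checking that the witness $x$ is genuinely maximal in $(X, \le)$ when viewed as an element of $S$, rather than just a maximal element of the restriction of $\le$ to $S$; this follows because anything strictly above $x$ in $(X, \le)$ has zero mass under both $\theta_i$, so the sum over $\upper x$ really does collapse to $\theta_i(x)$. Everything else is bookkeeping about finite sums of reals.
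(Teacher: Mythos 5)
Your proof is correct and follows essentially the same strategy as the paper's: reflexivity and transitivity are immediate, and antisymmetry goes by induction on the size of the (finite) support, peeling off a maximal element $x$ via the principal upset $\upper x$, on which both distributions must agree, and then subtracting that mass. Your choice to induct on $\left|\supp(\theta_1)\cup\supp(\theta_2)\right|$ and take $x$ maximal in the union is a mild streamlining of the paper's argument, which takes $x_0$ maximal only in $\supp(\theta_1)$ and therefore needs an extra step to locate a matching maximal element of $\supp(\theta_2)\cap\upper x_0$ and identify it with $x_0$.
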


\begin{proof}
	The heavier-higher order is clearly reflexive and transitive.
	To see that it is antisymmetric, let \(\theta_1 \le^{hh} \theta_2 \le^{hh} \theta_1\). 
	Then in particular, where \[
	\upper x := \{u \in X \mid x \le u\}
	\]
	we see that \(\theta_1(\upper x) = \theta_2(\upper x)\) for all \(x \in X\).	
	Since \(\supp(\theta_1)\) is finite, it has maximal elements. 
	Let \(x_0\) be such an element of \(X\), and observe that \(\theta_1(\upper x_0) = \theta_1(x_0)\).
	We proceed by induction on the number of elements of \(\supp(\theta_1)\).
	
	Suppose \(\supp(\theta_1) = \{x_0\}\).
	Since \(\theta_2(\upper x_0) = \theta_1(\upper x_0)\), \(\theta_2(\upper x_0) > 0\) and so \(\supp(\theta_2) \cap \upper x_0\) is nonempty.
	Let \(x_1\) be a maximal element of \(\supp(\theta_2) \cap \upper x_0\).
	Then \(x_0 \le x_1\) and \(\theta_2(\upper x_1) = \theta_2(x_1)\).
	Hence, \[
	\theta_2(x_1) = \theta_2(\upper x_1) = \theta_1(\upper x_1) = \theta_1(\upper x_0) = \theta_1(x_0)
	\] 
	Since this implies that \(x_1 \in \supp(\theta_1)\), \(x_0 = x_1\).
	It follows that \(\theta_1(x_0) = \theta_2(x_0)\).
	
	Now define \(\theta_i'(x) = \theta_i(x)\) when \(x \neq x_0\) and \(\theta_i(x) = 0\) otherwise, \(i = 1,2\).
	Since \(\theta_i = {\theta_i(x_0) \delta_{x_0}} + \theta_i'\) for \(i = 1,2\), by cancellativity of \(\mathbf P\) we know that 
	\[
		\theta_1(x_0) \delta_{x_0} + \theta_1' 
		\le^{hh} \theta_2(x_0) \delta_{x_0} + \theta_2'
		\le^{hh} \theta_1(x_0) \delta_{x_0} + \theta_1'
	\]
	implies \(\theta_1' \le^{hh} \theta_2' \le^{hh} \theta_1'\), because \(\theta_1(x_0) = \theta_2(x_0)\).
	It follows by induction that \(\theta_1' = \theta_2'\).
	Therefore \(\theta_1 = \theta_2\) as well.
\end{proof}

\begin{lemma}
	For any monotone map \(f : (X, \le) \to (Y, \le)\), the map \(f^\bullet : \D X \to \D Y\) is monotone wrt to the heavier-higher order. 
\end{lemma}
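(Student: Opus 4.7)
The plan is to use the characterisation of the heavier-higher order via sums over upwards-closed sets together with the fact that preimages of upwards-closed sets under monotone maps are upwards-closed. Concretely, I would fix a monotone $f : (X, \le) \to (Y, \le)$ and a pair $\theta_1 \le^{hh} \theta_2$ in $\D X$, and then verify the defining inequality of $\le^{hh}$ for $f^\bullet(\theta_1)$ and $f^\bullet(\theta_2)$ against an arbitrary upwards-closed $V \subseteq Y$.

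First I would observe that if $V \subseteq Y$ is upwards-closed then $f^{-1}(V) \subseteq X$ is upwards-closed: if $x \in f^{-1}(V)$ and $x \le x'$, then by monotonicity $f(x) \le f(x')$, and upwards-closure of $V$ yields $f(x') \in V$, i.e.\ $x' \in f^{-1}(V)$. Next I would rewrite the sum of $f^\bullet(\theta_i)$ over $V$ by partitioning the preimage according to the value of $f$:
\[
  \sum_{y \in V} f^\bullet(\theta_i)(y) \;=\; \sum_{y \in V}\sum_{f(x)=y}\theta_i(x) \;=\; \sum_{x \in f^{-1}(V)} \theta_i(x).
\]
(All sums are effectively finite because $\supp(\theta_i)$ is finite, so there is no convergence issue.) Applying the hypothesis $\theta_1 \le^{hh} \theta_2$ to the upwards-closed set $f^{-1}(V)$ then gives
\[
  \sum_{y \in V} f^\bullet(\theta_1)(y) \;=\; \sum_{x \in f^{-1}(V)} \theta_1(x) \;\le\; \sum_{x \in f^{-1}(V)} \theta_2(x) \;=\; \sum_{y \in V} f^\bullet(\theta_2)(y),
\]
which is exactly $f^\bullet(\theta_1) \le^{hh} f^\bullet(\theta_2)$ since $V$ was arbitrary.

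There is no real obstacle here: the only non-trivial input is the preimage-of-upwards-closed lemma, which is immediate from monotonicity of $f$. The rest is a bookkeeping calculation justified by the finiteness of the supports of the subdistributions involved.
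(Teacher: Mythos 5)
Your proof is correct and follows exactly the paper's argument: pull back an arbitrary upwards-closed set along the monotone map $f$, note that the preimage is upwards-closed, and compare the sums via the regrouping $\sum_{y \in V} f^\bullet(\theta_i)(y) = \sum_{x \in f^{-1}(V)} \theta_i(x)$. The only difference is that you spell out the (immediate) preimage lemma and the finiteness-of-support justification, which the paper leaves implicit.
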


\begin{proof}
	Let \(U \subseteq Y\) be an upper set of \((Y, \le)\).
	Then \(f^{-1}(U)\) is an upper set of \((X, \le)\).
	We have
	\begin{align*}
		f^\bullet\theta_1(U) &= \sum_{u \in U} f^\bullet\theta_1(u) 
		= \sum_{u \in U} \sum_{f(x) = u} \theta_1(x) 
		= \sum_{f(x) \in U} \theta_1(x) 
		\le \sum_{f(x) \in U} \theta_2(x) 
		= f^\bullet\theta_2(U) 
	\end{align*}
\end{proof}

That \((\D, \delta, \oplus)\) is the free convex algebra follows directly from the following three facts. 

\begin{description}
	\item[Fact (1)] The transformation \(\delta\) is monotone with respect to the heavier-higher order on any given base partial order. 
	That is, for any \((X, \le)\), \(\delta : (X, {\le}) \to (\D X, {\le^{hh}})\).
	\item[Fact (2)] \((\D X, \le^{hh}, \oplus)\) is an ordered convex algebra.
	It suffices to see that given \(\theta_1,\theta_2, \varphi \in \D X\) and \(r, s \in [0,1]\),
	\begin{itemize}
		\item[(i)] if \(\theta_1 \le^{hh} \theta_2\), then \(\theta_1 + \varphi \le^{hh} \theta_2 +\varphi\), and
		\item[(ii)] if \(r \le s\), then \(p \cdot \varphi \le^{hh} q \cdot \varphi\).
	\end{itemize} 
	\item[Fact (3)] For any ordered convex algebra \((Y, \le, \beta)\) and any \(f : (X, \le) \to (Y, \le)\), the inductively specified convex algebra homomorphism \(f^\beta : (\D, \rho) \to (Y, \beta)\) is monotone with respect to the heavier-higher order, ie. \(f^\beta : (\D X, \le^{hh}, \rho) \to (Y, \le, \beta)\).
\end{description}

We are going to argue for Fact (3), as it has a nontrivial proof.



\begin{proof}[Proof (of Fact (3)).]
	Let \(\theta_1 \le^{hh} \theta_2\) in \(\D (X, \le)\), and \(f : (X, \le) \to (Y, \le)\) where \((Y, \le, \beta)\) is a convex algebra.
	We proceed by induction on \(|\supp(\theta_1)|\). 
	If \(\supp(\theta_1) = \emptyset\), then \(\theta_1 = 0 \le^{hh} \theta_2\). 
	
	For the induction step, let \(z\) be any maximal element of \(\supp(\theta_1)\) and define 
	\[
		\theta_i'(x) = \begin{cases}
			\theta_i(z) - \theta_1(z) & x = z \\
			\theta_i(x) &\text{o.w.}
		\end{cases}
		\tag{i = 1,2}
	\]
	For any upper set \(U\) containing \(z\), \(\theta_i(U) = \theta_1(z) + \theta_i'(U)\), so by assumption \(\theta_1(z) + \theta_1'(U) \le \theta_1(z) + \theta_2'(U)\).
	Since \((\mathbb R, +)\) is cancellative, \(\theta_1'(U) \le \theta_2'(U)\).
	For any upper set \(U\) not containing \(z\), \(\theta_i(U) = \theta_i'(U)\), so by again, \(\theta_1'(U) \le \theta_2'(U)\).
	It follows that \(\theta_1' \le^{hh} \theta_2'\), so by induction \(f^\beta(\theta_1') \le f^\beta(\theta_2')\).
	We therefore have \[
		f^\beta(\theta_1) 
		= f^\beta(\theta_1(z)\cdot \delta_z) + f^\beta(\theta_1') 
		\le f^\beta(\theta_1(z)\cdot \delta_z) + f^\beta(\theta_2')
		\le f^\beta(\theta_1(z)\cdot \delta_z) + f^\beta(\theta_2')
		= f^\beta(\theta_2)
	\]
	This shows that \(f^\beta\) is monotone. 
\end{proof}
\end{document}